\documentclass[
  10pt,
  letterpaper,
  twocolumn,
  amsmath,
  amssymb,
  aps,
  pra,
  floatfix,
  superscriptaddress,
  longbibliography
]{revtex4-2}

\usepackage{amsmath, amssymb, amsthm}
\usepackage{physics}
\usepackage{graphicx}
\usepackage{hyperref}
\usepackage{tikz}
\usetikzlibrary{quantikz2}
\usetikzlibrary{arrows.meta,positioning,calc}
\usepackage{iftex}
\ifPDFTeX
  \usepackage[utf8]{inputenc}
  \usepackage[T1]{fontenc}
\fi
\usepackage{dcolumn}
\usepackage{bm}
\usepackage{braket}
\usepackage{microtype}
\usepackage[ruled, vlined]{algorithm2e}

\allowdisplaybreaks
\hypersetup{
  colorlinks=true,
  linkcolor=blue,
  citecolor=blue,
  urlcolor=blue,
  breaklinks=true
}
\newtheorem{theorem}{Theorem}

\newtheorem{definition}{Definition}
\newtheorem{proposition}[theorem]{Proposition}
\newtheorem{corollary}[theorem]{Corollary}
\newtheorem{remark}[theorem]{Remark}

\begin{document}
    \title{Congestion-free routing on quantum chips} 
    \author{Mithilesh Kumar}
    \thanks{\href{mailto:mithilesh.kumar@krea.edu.in}{\nolinkurl{mithilesh.kumar@krea.edu.in}}}
    \affiliation{Krea University, Sri City, India}
    
    \author{Yusuf Tahir}
    \thanks{\href{mailto:yusuf_tahir.sias25@krea.ac.in}{\nolinkurl{yusuf_tahir.sias25@krea.ac.in}}}
    \affiliation{Krea University, Sri City, India}

    \author{Varun Daiya}
    \thanks{\href{mailto:varun_daiya.sias25@krea.ac.in}{\nolinkurl{varun_daiya.sias25@krea.ac.in}}}
    \affiliation{Krea University, Sri City, India}

    \author{Sanjana Mattaparthi}
    \thanks{\href{mailto:sanjana_mattaparthi.sias22@krea.ac.in}{\nolinkurl{sanjana_mattaparthi.sias22@krea.ac.in}}}
    \affiliation{Krea University, Sri City, India}

    \author{Aarav Shaurya}
    \thanks{\href{mailto:aarav_shaurya.sias22@krea.ac.in}{\nolinkurl{aarav_shaurya.sias22@krea.ac.in}}}
    \affiliation{Krea University, Sri City, India}
    
\date{\today}

\begin{abstract}
Limited connectivity makes nonlocal quantum gates expensive on near-neighbor hardware, where compilation typically relies on SWAP transport, inheriting both depth overhead and path congestion. We present a swap-free routing framework in which higher levels of a qudit act as orthogonal spectral buses that transport control information without moving the computational state. We show that exact congestion relief in nearest-neighbor architectures requires local Hilbert-space expansion. In this model, a nonlocal operation over a path of length $L$ requires $2L+1$ logical routing primitives, compared to the $3L$ baseline. Overlapping routes remain distinguishable through bus labels encoded in the same physical qudits. This routing algebra extends to Boolean fan-in at a common target: multiple controls arriving on distinct buses trigger a local unitary based on an arbitrary Boolean function of bus digits, yielding multi-control operations of depth $2L + D_g + O(1)$ for fan-in size $K$ and target-synthesis cost $D_g$. We prove decodability, reversibility, and correctness for CNOT and Boolean fan-in, along with a state-count lower bound $d \geq 2^{K+1}$ for exact overlap routing. Cirq simulations confirm single-control correctness and zero crosstalk. Compiler-level benchmarks on QFT, QAOA, and mirror-interaction circuits verify the predicted congestion law and transport reduction. Noisy QuTiP simulations show that the architectural advantage depends on higher-level coherence and speed. These results identify spectral qudit routing as a congestion-relief architecture that separates nonlocal control delivery from local target-side aggregation, providing a minimal mechanism for overcoming qubit routing limitations.
\end{abstract}

\maketitle

\section{Introduction}
Quantum algorithms are usually written as though any pair of logical qubits can interact directly. Hardware is much less generous. In most current architectures, entangling gates are restricted to a sparse connectivity graph, so nonlocal interactions must be compiled into sequences of local operations. The standard remedy is SWAP transport: move the relevant state across the chip, apply the desired gate, and, when layout matters, undo the transport. 

As identified in seminal works by Li et al. (2019)\cite{li2019tackling} and Cowtan et al. (2019)\cite{cowtan2019qubit}, moving quantum states across a chip to facilitate gates introduces a significant depth overhead that scales with the distance between qubits. Furthermore, as processors scale, this movement creates severe path congestion, where overlapping routes must be serialized, leading to increased decoherence and error rates \cite{murali2019noise}.

In parallel, researchers explored moving beyond the binary qubit. Lanyon et al. (2009)\cite{lanyon2009simplifying} pioneered the use of higher-dimensional systems, or qudits, to simplify quantum logic, demonstrating that extra levels in trapped ions can replace multiple gates. This ``qudit'' approach was further refined by Gokhale et al. (2019)\cite{gokhale2019asymptotic} and summarized in the review by Wang et al. (2020)\cite{quditreview}, which frames high-dimensional computing as a method for optimizing quantum circuits.

One response to routing overhead is to replace state transport by signal propagation. Swapless constructions forward control information along a path and uncompute intermediate modifications, thereby reducing the need to shuttle the payload state itself \cite{swapless2024}. However, a fundamental bottleneck remains: when multiple nonlocal operations require the same spatial region, a qubit-only architecture cannot support their simultaneous passage without additional ancillas or temporal serialization.

This observation motivates a different use of multilevel quantum hardware. Rather than treating higher levels as passive encoding space, we use them as actively assigned routing resources. The central idea is to encode routed controls into orthogonal spectral buses of a single physical qudit. A node can then preserve its resident logical state in a two-level subspace while simultaneously carrying one or more routed bus labels in higher levels. In this way, congestion is shifted from space into local spectral structure: routing conflicts are resolved not by spatial separation, but by allocating distinct, locally addressable channels within the same physical system.

The present paper develops this idea as an exact nearest-neighbor routing model. A routed control is lifted onto a chosen bus, propagated along a path by reversible local updates, used to trigger a same-qudit target operation, and then removed by inverse cleanup. For a path of length $L$, this yields a routed schedule of $2L+1$ logical routing primitives. For comparison, the transport component of a naive SWAP-based implementation requires $L$ SWAPs, or $3L$ CNOT-equivalent layers. Beyond this depth reduction, the bus encoding ensures that intersecting routes remain algebraically separable, so congestion is addressed at the level of routing primitives as well as circuit depth. The graph-coloring structure underlying this behavior is confirmed by explicit compiled workloads studied below.

Our approach addresses the routing bottleneck by changing what is being propagated. Instead of moving computational states or relying on a single shared signal, control information is carried in distinct local channels within the same physical system. These channels are orthogonal subspaces (``spectral buses'') of a qudit, allowing multiple routed controls to pass through a node simultaneously without interference, provided sufficient local dimension is available. This makes the architectural claim explicit: spectral separation supplies concurrency without spatial duplication. We formalize this intuition by showing that, within the exact nearest-neighbor unitary model considered here, supporting overlapping routed controls while preserving local data requires an increase in local Hilbert-space dimension.

For a compiled layer of routed operations, the relevant complexity is governed not only by path length but also by the chromatic number of the induced route-conflict graph. In the exact-overlap model studied here, the limiting resource at a hotspot is local label capacity: parallel routes can coexist only if the shared node provides enough distinguishable channels. This yields a qualitative separation between qubit and qudit routing. Even the smallest hotspot with two overlapping routes cannot be realized in a one-qubit-per-node architecture that preserves resident data, yet it fits within a single logical round once a qudit can host two buses.

The same mechanism naturally extends to \emph{routed fan-in}. Once multiple controls have been delivered to a common target on distinct buses, the remaining task reduces to a local Boolean-controlled operation on the lifted target subspace. This separates distributed delivery from local aggregation and provides a useful structural decomposition of nonlocal operations.

\paragraph{Positioning}
Conceptually, the proposal differs from two common uses of higher-dimensional hardware. First, higher levels are not treated as passive encoding space but as actively allocated routing resources. Second, unlike single-signal swapless propagation schemes, overlapping routes remain distinct through explicit bus labeling that is locally decodable. The claim is therefore architectural: spectral separation supplies concurrency without spatial duplication.

\paragraph{Results}
At the level of logical routing primitives, the framework reduces the transport cost of a nonlocal controlled operation over a path of length $L$ from $3L$ layers (SWAP-based routing) to $2L+1$ logical routing steps. More importantly, it changes how concurrency scales: instead of forcing serialization at congested regions, it allows multiple routes to coexist at a node via distinct local channels. This leads to a congestion model governed by local channel capacity rather than graph distance alone. The formalism extends naturally to routed multi-control operations, separating nonlocal transport from local aggregation.

\paragraph{Organization}
The paper is organized as follows. Section~II introduces the spectral-routing formalism, including the same-qudit interpretation, routing algebra, and primitive operations. Section~III develops the routed nonlocal CNOT protocol and its congestion properties. Section~IV presents routed Boolean fan-in and numerical validation. Section~V discusses physical implementation, compilation, and limitations. Detailed proofs and extensions appear in the appendices.

\section{Spectral Routing Formalism}\label{sec:formalism}

We now introduce the idealized model used throughout the paper. The formalism is defined at the level of logical routing primitives: it specifies how routed information is encoded, propagated, applied locally at a target, and removed, while leaving hardware-level pulse decompositions to a later compilation stage.

\subsection{Qudit Architecture and State Space}

Let the quantum processor be represented by an undirected graph $G = (V, E)$, where each vertex $v \in V$ corresponds to a physical $d$-level quantum system (qudit). The Hilbert space at each node is $\mathcal{H}_d$, spanned by the orthonormal basis $\{|0\rangle, |1\rangle, \dots, |d-1\rangle\}$. We partition this space into a computational subspace and a routing subspace. The computational subspace is $\mathcal{H}_C=\text{span}\{|0\rangle, |1\rangle\}$, and logical quantum information is encoded exclusively in these two levels. The routing subspace is $\mathcal{H}_R=\text{span}\{|2\rangle, \dots, |d-1\rangle\}$, and these higher levels provide the auxiliary channels used for routing across the graph.

\subsection{Parallel Bus Abstraction}

We define a family of routing buses indexed by $k \in \{1,2,\dots, K\}$. Each bus corresponds to an orthogonal excitation offset in the qudit basis, and to ensure non-interference between buses we assign $\Delta_k = 2^k$. When bus $k$ carries a logical value $x \in \{0,1\}$, the physical qudit occupies the pair of levels $|\Delta_k\rangle$ and $|\Delta_k + 1\rangle$.
In particular, the first routing bus corresponds to the levels $|2\rangle$ and $|3\rangle$, matching the intended shift of the computational states $|0\rangle$ and $|1\rangle$ into the routing subspace.

If several buses are active simultaneously, the resulting qudit value takes the form
\[
    x_0 + \sum_{k=1}^{K} x_k \cdot\Delta_k \qquad x_0, x_k \in \{0,1\}
\]
Here $x_0$ is the logical bit stored in the least-significant position, while the higher binary digits encode the active buses. Because binary decompositions are unique, this aggregate value can be decoded without ambiguity. The largest simultaneous routing value is $1 + \sum_{k=1}^{K} 2^k = 2^{K+1} - 1$, so supporting $K$ parallel buses requires $d \geq 2^{K+1}$. Equivalently, $K \leq \lfloor \log_2 d \rfloor - 1$.
Thus, under this binary bus encoding, the available routing parallelism grows only logarithmically with the accessible qudit dimension $d$.

\paragraph{No-Aliasing Assumption}
All results in this paper assume $d > 2^{K+1} - 1$, equivalently $d \geq 2^{K+1}$ for integer $d$. This ensures that the aggregate routing value $x_0 + \sum_{k=1}^{K} x_k \cdot\Delta_k$ never wraps modulo $d$, so the extraction maps $f_k(r)$ remain unambiguous and no modular aliasing occurs.

Each physical qudit therefore stores two kinds of information at once: its own logical bit in the least-significant position and, when present, routed control signals in the higher binary positions. In this sense, the same local device acts simultaneously as data storage and as a multiplexed routing node.

The choice $\Delta_k = 2^k$ is deliberate. Binary offsets guarantee unique decoding when several routes overlap on the same physical qudit, so each active bus can be recovered by a simple bit-extraction rule. Alternative encodings, such as unary or more general modular labelings, could trade larger bus count for more complicated decoding and composition rules. The present choice therefore prioritizes exact separability and algebraic composability over maximizing the raw number of buses supported at fixed $d$.

If one forbids same-node overlap entirely, then a unary-style labeling can assign a dedicated level pair to each route and thereby support more non-overlapping bus labels at fixed $d$. That is, however, a different routing model: once several buses must coexist on the same physical qudit, the encoding must distinguish every subset of active buses together with the local logical bit. In that exact-overlap setting, unary labeling does not avoid the underlying state-count requirement; it merely shifts the burden into additional overlap states and more cumbersome decoding rules. By contrast, the binary encoding used here yields a particularly simple and composable algebra for routing, target control, and cleanup while directly representing simultaneous overlaps on a single qudit.

This encoding allows multiple routing processes to coexist via distinct level shifts, preventing collisions even when paths overlap.

\subsection{Routing Algebra}

The binary bus encoding above can be packaged into a compact algebraic object that isolates the essential structure used by the routing protocol.

\begin{definition}[Routing Algebra]
Fix offsets $\Delta_k = 2^k$ for $k=1,\dots,K$, and define the admissible routing-value set
\[
\mathcal{R}_K=\left\{x_0 + \sum_{k=1}^{K} x_k \cdot\Delta_k\ \middle|\
x_0, x_k \in \{0,1\}
\right\}
\]
Together with the extraction maps
\[
\begin{aligned}
f_0(r) &= r \bmod 2,\\
f_k(r) &= \left\lfloor \frac{r}{2^k} \right\rfloor \bmod 2
\qquad (k=1,\dots,K)
\end{aligned}
\]
we call the pair \(\bigl(\mathcal{R}_K,\{f_k\}_{k=0}^{K}\bigr)\) the routing algebra.
\end{definition}

\begin{proposition}[Uniqueness of Decoding]
Under the no-aliasing assumption $d > 2^{K+1}-1$, the map
\[
(x_0,x_1,\dots,x_K)\longmapsto x_0 + \sum_{k=1}^{K} x_k\cdot\Delta_k
\]
from $\{0,1\}^{K+1}$ into $\mathcal{R}_K$ is injective. Equivalently, every $r \in \mathcal{R}_K$ admits a unique decomposition
\[
r = x_0 + \sum_{k=1}^{K} x_k \cdot\Delta_k,
\qquad x_0,x_k \in \{0,1\}
\]
and the extraction maps recover the coefficients exactly, $f_k(r)=x_k$ for $k=0,1,\dots,K$.
\end{proposition}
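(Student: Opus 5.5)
The plan is to observe that, since $\Delta_k = 2^k$, the integer $r = x_0 + \sum_{k=1}^{K} x_k 2^k$ is exactly the number whose base-2 expansion has digits $(x_0,x_1,\dots,x_K)$. Injectivity and exact recovery then both follow from the extraction maps $f_k$. So I would first prove the recovery identity $f_k(r) = x_k$ for every $k=0,\dots,K$, and then read off injectivity as a corollary: if two tuples map to the same $r$, applying each $f_k$ returns the same coefficient, so the tuples coincide. Uniqueness of the decomposition of any $r\in\mathcal{R}_K$ is just this injectivity restated, since $\mathcal{R}_K$ is by definition the image of the map.

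For the recovery identity, fix $k$ and split the sum at position $k$:
\[
r = \sum_{j<k} x_j 2^j + x_k 2^k + 2^{k+1}\sum_{j>k} x_j 2^{j-k-1},
\]
with the convention $2^0$ for $x_0$. The key bound is $0 \le \sum_{j<k} x_j 2^j \le 2^k - 1 < 2^k$, a geometric-series estimate using $x_j\in\{0,1\}$. Dividing by $2^k$ and taking the floor therefore kills the low-order part and gives $\lfloor r/2^k\rfloor = x_k + 2M$ for an integer $M\ge 0$. Reducing mod $2$ yields $x_k$. The case $k=0$ is the special case with an empty low-order sum, so $f_0(r) = r \bmod 2 = x_0$.

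Finally, I would tie in the no-aliasing hypothesis. The maximum value of $r$ is $2^{K+1}-1 < d$, so every $r\in\mathcal{R}_K$ is a genuine level label in $\{0,\dots,d-1\}$ and never needs reduction mod $d$. The arithmetic above is therefore performed over the integers, exactly as the $f_k$ assume. Without this hypothesis, distinct tuples could alias to the same physical level after wrapping.

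There is no real obstacle here. The only step needing care is the low-order bound $\sum_{j<k}x_j2^j<2^k$, together with making explicit that the floor and mod operations act on the unwrapped integer $r$. The no-aliasing assumption is precisely what justifies the latter.
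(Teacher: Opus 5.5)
Your proposal is correct and follows the paper's argument: $r$ is the binary expansion of $(x_0,\dots,x_K)$, and the no-aliasing bound $2^{K+1}-1<d$ rules out wrap-around mod $d$. You go further than the paper by actually proving the digit-extraction identity $f_k(r)=x_k$ (via the bound $\sum_{j<k}x_j2^j<2^k$) and deducing injectivity from it, where the paper simply cites uniqueness of binary expansions.
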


\begin{proof}
Because $\Delta_k = 2^k$, each element of $\mathcal{R}_K$ is precisely the binary expansion of the coefficient tuple $(x_0,\dots,x_K)$. Binary expansions are unique, and the no-aliasing assumption ensures that no two such integers are identified modulo $d$. The formulas for $f_k$ are exactly the corresponding binary-digit extraction rules.
\end{proof}

\begin{proposition}[Composability of Bus Labels]
Let
\[
    r = x_0 + \sum_{k=1}^{K} x_k \cdot\Delta_k \in \mathcal{R}_K
\]
and let $B \subseteq \{1,\dots,K\}$ be a set of buses such that $x_k=0$ for every $k\in B$. Then
\[
    T_B(r) := r + \sum_{k\in B}\Delta_k
\]
also lies in $\mathcal{R}_K$, and its extracted digits satisfy
\[
f_j(T_B(r)) =
\begin{cases}
1, & j\in B,\\[4pt]
f_j(r), & j\notin B
\end{cases}
\]
Thus disjoint bus labels compose by bitwise union without disturbing the previously active buses.
\end{proposition}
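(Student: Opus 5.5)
The plan is to rewrite $T_B(r)$ directly as a bus-label sum, so that membership in $\mathcal{R}_K$ can be read off the definition and the digits follow from the Uniqueness of Decoding proposition. Write $r = x_0 + \sum_{k=1}^K x_k\Delta_k$ with $x_k = 0$ for all $k\in B$. Define the new coefficient tuple $y$ by $y_0 = x_0$, $y_k = 1$ for $k\in B$, and $y_k = x_k$ for $k\notin B$, $k\ge 1$. Since $x_k = 0$ on $B$, we have $y_k = x_k + \mathbf{1}[k\in B]$ coordinatewise, and every $y_k$ lies in $\{0,1\}$. This is the only place the hypothesis is used: it guarantees that adding $\Delta_k$ produces no carry.

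Next, regroup the sum:
\[
T_B(r) = x_0 + \sum_{k=1}^K x_k\Delta_k + \sum_{k\in B}\Delta_k = y_0 + \sum_{k=1}^K y_k\Delta_k .
\]
This is a linear rearrangement of finitely many integers. Because each $y_k\in\{0,1\}$, the right-hand side has exactly the form required by the definition of $\mathcal{R}_K$, so $T_B(r)\in\mathcal{R}_K$. As a sanity check against the no-aliasing assumption, $T_B(r)\le 2^{K+1}-1 < d$, so $T_B(r)$ is a genuine basis label and no reduction modulo $d$ occurs.

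Finally, apply the Uniqueness of Decoding proposition to $T_B(r)$. The decomposition with coefficients $y$ is the unique one, and the extraction maps return $f_j(T_B(r)) = y_j$. This equals $1$ for $j\in B$, and equals $x_j$ for $j\notin B$. For $j\notin B$ we also have $x_j = f_j(r)$, by the same proposition applied to $r$. The index $j=0$ is never in $B$, so the least-significant logical bit is included in the second case automatically.

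There is no real obstacle here. The one point to state explicitly is why disjointness is necessary. If some $k\in B$ had $x_k = 1$, then $y_k = 2$ and a binary carry would propagate into digit $k+1$. That would change $f_{k+1}$ and could exceed $2^{K+1}-1$. Noting this justifies calling the operation a bitwise union (an OR on disjoint supports) rather than general addition.
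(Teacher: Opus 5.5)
Your proof is correct and takes the same route as the paper. Because $x_k=0$ on $B$, the new coefficient tuple stays binary-valued with no carries, so $T_B(r)\in\mathcal{R}_K$ and the digit identities follow from uniqueness of binary decoding; you simply spell out the regrouping and the appeal to the Uniqueness of Decoding proposition that the paper's short argument leaves implicit.
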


\begin{proof}
If $x_k=0$ for every $k\in B$, then adding $\sum_{k\in B}\Delta_k$ simply changes those binary digits from $0$ to $1$ and leaves all other digits unchanged. Since the resulting coefficient tuple is again binary-valued, the new integer remains in $\mathcal{R}_K$, and the stated extraction identities follow immediately.
\end{proof}

\begin{proposition}[Reversibility of Routing Updates]
Suppose
\[
r = x_0 + \sum_{k=1}^{K} x_k \cdot\Delta_k \in \mathcal{R}_K
\]
and let $B \subseteq \{1,\dots,K\}$ satisfy $x_k=1$ for every $k\in B$. Then
\[
T_B^{-1}(r) := r - \sum_{k\in B}\Delta_k
\]
also lies in $\mathcal{R}_K$, and
\[
T_B^{-1}(T_B(r')) = r'
\]
for every $r' \in \mathcal{R}_K$ with $f_k(r')=0$ for all $k\in B$. In particular, controlled additions and controlled subtractions of routing offsets are inverse permutations on the admissible routing-value set.
\end{proposition}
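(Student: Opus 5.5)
The proof mirrors the argument for the Composability of Bus Labels proposition, run in reverse, and then combines the two to get the inverse identity and the permutation statement. We use the coordinates supplied by the Uniqueness of Decoding proposition: every $r\in\mathcal{R}_K$ corresponds to a unique tuple $(x_0,\dots,x_K)\in\{0,1\}^{K+1}$ with $f_k(r)=x_k$.

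\emph{Closure.} Assume $x_k=1$ for all $k\in B$. Then
\[
T_B^{-1}(r)=x_0+\sum_{k\notin B}x_k\Delta_k+\sum_{k\in B}(1-1)\Delta_k ,
\]
which has coefficient tuple $(x_0,x'_1,\dots,x'_K)$ with $x'_k=0$ for $k\in B$ and $x'_k=x_k$ otherwise. This tuple is still binary-valued, so $T_B^{-1}(r)\in\mathcal{R}_K$. It is also a non-negative integer below $2^{K+1}\le d$, so the no-aliasing assumption rules out any modular wrap. By the Uniqueness of Decoding proposition, $f_j(T_B^{-1}(r))=0$ for $j\in B$ and $f_j(T_B^{-1}(r))=f_j(r)$ otherwise.

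\emph{Inverse identity.} Take $r'\in\mathcal{R}_K$ with $f_k(r')=0$ for all $k\in B$. By the Composability of Bus Labels proposition, $T_B(r')\in\mathcal{R}_K$ and $f_k(T_B(r'))=1$ for $k\in B$. The hypothesis of the closure step therefore holds for $T_B(r')$, and $T_B^{-1}$ is well defined on it. As integers,
\[
T_B^{-1}(T_B(r'))=r'+\sum_{k\in B}\Delta_k-\sum_{k\in B}\Delta_k=r'.
\]
The symmetric computation gives $T_B(T_B^{-1}(r))=r$ for every $r$ with $f_k(r)=1$ on $B$.

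\emph{Permutation statement.} Let $S_0=\{r\in\mathcal{R}_K: f_k(r)=0\ \forall k\in B\}$ and $S_1=\{r\in\mathcal{R}_K: f_k(r)=1\ \forall k\in B\}$. Then $T_B:S_0\to S_1$ and $T_B^{-1}:S_1\to S_0$ are mutually inverse bijections.

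To read ``controlled additions and subtractions'' as inverse permutations of the whole of $\mathcal{R}_K$, define the controlled addition $P_B$ as follows: it acts as $T_B$ on $S_0$ (when the control fires), as $T_B^{-1}$ on $S_1$ (this is the natural reversible completion), and as the identity elsewhere. $P_B$ is then a bijection of $\mathcal{R}_K$, and the controlled subtraction is its inverse.

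This last step is the main obstacle, because the statement leaves the action on the partially occupied set $\mathcal{R}_K\setminus(S_0\cup S_1)$ unspecified. If $|B|=1$, then $S_0\cup S_1=\mathcal{R}_K$ and the problem disappears: $P_B$ simply flips bit $k$. For larger $B$, defining $P_B$ as the identity on the remaining states keeps it a permutation. An equivalent route is to write $P_B$ as the product of the commuting single-bit flips $P_{\{k\}}$ for $k\in B$, each of which is an involution. Because each $P_{\{k\}}$ is an involution, $P_B$ is an involution too, so it is its own inverse. The controlled subtraction therefore coincides with $P_B$, and the two are inverse permutations of $\mathcal{R}_K$.
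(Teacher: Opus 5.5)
Your closure and inverse-identity steps are correct and match the paper's argument. Subtracting $\sum_{k\in B}\Delta_k$ turns the digits in $B$ from $1$ to $0$ and leaves the others alone, and $T_B$ and $T_B^{-1}$ undo the same digit flips. Your use of Composability to check that $T_B(r')$ satisfies the closure hypothesis is a small piece of care the paper leaves implicit. The paper stops at this point. It reads the ``in particular'' clause as what you already prove in the first sentence of your permutation paragraph: $T_B\colon S_0\to S_1$ and $T_B^{-1}\colon S_1\to S_0$ are mutually inverse bijections. So your ``main obstacle'' is not an obstacle, and the global completion you build afterwards is unnecessary and has two problems.

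First, your two extensions are not equivalent when $|B|\ge 2$. Take $B=\{1,2\}$ and $r=2$ (bus $1$ on, bus $2$ off). Your identity-elsewhere $P_B$ fixes $r$, but $P_{\{1\}}P_{\{2\}}$ sends $2\mapsto 4$. Both are involutive permutations of $\mathcal{R}_K$ that agree with $T_B$ on $S_0$ and with $T_B^{-1}$ on $S_1$, but they are different maps.

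Second, and more important, neither is the controlled addition of the model. The paper's $CBL^{\Delta_k}$ and $BCP^{\Delta_k}$ add $\Delta_k$ modulo $d$. On a target that already carries bus $k$, this produces a carry rather than clearing bus $k$: for $K\ge 2$, $2+\Delta_1=4$ activates bus $2$. The paper stresses that $CBL$ is not self-inverse in general, which is exactly why it introduces $CBL^{-\Delta_k}$ and $BCP^{-\Delta_k}$ separately for cleanup. Your conclusion that the controlled subtraction ``coincides with $P_B$'' therefore holds only for an XOR-type operator you have substituted, not for the gates the proposition is about.

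If you want a global statement in the paper's own terms, no completion is needed. For each fixed control value, adding and subtracting the offset modulo $d$ are mutually inverse cyclic shifts of the full basis $\{0,\dots,d-1\}$. On the admissible inputs (the digits in $B$ all $0$, respectively all $1$), these shifts coincide with $T_B$ and $T_B^{-1}$, because no wrap-around occurs below $2^{K+1}\le d$. Drop your last paragraph or replace it with this remark.
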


\begin{proof}
Subtracting $\sum_{k\in B}\Delta_k$ from a value that already carries those buses changes the corresponding binary digits from $1$ to $0$ and leaves all other digits unchanged. Hence the result remains in $\mathcal{R}_K$. The inverse identity follows because the two updates undo the same set of digit flips.
\end{proof}

\begin{proposition}[State-Count Lower Bound for Exact Overlap Encoding]\label{prop:state-count-lower-bound}
Consider any exact same-qudit routing scheme that stores one local logical bit and $K$ independently active routing bits on a single $d$-level system, with the property that every tuple $(x_0,x_1,\dots,x_K) \in \{0,1\}^{K+1}$ is represented by a perfectly distinguishable physical basis state and that each routing bit can be recovered without ambiguity. Then necessarily $d \geq 2^{K+1}$. In particular, the binary routing algebra above saturates this lower bound.
\end{proposition}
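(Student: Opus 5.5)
The plan is a counting argument: the orthonormal basis of $\mathcal{H}_d$ must contain one state for each configuration, so $d \geq 2^{K+1}$. The binary routing algebra then shows the bound is attained.

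\emph{Step 1 (assignment of states).} Formalize the hypothesis as a map $\phi:\{0,1\}^{K+1}\to\mathcal{H}_d$, sending each tuple $x=(x_0,x_1,\dots,x_K)$ to a normalized state $\ket{\phi(x)}$. \emph{Perfectly distinguishable} means there is a measurement identifying $x$ with certainty. For pure states this is equivalent to $\braket{\phi(x)|\phi(x')}=0$ whenever $x\neq x'$. I would note this equivalence briefly: if two such states had nonzero overlap, no POVM could discriminate them with zero error, by the standard Helstrom/no-cloning style argument. Even without full distinguishability, the requirement that each routing bit be recovered without ambiguity, together with $x_0$ being the resident logical bit, forces the joint recovery of the whole tuple and hence pairwise orthogonality.

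\emph{Step 2 (dimension count).} The $2^{K+1}$ states $\{\ket{\phi(x)}\}$ are pairwise orthogonal and nonzero, so they are linearly independent in $\mathcal{H}_d$. Hence $2^{K+1}\leq \dim\mathcal{H}_d=d$. Equivalently, $K\leq \lfloor\log_2 d\rfloor-1$.

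\emph{Step 3 (saturation).} Take $d=2^{K+1}$ and $\phi(x)=\ket{x_0+\sum_k x_k\Delta_k}$. By the Uniqueness of Decoding proposition, this map is injective into $\mathcal{R}_K=\{0,\dots,2^{K+1}-1\}$, so the image is exactly the full computational basis. The extraction maps $f_k$ then recover every bit. Thus the no-aliasing assumption $d\geq 2^{K+1}$ is tight.

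The main obstacle is Step 1: making precise that \emph{perfectly distinguishable} and \emph{recoverable without ambiguity} translate into orthogonality. I would handle it by appealing to the fact that a measurement distinguishing states with certainty forces orthogonal supports. Since the statement already says the states are perfectly distinguishable basis states, this reduces to noting that distinct elements of an orthonormal basis are distinct vectors, and the count in Step 2 follows.
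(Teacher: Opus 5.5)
Your proposal is correct and follows the paper's own argument: distinct tuples must map to pairwise orthogonal, hence linearly independent, states, which forces $2^{K+1}\le d$, and the binary encoding fills all $2^{K+1}$ levels of $\mathcal{R}_K=\{0,\dots,2^{K+1}-1\}$ exactly. Your Step~1 only spells out the step that perfect distinguishability (or certain recovery of every coordinate, including $x_0$) implies orthogonality, which the paper states in one line.
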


\begin{proof}
There are $2^{K+1}$ possible assignments of the local bit together with the $K$ routing bits. Exact distinguishability and unambiguous recovery require distinct orthogonal states for distinct tuples, so the physical Hilbert space must contain at least $2^{K+1}$ basis states. The binary encoding used here realizes exactly one admissible routing value for each such tuple, and therefore attains this minimum.
\end{proof}

\begin{theorem}[Single-Qubit Impossibility for Exact Overlap Routing]\label{thm:single-qubit-impossibility}
Consider any exact same-node routing scheme that attempts to store, in one local physical system, both an arbitrary resident qubit state $\ket{\psi}\in\mathcal{H}_2$ and $K$ independently recoverable routed control bits. If the storage is required to be reversible and uses no additional colocated ancilla at that node, then the local physical Hilbert space must have dimension at least
\(2^{K+1}\). In particular, a single qubit cannot simultaneously encode one arbitrary resident state and even one independently recoverable routed control bit; therefore it cannot encode two independently recoverable routed control bits.
\end{theorem}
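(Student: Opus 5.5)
The plan is to model the storage as an isometry and count dimensions. Reversibility with no colocated ancilla means there is a fixed isometry (in fact an injective linear map that preserves inner products)
\[
V:\ \mathcal{H}_2\otimes(\mathbb{C}^2)^{\otimes K}\longrightarrow \mathcal{H}_{\mathrm{loc}},
\]
which takes the resident state $\ket{\psi}$ together with the routed bits $(x_1,\dots,x_K)$ to the state of the single local system. The domain is treated as a tensor product for two reasons. First, $\ket{\psi}$ is arbitrary, so the scheme has to act linearly on every superposition of $\ket{0}$ and $\ket{1}$. Second, the routed bits are independently recoverable, so for each fixed $\ket{\psi}$ the $2^K$ bit strings must map to mutually orthogonal states. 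This is the same exact-distinguishability requirement used in Proposition~\ref{prop:state-count-lower-bound}.

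The key steps are as follows.
\begin{enumerate}
\item Fix the computational basis $\ket{x_0}$ of $\mathcal{H}_2$. Show that the $2^{K+1}$ states $V(\ket{x_0}\otimes\ket{x_1\cdots x_K})$ are pairwise orthogonal.
\begin{itemize}
\item For distinct routed strings, orthogonality follows from unambiguous recovery of the routed bits.
\item For the same routed string and $x_0\neq x_0'$, orthogonality follows because the resident qubit must be reversibly recoverable. Reversibility means no information is discarded, so $V$ must preserve inner products, and $\braket{0|1}=0$ is therefore carried over.
\end{itemize}
\item Conclude that $\dim\mathcal{H}_{\mathrm{loc}}\ge 2^{K+1}$. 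Equivalently, $V$ is injective on a $2^{K+1}$-dimensional space.
\item Specialize to the stated cases. For $K=1$ the bound gives $\dim\ge 4>2$, so a single qubit fails. For $K=2$ the bound gives $\dim\ge 8$, and in any case the $K=2$ case inherits the failure from $K=1$, because discarding one of the routed bits would yield a $K=1$ scheme.
\end{enumerate}

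The main obstacle is the first step: justifying that a scheme which is only \emph{reversible} must be inner-product preserving on the whole tensor-product domain. Without that, someone could object that superpositions of $\ket{\psi}$ need not be treated linearly.

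The first thing to try is the following argument. Reversibility means there is a decoding map that recovers both $\ket{\psi}$ and the bits exactly. In the closed-system, ancilla-free setting, such a map is a unitary on the local space restricted to the image of the encoding, so the encoding is itself an isometry. Any isometry preserves dimension, which gives the bound.

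If one worries about nonlinear encodings, a no-cloning-style argument gives a fallback. Recovering an arbitrary $\ket{\psi}$ for each value of the routed bits forces $2^K$ orthogonal copies of a two-dimensional subspace, one for each bit string, which again totals $2^{K+1}$ dimensions. Finally, I would note that the construction of Proposition~\ref{prop:state-count-lower-bound} shows the bound is tight.
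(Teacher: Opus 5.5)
Your proposal is correct and follows essentially the same route as the paper: you model reversible, ancilla-free storage as an isometry $V:\mathcal{H}_2\otimes\mathbb{C}^{2^K}\to\mathcal{H}_d$, conclude $d\geq 2^{K+1}$ by counting dimensions, and note that $d=2$ fails for every $K\geq 1$. Your extra steps — the case analysis showing the $2^{K+1}$ basis images are pairwise orthogonal, and the fallback via $2^K$ mutually orthogonal two-dimensional ranges — only spell out what the paper takes as part of the definition of exact reversible storage.
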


\begin{proof}
Exact reversible storage of an arbitrary resident qubit together with $K$ classical routed bits defines an isometric encoding
\[
V:\mathcal{H}_2\otimes \mathbb{C}^{2^K}\longrightarrow \mathcal{H}_d
\]
on the encoded local subspace. The domain of this isometry has dimension
\[
\dim(\mathcal{H}_2\otimes \mathbb{C}^{2^K}) = 2^{K+1}
\]
An isometry cannot map a larger-dimensional space into a smaller-dimensional one, so necessarily
\[
    d \geq 2^{K+1}
\]
For a single qubit, $d=2$, which violates this bound for every $K\geq 1$.
\end{proof}

\begin{remark}
Theorem~\ref{thm:single-qubit-impossibility} is stronger than the bit-level state-count bound above. It does not merely say that a qubit cannot store several classical routing labels; it says that a qubit cannot preserve an \emph{arbitrary quantum resident state} while also carrying even one exact independently recoverable routed control label in the same local system.
\end{remark}

\subsection{Same-Qudit Interpretation}

The computational bit and the routing labels are stored in the same physical qudit, not in separate physical registers. A basis value of the form $r = x_0 + \sum_{k=1}^{K} x_k \cdot\Delta_k$, with $x_0, x_k \in \{0,1\}$, simultaneously records the local logical value $x_0$ in the least-significant position and the active routing buses $x_k$ in the higher binary digits. For a fixed bus pattern $s = \sum_{k=1}^{K} x_k \cdot\Delta_k$, the pair of levels $|s\rangle$ and $|s+1\rangle$ forms a two-level block carrying the local logical value $x_0$.

Thus states such as $|2\rangle,|3\rangle$ or $|6\rangle,|7\rangle$ should be interpreted as the same qudit viewed inside different lifted routing blocks. This is the sense in which the protocol uses ``the same qudit at higher levels'' rather than introducing separate routing wires.

\subsection{Allowed Operations}

We assume five ideal operations. First, a local computational gate is any unitary $U \in SU(2)$ acting on $\mathcal{H}_C$. For a state $|\psi\rangle = \alpha|0\rangle + \beta|1\rangle$, the action is simply $U|\psi\rangle$, while the routing subspace $\mathcal{H}_R$ is left unchanged. More generally, once a qudit has been lifted to a block $\text{span}\{|s\rangle,|s+1\rangle\}$, the same $2 \times 2$ action can be embedded on that block while preserving the higher-level bus label $s$.
Second, for every edge $(i,j) \in E$, we assume an ordinary nearest-neighbor CNOT gate,
\[
    CX_{(i,j)} |x\rangle_i |y\rangle_j = |x\rangle_i |y \oplus x\rangle_j
\]
for $x,y \in \{0,1\}$. Third, we use the cyclic bus-shift operator
\begin{equation}\label{eqn:busshift}
    S_{\Delta} |x\rangle = |(x + \Delta) \bmod d\rangle
\end{equation}
Fourth, for $(i,j) \in E$, the controlled-bus-load operator writes the selected routing offset onto the target when the source computational bit equals $1$:
\begin{equation}\label{eqn:busload}
    CBL_{(i,j)}^{\Delta_k} |x\rangle_i |y\rangle_j = |x\rangle_i |(y + x \cdot \Delta_k) \bmod d\rangle_j
\end{equation}
for $x \in \{0,1\}$. Fifth, after the initial lift, propagation along the rest of the path is controlled not by a bare computational bit but by whether the previous qudit already carries bus $k$. For $(i,j) \in E$, we therefore define
\begin{equation}\label{eqn:buspropagate}
    BCP_{(i,j)}^{\Delta_k} |r\rangle_i |y\rangle_j=|r\rangle_i |(y + f_k(r)\cdot \Delta_k) \bmod d\rangle_j
\end{equation}
where
\[
    f_k(r) = \left\lfloor \frac{r}{2^k} \right\rfloor \pmod 2
\]

\paragraph{Interpretation}

Mathematically, the propagation operator $BCP^{\Delta_k}$ functions as a generalized, bus-resolved CNOT gate. Whereas a standard computational CNOT applies $y \mapsto y + x \pmod 2$, this operator first isolates the relevant control variable through the modulo-extraction rule $f_k(r)$ and then applies the targeted shift $\Delta_k$ inside the higher-dimensional qudit space. The control action is therefore strictly conditional: the shift propagates only when the preceding node actively carries the bus-$k$ component. In this sense, the operator $CBL^{\Delta_k}$ performs the initial conditional lift from the original control qubit onto bus $k$, while $BCP^{\Delta_k}$ implements the source-faithful propagation rule discussed in the transcript.

% \paragraph{Remark}

% Both $CBL^{\Delta_k}$ and $BCP^{\Delta_k}$ are unitary operators, as each acts by a controlled permutation of the qudit basis. In particular, $BCP^{\Delta_k}$ does not introduce a new physical register; it simply inspects whether the relevant binary routing component is present in the source qudit.

\subsection{Routing-Controlled Target Operation and Phase Flips}

To complete the non-local operation, we define a unitary that acts directly on the same lifted qudit at the target node.

Let
\[
\mathcal{S}_K=\left\{\sum_{j=1}^{K} x_j \cdot\Delta_j\ \middle|\ x_j \in \{0,1\}\right\}
\]
denote the set of possible routing offsets. For each $s \in \mathcal{S}_K$, the pair $|s\rangle, |s+1\rangle$ forms the two-level block that stores the local logical bit.

To recover whether bus $k$ is active, we use $f_k(s) = \left\lfloor \frac{s}{2^k} \right\rfloor \pmod 2$.

\paragraph{Generalized Controlled Unitary}

Let $U$ be any single-qubit unitary. We define the same-qudit routing-controlled operation
\begin{equation}\label{eqn:controlledunitary}
    CU_R^{\Delta_k}(U)=\sum_{s \in \mathcal{S}_K}\sum_{a,b \in \{0,1\}}
\left(U^{f_k(s)}\right)_{ab}\dyad{s+a}{s+b} + I_{\mathcal{L}_K^\perp}
\end{equation}
where
\[
\mathcal{L}_K
=
\mathrm{span}\{|s\rangle,|s+1\rangle : s \in \mathcal{S}_K\}.
\]
On the admissible lifted routing subspace this is a direct sum of the relevant two-level blocks, while outside that subspace it acts trivially.
The routing-controlled NOT used in the non-local CNOT protocol is recovered as the special case $CXR^{\Delta_k} := CU_R^{\Delta_k}(X)$.
Here $X$ denotes the Pauli-$X$ (bit-flip) gate acting on the relevant two-level computational block.

\paragraph{Key Property}

For any value of the form
\[
    s + x_0,
    \qquad
    s = \sum_{j=1}^{K} x_j \cdot\Delta_j \in \mathcal{S}_K,
    \qquad x_0 \in \{0,1\}
\]
we have
\[
    CU_R^{\Delta_k}(U) |s + x_0\rangle=\sum_{b \in \{0,1\}}\left(U^{x_k}\right)_{b x_0}|s+b\rangle
\]
where $x_k = f_k(s)$. Thus the local logical bit is acted on by $U$ if and only if bus $k$ is active, while the higher-level routing offset $s$ is preserved.

In particular, for the non-local CNOT target gate, $CXR^{\Delta_k}|s + x_0\rangle = |s + (x_0 \oplus x_k)\rangle$.

\paragraph{Interpretation}

The operator $CU_R^{\Delta_k}(U)$ does not split the target into separate route and data registers. Instead, it views the same qudit as a collection of two-level blocks indexed by routing offsets and applies $U$ to the least-significant logical bit precisely on those blocks whose $k$-th bus bit is active. Choosing $U = Z$ yields a swap-free non-local phase flip, while more general phase rotations arise from other diagonal choices of $U$.

\paragraph{Remark}

Since the control action depends only on the extracted binary digit $f_k(s)$, overlapping buses remain exactly separable. This is the algebraic reason why a single qudit can simultaneously preserve its own logical bit and carry several routing labels without mixing them. Because the operator is a direct sum of one-qubit blocks together with the identity on $\mathcal{L}_K^\perp$, it is unitary on the full local Hilbert space.

\subsection{Idealized Tensor-Factor Extension}

The same-qudit description above is the physical picture motivated by the transcript and board discussion: a single qudit is lifted into higher levels while still carrying its local logical value in the least-significant position. For algebraic manipulations, however, it is often convenient to rewrite that same structure as an idealized tensor-factor model. This is an abstract relabeling of the same lifted blocks, not a claim that the hardware contains separate routing and computational registers.

Define the routing-label space
\[
\mathcal{H}_B
=
\text{span}\{\, |s\rangle_B : s \in \mathcal{S}_K \,\}
\]
where $\mathcal{S}_K$ is the set of admissible routing offsets introduced above, and let $\mathcal{H}_2 = \text{span}\{|0\rangle_C, |1\rangle_C\}$ denote an abstract two-level logical space. The lifted same-qudit block decomposition induces a canonical identification
\[
\Phi : \mathcal{H}_d \supset \text{span}\{|s\rangle, |s+1\rangle : s \in \mathcal{S}_K\}
\longrightarrow
\mathcal{H}_B \otimes \mathcal{H}_2
\]
given on basis states by
\[
\Phi(|s + x_0\rangle) = |s\rangle_B \otimes |x_0\rangle_C,
\qquad s \in \mathcal{S}_K,\ x_0 \in \{0,1\}
\]

Under this identification, the routing offset is treated as an abstract routing label and the least-significant bit is treated as an abstract logical qubit. The physical same-qudit state is therefore recovered by $|s + x_0\rangle = \Phi^{-1} \left( |s\rangle_B \otimes |x_0\rangle_C \right)$.

\paragraph{Projector Form}

For each bus $k$, define the projectors on the routing-label space
\[
\Pi_k^{(1)} = \sum_{s \in \mathcal{S}_K} f_k(s)\dyad{s}_B,
\qquad
\Pi_k^{(0)} = I_{\mathcal{H}_B} - \Pi_k^{(1)}
\]
These projectors isolate whether bus $k$ is active while ignoring the local logical value carried by the same physical qudit.

\paragraph{Tensor-Factor Controlled Unitary}

Let $U$ be any single-qubit unitary on $\mathcal{H}_2$. The routing-controlled operation can then be written in the abstract tensor-factor form
\[
    \widetilde{CU}_R^{\Delta_k}(U)=\Pi_k^{(0)} \otimes I_{\mathcal{H}_2}+\Pi_k^{(1)} \otimes U=\sum_{s \in \mathcal{S}_K} \dyad{s}_B \otimes U^{f_k(s)}
\]
This is precisely the higher-level projector formalism that treats routing labels and logical data as separate algebraic factors.

\paragraph{Equivalence with the Same-Qudit Picture}

The same-qudit operator defined earlier and the tensor-factor operator above are equivalent descriptions of the same action on the lifted routing subspace:
\[
    CU_R^{\Delta_k}(U)|_{\mathcal{L}_K}=\Phi^{-1}\,\widetilde{CU}_R^{\Delta_k}(U)\,\Phi
\]
Equivalently, for every lifted basis state,
\[
\widetilde{CU}_R^{\Delta_k}(U)\left(|s\rangle_B \otimes |x_0\rangle_C\right)=
|s\rangle_B \otimes U^{f_k(s)}|x_0\rangle_C
\]
which maps back under $\Phi^{-1}$ to the same transformed qudit level in the physical same-qudit model. Outside $\mathcal{L}_K$, the physical same-qudit operator acts as the identity by definition.

\paragraph{Abstract Propagation View}

The same correspondence lets one rewrite the propagation stage in the notation of abstract routing labels. In the physical picture, a lifted node on bus $k$ has the form $|w_i + x\cdot\Delta_k\rangle$. Under $\Phi$, this becomes \(|x\cdot\Delta_k\rangle_B \otimes |w_i\rangle_C\).
Hence the source-faithful rule ``if the previous node has bus-$k$ excitation, excite me too'' can also be read abstractly as the propagation of the routing label $|x\cdot\Delta_k\rangle_B$ while the local computational factor $|w_i\rangle_C$ remains intact.

\paragraph{Interpretation}

This tensor-factor extension is useful because it restores the more abstract projector-style algebra without abandoning the physical intuition that motivated the protocol. The same-qudit picture explains where the mechanism comes from, while the tensor-factor picture provides a compact mathematical language for proofs, operator identities, and later generalizations.

\subsection{Invertibility of Routing Operations}

The Controlled-Bus-Load ($CBL$) operator defined earlier \ref{eqn:busload} is not, in general, self-inverse. To ensure reversibility, we explicitly define its inverse.

\paragraph{Inverse Operation}

Define the inverse operator:
\begin{equation}\label{eqn:inversebusload}
    CBL_{(i,j)}^{-\Delta_k} |x\rangle_i |y\rangle_j = |x\rangle_i |(y - x \cdot \Delta_k) \bmod d\rangle_j
\end{equation}
for $x \in \{0,1\}$.

For propagation through an already lifted path, define
\begin{equation}\label{eqn:inversebuspropagate}
    BCP_{(i,j)}^{-\Delta_k} |r\rangle_i |y\rangle_j=|r\rangle_i |(y - f_k(r)\cdot \Delta_k) \bmod d\rangle_j
\end{equation}

Thus, cleanup in the routing protocol of Sec.~\ref{sec:routing-protocol} must use $BCP^{-\Delta_k}$ together with a final inverse initial-lift operation $CBL^{-\Delta_k}$ on the first edge.

\paragraph{Revised Cleanup Step}

The reverse propagation stage becomes:
\[
BCP_{(v_{L-1}, v_L)}^{-\Delta_k}\cdots
BCP_{(v_1, v_2)}^{-\Delta_k}
CBL_{(v_0, v_1)}^{-\Delta_k}
\]

This guarantees exact cancellation of routing information without requiring constraints such as $2\Delta_k \equiv 0 \pmod d$.

\paragraph{Remark}

All routing operations are unitary and reversible, preserving the validity of the quantum circuit model.

\subsection{Compilation Assumption for Routing Primitives}

In the abstract complexity analysis below, we count $CBL$, $BCP$, and $CU_R$ as routing primitives. For a reviewer-facing comparison with qubit routing, it is therefore important to state how these primitives should be interpreted. They are not assumed to be free; rather, they are compiled qudit operations whose cost depends on the available native gate set.

The logical-routing comparison in this paper uses the following compilation assumption: for a fixed bus set and a fixed hardware family with selective multilevel control, each routing primitive can be synthesized using $O(1)$ entangling interaction windows plus local pulses, with device-dependent constants absorbed into the compilation layer. In this viewpoint, $CBL^{\Delta_k}$ is a controlled modular shift, $BCP^{\Delta_k}$ is a controlled shift conditioned on the extracted bus bit, and $CU_R^{\Delta_k}(U)$ is a controlled single-block unitary acting on the lifted target pair. Accordingly, the gate-count comparison in this paper should be read at the level of logical routing primitives. The paper does not prove a hardware-independent constant-depth synthesis theorem for arbitrary qudit platforms; it assumes such a compilation regime when interpreting the logical routing primitives as physical gates.

\begin{center}
\scriptsize
\resizebox{\columnwidth}{!}{%
\begin{tabular}{l l c}
\hline
Primitive & Representative decomposition & Entangling cost \\
\hline
$CBL^{\Delta_k}$ & controlled phase + ladder pulses & $O(1)$ \\
$BCP^{\Delta_k}$ & conditional shift from extracted bus bit & $O(1)$ \\
$CU_R^{\Delta_k}(U)$ & block-selective unitary on the lifted pair & $O(1)$ \\
\hline
\end{tabular}
}
\end{center}

More concretely, a representative compilation of $CBL^{\Delta_k}$ uses one conditioning interaction to distinguish whether the source qudit carries the relevant control state, together with local ladder-type pulses that shift the target into the chosen routing block. Likewise, $BCP^{\Delta_k}$ may be compiled by first conditioning on the relevant extracted bus digit and then applying the corresponding local shift sequence, while $CU_R^{\Delta_k}(U)$ applies a block-selective single-qudit unitary after a routing-bit-selective conditioning stage. These are qualitative synthesis templates rather than hardware-independent exact gate counts, but they make explicit the compilation regime under which each primitive is modeled as a constant number of two-qudit interaction windows plus local pulses.

We therefore do not claim that every device-level synthesis of a routing primitive is uniformly cheaper than three CNOTs in raw pulse count. The intended comparison is architectural: our framework removes explicit SWAP-based state transport and introduces spectral parallelism, while the precise constant factors are delegated to the hardware compilation layer and treated as an assumption of the logical model rather than as a theorem.

The formal model defined here provides the foundation for constructing swap-free routing protocols. 
In subsequent sections, we develop explicit algorithms for non-local gate implementation and analyze their complexity.

\section{Swap-Free Routing Circuits}\label{sec:routing-protocol}

With the routing primitives in hand, we now construct the basic nonlocal circuit. The guiding picture is the same throughout: lift the source bit onto a chosen spectral bus, propagate that bus along a near-neighbor path, act locally on the target block, and then undo the routing excitations so that all intermediate nodes are restored.

\subsection{Problem Definition}

Let $G = (V, E)$ be the connectivity graph. Given two nodes $u, v \in V$ such that $(u,v) \notin E$, our goal is to implement a logical $CX_{(u,v)}$ gate. Let $P = (v_0 = u, v_1, v_2, \dots, v_L = v)$ be a path in $G$ of length $L$ connecting $u$ to $v$.

\paragraph{Scope of Congestion}
It is important to distinguish \emph{spatial routing congestion}, in which independent non-local paths cross at a shared intermediate node, from \emph{algorithmic fan-in}, in which several logical controls are required to influence the same destination within the same logical layer. The proposed parallel-bus architecture resolves the transport aspect of both situations: independent routed signals may traverse shared intermediate nodes and may also arrive simultaneously at a common target qudit without logical mixing, because their bus labels remain algebraically separable. The remaining design choice is then the target semantics. Section~\ref{sec:boolean-fanin} supplies the missing local primitive by defining Boolean-routed fan-in operations on the lifted target block, so the merged framework separates distributed delivery from target-side aggregation.

\subsection{Path Selection and Bus Assignment}

The routing protocol is intended to operate at the compilation level. For a layer of desired nonlocal controlled operations, a transpiler first selects candidate paths in the connectivity graph $G$ and then assigns a routing bus to each path before circuit execution begins.

If two simultaneously scheduled paths intersect, they may not be assigned the same routing bus. Instead, they receive distinct offsets $\Delta_k$, so that their routing information remains distinguishable even while traversing the same physical region of the device. In this sense, congestion is handled by bus assignment rather than by forcing one operation to wait for another to vacate the path.

A simple strategy is to assign to each path the smallest available bus index that is not already used by an intersecting path in the same circuit layer. If no such bus is available among the $K$ supported buses, the corresponding non-local operation is deferred to a later layer. Thus, the routing decisions are fixed during transpilation, while the runtime hardware only executes the prescribed local shift and cleanup operations.

\begin{definition}[Routing Protocol Instance]
A routing protocol instance for $m$ requested non-local controlled operations consists of three pieces of data: a collection of paths
\[
P_j = (v^{(j)}_0, v^{(j)}_1, \dots, v^{(j)}_{L_j}),
\qquad j = 1,\dots,m
\]
where each path connects the source and target of the $j$-th logical operation; a bus-assignment function
\[
b : \{1,\dots,m\} \to \{1,\dots,K\}
\]
such that $b(j) \neq b(\ell)$ whenever $P_j$ and $P_\ell$ intersect; and a schedule
\[
\Sigma = (\mathcal{L}_1,\dots,\mathcal{L}_T),
\]
where each layer $\mathcal{L}_t$ is a set of local routing operations on edges of $G$ and no physical qudit participates in more than one gate in the same layer.
\end{definition}

This definition isolates the three ingredients needed for parallel execution: a chosen path for each routed operation, a collision-free bus labeling on intersections, and a hardware-compatible local schedule.

\paragraph{Illustrative Example}

Suppose the transpiler first assigns a non-local operation from $X_1$ to $X_6$ along the path $X_1 - X_2 - X_3 - X_4 - X_5 - X_6$.
This route is given bus $1$, so its propagation uses the shift $\Delta_1 = 2$.

Now suppose a second non-local operation from $Y_1$ to $Y_2$ intersects the first route near $X_3$ or $X_4$. Because bus $1$ is already occupied on that shared region, the compiler does not reuse it; instead it assigns bus $2$, so the second route uses $\Delta_2 = 4$.

If a third route from $Z_1$ to $Z_2$ also conflicts with the previously assigned paths, it is placed on bus $3$, so it uses $\Delta_3 = 8$.

At a shared node, the resulting qudit value can therefore take the form
\[
x_0 + x^{(1)} 2 + x^{(2)} 4 + x^{(3)} 8,
\qquad x_0, x^{(1)}, x^{(2)}, x^{(3)} \in \{0,1\}
\]
The original control values of the three routes are later recovered independently by the extraction maps $f_1$, $f_2$, and $f_3$. This is the concrete sense in which transpilation removes congestion: the physical qudit may be shared, but the assigned buses remain algebraically distinct.

\subsection{Routing Protocol}

We describe the protocol in three stages: \emph{forward propagation}, \emph{target interaction}, and \emph{cleanup}. The forward stage begins by lifting the source control onto the chosen bus through $CBL_{(v_0, v_1)}^{\Delta_k}$. One then propagates the bus label along the remainder of the path by applying, for $i = 1$ to $L-1$, $BCP_{(v_i, v_{i+1})}^{\Delta_k}$.
The first gate conditionally lifts the first neighbor into bus $k$, and each subsequent $BCP$ gate checks whether the previous qudit carries the bus-$k$ component and, if so, excites the next qudit on the same bus. After this stage, every node on the selected path carries the same bus-$k$ label if and only if the original control bit was $1$.

At the target, apply the routing-controlled operation $CXR^{\Delta_k} = CU_R^{\Delta_k}(X)$ defined in Sec.~\ref{sec:formalism}. For a lifted target qudit with routing offset $s \in \mathcal{S}_K$ and local logical bit $x_0 \in \{0,1\}$, the action is
\[
|s + x_0\rangle \mapsto |s + (x_0 \oplus f_k(s))\rangle
\]
which implements the desired logical CNOT at the target node.

The cleanup stage restores the intermediate nodes by applying the inverse operations in reverse order:
\[
BCP_{(v_{L-1}, v_L)}^{-\Delta_k}
\cdots
BCP_{(v_1, v_2)}^{-\Delta_k}
CBL_{(v_0, v_1)}^{-\Delta_k}
\]
This explicitly reverses the routing transformations and removes all residual information from the intermediate nodes.

\subsection{Worked Example: Non-Local CNOT on a 4-Node Chain}

Consider the path $u - v_1 - v_2 - v$ with source bit $x \in \{0,1\}$ at $u$, target bit $y \in \{0,1\}$ at $v$, and intermediate logical values $w_1,w_2 \in \{0,1\}$ at $v_1$ and $v_2$. Using bus $k$, the protocol begins in the state
\[
|x\rangle_u |w_1\rangle_{v_1} |w_2\rangle_{v_2} |y\rangle_v
\]
After the initial lift, applying $CBL_{(u,v_1)}^{\Delta_k}$ gives
\[
|x\rangle_u |w_1 + x\Delta_k\rangle_{v_1} |w_2\rangle_{v_2} |y\rangle_v
\]
Applying $BCP_{(v_1,v_2)}^{\Delta_k}$ and then $BCP_{(v_2,v)}^{\Delta_k}$ propagates the bus to the rest of the path and yields
\[
|x\rangle_u
|w_1 + x\Delta_k\rangle_{v_1}
|w_2 + x\Delta_k\rangle_{v_2}
|y + x\Delta_k\rangle_v
\]
Applying $CXR^{\Delta_k}$ at the target then gives
\[
|x\rangle_u
|w_1 + x\Delta_k\rangle_{v_1}
|w_2 + x\Delta_k\rangle_{v_2}
|y \oplus x + x\Delta_k\rangle_v
\]
Finally, the inverse propagation and inverse lift return the intermediate nodes to their original states:
\[
|x\rangle_u |w_1\rangle_{v_1} |w_2\rangle_{v_2} |y \oplus x\rangle_v
\]

This explicit 4-node example illustrates the general pattern proved later: the routed bus label propagates forward, triggers the target action, and is then removed without disturbing the logical values stored on the intermediate qudits.

\paragraph{Correctness}

We now prove that the protocol correctly implements a non-local CNOT. The gate $CBL_{(v_0,v_1)}^{\Delta_k}$ first loads bus $k$ onto the path when the source control bit is $1$. Thereafter, each gate $BCP_{(v_i,v_{i+1})}^{\Delta_k}$ checks whether node $v_i$ already contains the bus-$k$ component and propagates the same excitation to node $v_{i+1}$. Consequently, the value at node $v_L$ takes the form \(y + x \cdot \Delta_k\)
where $y$ is the target's original local bit and $x$ is the original control qubit. The operation $CXR^{\Delta_k}$ flips the least-significant logical bit of the lifted target block if and only if bus $k$ is active. Since \(f_k(y + x\cdot\Delta_k) = x\) for $x,y \in \{0,1\}$, this action is equivalent to a standard CNOT. Applying the inverse $BCP$ operations in reverse order, followed by $CBL^{-\Delta_k}$ on the first edge, removes the added routing offsets:
\[
    x \cdot \Delta_k - x \cdot \Delta_k = 0 \mod d
\]
Thus, all nodes except $u$ and $v$ return to their original states.

\paragraph{Complexity Analysis}

Let $L$ be the length of the path between $u$ and $v$.
The forward propagation stage contributes $L$ operations, the reverse cleanup stage contributes another $L$, and the target interaction contributes one more. Thus the total gate count is \(2L + 1\) whereas the transport term of the naive SWAP baseline is $3L$ CNOT-equivalent layers, corresponding to $L$ SWAPs before any additional target-side logic is counted.
Assuming sequential execution along the path, the circuit depth is $2L + O(1)$, and hence $O(L)$ asymptotically.

\subsection{Parallel Routing via Buses}

By assigning distinct buses $\Delta_k$ to independent routing operations, multiple nonlocal gates can be executed simultaneously.

Since routing information is encoded in orthogonal subspaces, operations using different buses do not interfere, even when paths overlap.

This enables parallel routing through local spectral multiplexing rather than through additional ancilla hardware. Under the binary encoding above, the number of simultaneously usable buses grows only logarithmically with $d$, but even modest $K$ can relieve severe spatial contention in compiled routing layers.

\subsection{Congestion-Round Scaling and a Hotspot Separation}

The per-path improvement from $3L$ transport layers to $2L+1$ routed primitives is only part of the story. For a compiled routing layer with chosen paths $P_1,\dots,P_m$, define the \emph{route conflict graph} $\Gamma$ whose vertices are the requested routed operations and whose edges connect precisely those pairs of operations whose chosen paths intersect. Appendix~B proves that, in the fixed-path $K$-bus model, the minimum number of logical routing rounds required for that layer is
\[
R_K(\Gamma)=\left\lceil \frac{\chi(\Gamma)}{K} \right\rceil
\]
where $\chi(\Gamma)$ is the chromatic number of the conflict graph. Thus the comparison extends beyond single-path depth: at the compiler level, congestion management becomes a graph-coloring problem with $K$ reusable local spectral channels rather than a purely spatial serialization problem.

The exact-overlap lower bounds also give a direct necessity statement. Within the exact nearest-neighbor unitary model studied here, any architecture that preserves a resident logical qubit at a congested node and supports $K$ exact overlapping routed controls there must provide local dimension at least $2^{K+1}$. Equivalently, a one-qubit-per-node architecture has zero exact overlap capacity, and a bank of $q$ colocated qubits supports at most $q-1$ such routed bits. Within this exact-overlap model, scalable hotspot relief requires local Hilbert-space expansion: one must add colocated ancillas or enlarge the local Hilbert space.

A simple hotspot family makes the separation explicit. If $m$ requested routes are pairwise intersecting at one shared node, then $\Gamma=K_m$ and the qudit architecture requires exactly
\[
\left\lceil \frac{m}{K} \right\rceil
\]
logical routing rounds. The smallest case is $m=2$. Then $\Gamma=K_2$, two buses suffice for one logical routing round, and already at local dimension $d \geq 8$ a single qudit can preserve the resident logical state while hosting both routed controls exactly. The one-qubit-per-node model fails on that same instance, because it cannot preserve the resident state and simultaneously host even one independently recoverable routed control at the hotspot. Larger clique hotspots simply scale this same separation to $\lceil m/K \rceil$. This example shows that the routing advantage is not limited to a smaller transport constant: it also changes which congested layers can be executed in one round.

\section{Applications and Validation}\label{sec:applications}

This section turns from the basic routed CNOT to two natural extensions. First, the same algebra supports routed Boolean fan-in, where several delivered controls act on a common target through a local same-qudit primitive. Second, the resulting constructions can be checked numerically in both ideal and noisy models, which clarifies the distinction between logical correctness and hardware advantage.

\subsection{Boolean Fan-In Extensions}\label{sec:boolean-fanin}

The routing framework developed so far already solves the \emph{delivery} problem for multiple controls: if several routed signals arrive at a common target on distinct buses, their bus digits remain exactly separable in the lifted target offset. What remains is the \emph{target} problem: which joint logical action should those arrived controls induce on the least-significant target bit? This is no longer a transport question. It is a local same-qudit synthesis problem on the target block.

\subsection{Boolean-Routed Target Primitive}

\begin{definition}[Boolean-Routed Controlled Unitary]\label{def:boolean-routed-cu}
Let
\[
    g : \{0,1\}^{K} \to \{0,1\}
\]
be any Boolean function and let $U$ be any single-qubit unitary. Define the same-qudit Boolean-routed controlled unitary
\[
\begin{aligned}
CU_R^{g}(U)&=
\sum_{s \in \mathcal{S}_K}
\sum_{a,b \in \{0,1\}}\\
&\qquad\times
\left(U^{g(f_1(s),\dots,f_K(s))}\right)_{ab}
\dyad{s+a}{s+b}\\
&\quad+\; I_{\mathcal{R}_K^\perp}
\end{aligned}
\]
\end{definition}

\begin{remark}
The operator $CU_R^g(U)$ acts blockwise on the lifted target qudit. On the block \(\mathrm{span}\{|s\rangle,|s+1\rangle\}\)
it applies either $I$ or $U$ depending on the Boolean value of \(g(f_1(s),\dots,f_K(s))\).
Here $I_{\mathcal{R}_K^\perp}$ denotes the identity on the orthogonal complement of the admissible routing subspace. Thus, outside the admissible routed blocks, $CU_R^g(U)$ acts trivially.
\end{remark}

\begin{proposition}[Action on Lifted Basis States]\label{prop:boolean-target-action}
For every
\[
s = \sum_{k=1}^{K} x_k \cdot\Delta_k \in \mathcal{S}_K
\qquad\text{and}\qquad
x_0 \in \{0,1\},
\]
the Boolean-routed controlled unitary satisfies
\[
CU_R^{g}(U)\ket{s+x_0}=\sum_{b \in \{0,1\}}\left(U^{g(x_1,\dots,x_K)}\right)_{b x_0}
\ket{s+b}
\]
\end{proposition}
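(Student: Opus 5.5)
The plan is to apply the definition of $CU_R^{g}(U)$ from Definition~\ref{def:boolean-routed-cu} directly to the basis vector $\ket{s+x_0}$ and use orthonormality of the computational basis of $\mathcal{H}_d$ to collapse the double sum to a single block. The only structural input needed is that the blocks $\mathrm{span}\{\ket{s'},\ket{s'+1}\}$ for distinct $s'\in\mathcal{S}_K$ are pairwise orthogonal, i.e. that the labels $s'+a$ with $s'\in\mathcal{S}_K$, $a\in\{0,1\}$ are pairwise distinct integers in $\{0,\dots,d-1\}$.

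\textbf{Step 1: the blocks are disjoint.} Every $s'\in\mathcal{S}_K$ is even, since each $\Delta_j=2^j$ with $j\ge1$. Hence $s'+a$ has least-significant binary digit $a$ and higher digits determined by $s'$. By the Uniqueness of Decoding proposition, together with the no-aliasing assumption $d\ge 2^{K+1}$, the map $(s',a)\mapsto s'+a$ is injective into $\{0,\dots,d-1\}$. In particular $\ket{s+x_0}$ lies in the admissible subspace, so the identity term $I_{\mathcal{R}_K^\perp}$ annihilates it. Here I read $\mathcal{R}_K^\perp$ as the complement of $\mathrm{span}\{\ket r: r\in\mathcal{R}_K\}=\mathcal{L}_K$.

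\textbf{Step 2: evaluate the sum.} We have $\dyad{s'+a}{s'+b}\,\ket{s+x_0}=\delta_{s'+b,\,s+x_0}\ket{s'+a}$. By Step 1 the Kronecker delta forces $s'=s$ and $b=x_0$. The sum therefore reduces to
\[
\sum_{a}\bigl(U^{g(f_1(s),\dots,f_K(s))}\bigr)_{a x_0}\ket{s+a}.
\]
Finally, $f_k(s)=x_k$ by the Uniqueness of Decoding proposition, since $s=0+\sum_k x_k\Delta_k\in\mathcal{R}_K$. After renaming the dummy index $a\to b$, this gives the claimed formula.

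The main obstacle is only bookkeeping: making sure that no other block $s'\neq s$ contributes. That is exactly the injectivity in Step 1, which relies on the evenness of the offsets and on the no-aliasing assumption, so that no modular wraparound occurs. A secondary point is the notational mismatch between $I_{\mathcal{R}_K^\perp}$ here and $I_{\mathcal{L}_K^\perp}$ earlier. I would note explicitly that $\mathcal{R}_K$ and $\mathcal{L}_K$ label the same subspace, so the complement term contributes nothing on lifted basis states.
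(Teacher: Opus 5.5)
Your proposal is correct and follows the same route as the paper: substitute the exact decoding identity $f_k(s)=x_k$ into the definition of $CU_R^g(U)$ and read off the blockwise action. The only difference is that you spell out what the paper's one-line proof leaves implicit. That is, the blocks are pairwise disjoint (the offsets are even and no-aliasing holds), so only the block indexed by $s$ contributes, and the complement term vanishes on $\ket{s+x_0}$ because $\mathcal{R}_K$ and $\mathcal{L}_K$ label the same subspace.
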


\begin{proof}
By exact decoding in the routing algebra,
\[
f_k(s)=x_k
\qquad (k=1,\dots,K)
\]
Substituting this identity into Definition~\ref{def:boolean-routed-cu} gives the stated blockwise action.
\end{proof}

\begin{figure*}[tbp]
\includegraphics[]{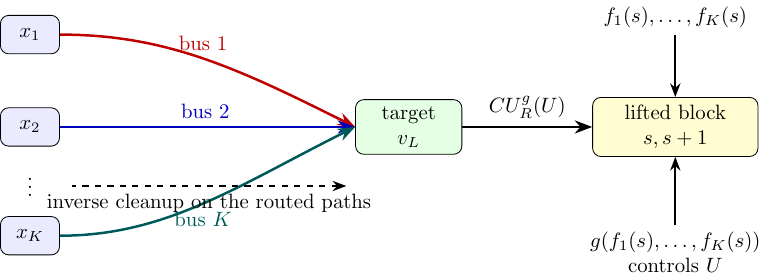}
\caption{
Routed Boolean fan-in in the spectral-routing model. Multiple source controls $x_1,\dots,x_K$ are delivered to a common target along distinct spectral buses, allowing simultaneous routing without spatial duplication. The routed bus pattern is encoded as $s=\sum_k x_k \cdot\Delta_k$ and stored in a lifted two-level block $\{\ket{s},\ket{s+1}\}$. Local digit extraction $f_1(s),\dots,f_K(s)$ feeds a Boolean rule $g$, which controls a same-qudit unitary $CU_R^g(U)$ acting on the logical subspace while preserving the routing offset. The diagram highlights the separation between nonlocal control delivery and local target-side aggregation.
}
\label{fig:routed-fanin-schematic}
\end{figure*}

\begin{corollary}[Single-Bus Recovery]\label{cor:single-bus-recovery}
If $g(x_1,\dots,x_K)=x_k$, then $CU_R^g(U)=CU_R^{\Delta_k}(U)$, the single-bus routed controlled unitary defined earlier.
\end{corollary}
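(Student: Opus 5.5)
The plan is to compare the two operators directly from their definitions. Both are written as a sum over $s \in \mathcal{S}_K$ of block terms $\sum_{a,b}(\cdot)_{ab}\dyad{s+a}{s+b}$, plus an identity on a complementary subspace. First, substitute $g(x_1,\dots,x_K)=x_k$ into Definition~\ref{def:boolean-routed-cu}. This gives $g(f_1(s),\dots,f_K(s)) = f_k(s)$ for every $s\in\mathcal{S}_K$, so the block coefficient $\left(U^{g(\cdots)}\right)_{ab}$ becomes exactly $\left(U^{f_k(s)}\right)_{ab}$. That is term-for-term the summand in Eq.~\eqref{eqn:controlledunitary}.

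Second, the two definitions have to act as the identity on the same complementary subspace.
\begin{itemize}
\item Eq.~\eqref{eqn:controlledunitary} uses $I_{\mathcal{L}_K^\perp}$, where $\mathcal{L}_K=\mathrm{span}\{|s\rangle,|s+1\rangle: s\in\mathcal{S}_K\}$.
\item Definition~\ref{def:boolean-routed-cu} uses $I_{\mathcal{R}_K^\perp}$, read as the span of the basis states $|r\rangle$ with $r\in\mathcal{R}_K$.
\end{itemize}
Every $r\in\mathcal{R}_K$ has the form $s+x_0$ with $s\in\mathcal{S}_K$ and $x_0\in\{0,1\}$. Conversely, each $s$ and $s+1$ lies in $\mathcal{R}_K$, because $s$ is even and so adding $1$ only sets the zeroth digit. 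Hence the two spans coincide, and the two identity terms are the same operator.

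The no-aliasing assumption $d\ge 2^{K+1}$ ensures that all levels $s+a$ are distinct and lie inside $\{0,\dots,d-1\}$. By the Uniqueness of Decoding proposition, the blocks $\{|s\rangle,|s+1\rangle\}$ for distinct $s$ are pairwise orthogonal. The sums therefore decompose in the same way, and equality of operators follows.

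As a cross-check, I would also compare actions on a basis. On $\mathcal{L}_K$, Proposition~\ref{prop:boolean-target-action} with $g=x_k$ gives $\sum_b (U^{x_k})_{bx_0}|s+b\rangle$. This matches the Key Property of $CU_R^{\Delta_k}(U)$, since $f_k(s)=x_k$. On the orthogonal complement, both operators act as the identity. Agreement on a basis then gives equality.

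The only real obstacle is notational: identifying the complements $\mathcal{R}_K^\perp$ and $\mathcal{L}_K^\perp$ so that the identity terms match. The argument above settles this by showing that both index sets are $\{s+x_0\}$.
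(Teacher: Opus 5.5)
Your proposal is correct and takes the same route as the paper, whose one-line proof simply substitutes $g=x_k$ so that $g(f_1(s),\dots,f_K(s))=f_k(s)$ and the block terms coincide with those of Eq.~\eqref{eqn:controlledunitary}. Your extra check that $\mathrm{span}\{|r\rangle : r\in\mathcal{R}_K\}=\mathcal{L}_K$, so that $I_{\mathcal{R}_K^\perp}$ and $I_{\mathcal{L}_K^\perp}$ are the same operator, fills in a detail the paper leaves implicit.
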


\begin{proof}
The definition reduces immediately to the bus-$k$ control law.
\end{proof}

\begin{corollary}[Toffoli-Type Conjunction]\label{cor:toffoli-type-conjunction}
For any pair of buses $k \neq \ell$, define $g_{k,\ell}(x_1,\dots,x_K)=x_k x_\ell$. Then $CU_R^{g_{k,\ell}}(X)$ flips the target logical bit iff both bus-$k$ and bus-$\ell$ are active.
\end{corollary}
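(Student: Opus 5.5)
The plan is to read this off Proposition~\ref{prop:boolean-target-action} by specializing $g = g_{k,\ell}$ and $U = X$. First I note that $g_{k,\ell}(x_1,\dots,x_K) = x_k x_\ell$ is indeed a Boolean function $\{0,1\}^K \to \{0,1\}$. It therefore falls within Definition~\ref{def:boolean-routed-cu}, and $CU_R^{g_{k,\ell}}(X)$ is a well-defined unitary.

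Next, fix an admissible lifted basis state $\ket{s+x_0}$ with $s=\sum_{j=1}^K x_j\Delta_j \in \mathcal{S}_K$ and $x_0\in\{0,1\}$. By Proposition~\ref{prop:boolean-target-action},
\[
CU_R^{g_{k,\ell}}(X)\ket{s+x_0} = \sum_{b\in\{0,1\}} \left(X^{x_k x_\ell}\right)_{b x_0}\ket{s+b}.
\]
I then split into the two values of the exponent.
\begin{itemize}
\item If $x_k x_\ell = 0$, so at least one of the two buses is inactive, then $X^0 = I$. The matrix entry is $\delta_{b x_0}$ and the state is unchanged.
\item If $x_k x_\ell = 1$, so both buses are active, then $X_{b x_0} = \delta_{b,\, x_0\oplus 1}$. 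The state maps to $\ket{s + (x_0\oplus 1)}$.
\end{itemize}
Combining the cases gives $CU_R^{g_{k,\ell}}(X)\ket{s+x_0} = \ket{s + (x_0 \oplus x_k x_\ell)}$. This is exactly the statement that the least-significant logical bit flips iff both bus-$k$ and bus-$\ell$ are active. The routing offset $s$, and hence every bus digit recovered by the extraction maps (via Uniqueness of Decoding), is preserved.

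Finally, I would extend from basis states to arbitrary states by linearity, since the operator is defined blockwise on $\mathcal{L}_K$ and as the identity on the complement. The hypothesis $k\neq\ell$ only ensures the conjunction genuinely involves two distinct controls; with $k=\ell$ the statement collapses to Corollary~\ref{cor:single-bus-recovery}.

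There is no real obstacle here. The only point needing care is that "iff both buses are active" is read through the decoding maps $f_k(s)=x_k$ and $f_\ell(s)=x_\ell$. Those identities are guaranteed by the no-aliasing assumption, so I would cite Uniqueness of Decoding explicitly at that step.
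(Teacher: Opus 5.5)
Your proposal is correct and follows the paper's proof, which is the one-line specialization of Proposition~\ref{prop:boolean-target-action} to $U=X$ and $g=g_{k,\ell}$. Your case split on $x_k x_\ell$, giving $\ket{s+(x_0\oplus x_k x_\ell)}$, simply writes out that specialization explicitly.
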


\begin{proof}
Apply Proposition~\ref{prop:boolean-target-action} with $U=X$ and the Boolean rule $g_{k,\ell}$.
\end{proof}

\subsection{Swap-Free Multi-Control Routed Gates}

\begin{theorem}[Swap-Free Multi-Control Routed Unitary]\label{thm:swap-free-multicontrol}
Let \(g : \{0,1\}^{K} \to \{0,1\}\) be a Boolean function and let $U$ be any single-qubit unitary. Suppose the routing stage delivers the control bits $x_1,\dots,x_K$ to the target node, so that immediately before the target interaction the target qudit has the spectral form
\[
    \ket{s+y}\qquad s = \sum_{k=1}^{K} x_k \cdot\Delta_k,\qquad y \in \{0,1\}
\]
Then applying the target primitive $CU_R^g(U)$ followed by the reverse cleanup stage produces the final target state \(U^{g(x_1,\dots,x_K)}\ket{y}\)
while restoring every intermediate routed node to its original logical state.
\end{theorem}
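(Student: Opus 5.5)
The plan is to track basis states through the two remaining stages and then extend to superpositions by linearity. Because every routing primitive in Sec.~\ref{sec:formalism} permutes basis states (or, for $CU_R^g(U)$, acts blockwise), it suffices to fix the logical inputs: source bits $x_1,\dots,x_K$, target bit $y$, and the logical bits $w$ on intermediate nodes. We start from the state the statement supplies just before the target interaction. There the target is $\ket{s+y}$ with $s=\sum_k x_k\Delta_k$, and each intermediate node on route $k$ carries its resident bit plus exactly those offsets $x_j\Delta_j$ for the routes $j$ passing through it. This node form follows from the single-path analysis in the Correctness paragraph, applied route by route. Distinct routes commute because the bus assignment $b$ gives intersecting paths distinct offsets, and Composability of Bus Labels then guarantees that the added offsets occupy disjoint binary digits.

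First, we apply Proposition~\ref{prop:boolean-target-action} at the target. It gives $CU_R^g(U)\ket{s+y}=\sum_b (U^{g(x_1,\dots,x_K)})_{by}\ket{s+b}$. Under the identification $\Phi$ this is $\ket{s}_B\otimes U^{g(x)}\ket{y}_C$. The offset $s$ is therefore unchanged, so the bus digits $f_k$ of the target are still exactly $x_k$ in every branch of the superposition.

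Second, we run the cleanup stage $BCP^{-\Delta_k}$ and $CBL^{-\Delta_k}$ for each route in the reverse order of the forward stage. The key invariant is that during cleanup every control qudit still reads the same bus digit it had in the forward pass. This holds for two reasons. The source bit $x_k$ is never modified, since the target operation touches only the least significant digit of a target block. In addition, $f_k$ of an intermediate node is unaffected by operations on other buses or on the logical digit, by uniqueness of decoding. Each inverse gate therefore subtracts exactly the offset its forward partner added. Reversibility of Routing Updates then ensures we stay inside $\mathcal{R}_K$ with no modular wraparound, which is guaranteed by the no-aliasing assumption. After all offsets are removed, every intermediate node returns to $\ket{w}$ and the target block $\ket{s+b}$ becomes $\ket{b}$. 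The target state is thus $U^{g(x)}\ket{y}$, and by linearity this holds for arbitrary superpositions of the controls and targets.

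The main obstacle I expect is the cleanup on the target itself. After $CU_R^g(U)$, the target is in a superposition over $b$ inside the block, so we need the inverse propagation onto the target to act identically on both $\ket{s}$ and $\ket{s+1}$. Since $BCP^{-\Delta_k}$ conditions only on the previous node's bus digit and shifts the target by $\Delta_k$, it acts uniformly on the block and commutes with the blockwise unitary. I will verify this commutation explicitly in the tensor-factor picture, where the shift acts only on $\mathcal{H}_B$ and $CU_R^g(U)$ is block diagonal in the basis $\ket{s}_B$. A second subtlety is ordering when routes share nodes; I handle it by noting that all cleanup gates on different buses act on disjoint digits and hence commute on $\mathcal{R}_K$.
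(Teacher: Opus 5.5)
Your proposal is correct and takes essentially the same route as the paper. You apply Proposition~\ref{prop:boolean-target-action} at the target to obtain $\sum_b (U^{g(x_1,\dots,x_K)})_{by}\ket{s+b}$ with $s$ preserved, and then argue that the reverse-order cleanup subtracts exactly the forward offsets, because each controlling node still carries its bus digit when the corresponding inverse gate acts. You also spell out the linearity, multi-route and target-superposition details that the paper leaves implicit.

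One correction to your side remark: $BCP^{-\Delta_k}$ does not commute with $CU_R^g(U)$ as operators. Applied before the shift, $CU_R^g(U)$ acts with $U^{g}$ evaluated on block $s$; applied after, it acts on block $s-\Delta_k$, and these differ whenever $g$ depends on $x_k$. You do not need the commutation, though. Your argument only uses that, for fixed control values, the shift acts as a relabeling on $\mathcal{H}_B$ tensored with the identity on $\mathcal{H}_2$, so the amplitudes $U^{g(x_1,\dots,x_K)}\ket{y}$ are carried through unchanged.
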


\begin{proof}
By exact decoding \(f_k(s)=x_k\) where \(k=1,\dots,K\).
Hence Proposition~\ref{prop:boolean-target-action} yields
\[
    CU_R^g(U)\ket{s+y}=\sum_{b \in \{0,1\}}\left(U^{g(x_1,\dots,x_K)}\right)_{by}\ket{s+b}
\]
Thus the target logical bit is transformed by the intended Boolean-controlled unitary while the routing offset $s$ is preserved.

The reverse cleanup stage is unchanged from the single-bus case. Each inverse propagation removes exactly the same routing offsets that were written during forward transport, because the relevant bus digits are still present in the routed nodes until the corresponding inverse gate acts. Consequently, all intermediate path nodes return to their initial logical states, and the target qudit returns from the lifted block state $\ket{s+b}$ to the computational state $\ket{b}$. Therefore the final target logical state is \(U^{g(x_1,\dots,x_K)}\ket{y}\) as claimed.
\end{proof}

\begin{corollary}[Routed Boolean Fan-In]\label{cor:routed-boolean-fanin}
Every Boolean fan-in rule
\(g : \{0,1\}^{K} \to \{0,1\}\) admits a swap-free same-qudit routed implementation at the routing level, provided the corresponding source controls can be delivered to the target on distinct buses.
\end{corollary}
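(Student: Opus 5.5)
The plan is to build the implementation explicitly and then apply Theorem~\ref{thm:swap-free-multicontrol}, which already gives target-side correctness and cleanup once the delivery hypothesis holds. So the real task is to show that the stated premise, distinct-bus delivery of the $K$ source controls, actually produces the spectral form $\ket{s+y}$ with $s=\sum_k x_k\Delta_k$ at the target. It must also leave every other node in a state that the reverse cleanup undoes.

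\emph{Delivery.} I would fix, for each $k$, a path $P_k$ from source $u_k$ to the common target $v$, and assign bus $k$ to $P_k$. This is legitimate because all paths meet at $v$ and therefore must receive pairwise distinct labels, as in the routing protocol instance definition. For each $k$ in turn (or in parallel where the schedule allows) I apply $CBL^{\Delta_k}$ on the first edge and then $BCP^{\Delta_k}$ along $P_k$. By induction along the path, each node $w$ on $P_k$ acquires the additive offset $x_k\Delta_k$, and any digits previously written by other buses are untouched. The inductive step uses Composability of Bus Labels, together with the fact that $BCP^{\Delta_k}$ reads only $f_k$ of its control. The latter holds because Uniqueness of Decoding makes $f_k$ independent of the other digits.

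At the target, the offsets accumulate to $y+\sum_k x_k\Delta_k$. Because bus $k$ is loaded at most once per node, no digit is incremented twice, so the value stays in $\mathcal{R}_K$. The no-aliasing assumption $d\ge 2^{K+1}$ rules out modular wraparound.

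\emph{Target and cleanup.} I would then invoke Theorem~\ref{thm:swap-free-multicontrol} directly. The cleanup applies $BCP^{-\Delta_k}$ and $CBL^{-\Delta_k}$ in reverse order for each bus. When each inverse gate acts, its control node still carries digit $x_k$, so it subtracts exactly what was added. Reversibility of Routing Updates then restores every node, and $CU_R^g(U)$ only changes the least-significant digit of the target, which no $f_k$ with $k\ge1$ reads.

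\emph{Main obstacle.} The main difficulty is overlap of paths away from the target. If $P_k$ passes through a source $u_\ell$, then $u_\ell$ would be lifted before $CBL^{\Delta_\ell}$ reads it as a bare computational bit, and $CBL$ is only defined for $x\in\{0,1\}$. I would handle this by ordering the loads so that all $CBL$ gates act before any propagation. Failing that, I would replace $CBL$ by its block-embedded version controlled on $f_0$, which is still unitary as a controlled permutation. With either fix, source bits remain exactly readable and the argument goes through.
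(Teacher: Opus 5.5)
Your proposal is correct and follows the paper's argument. The paper's entire proof is to take $CU_R^g(U)$ as the target primitive and invoke Theorem~\ref{thm:swap-free-multicontrol}, treating distinct-bus delivery into the form $\ket{s+y}$ as exactly the corollary's proviso, so your explicit $CBL$/$BCP$ delivery construction is extra verification rather than a needed step. On that extra part, of your two fixes for a path passing through another source, only the $f_0$-controlled load works in general: ordering all loads first fails when two sources are each other's first hop.
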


\begin{proof}
Choose the target primitive $CU_R^g(U)$ and apply Theorem~\ref{thm:swap-free-multicontrol}.
\end{proof}

\subsection{Complexity Localization and Synthesis}

\begin{definition}[Local Boolean Target Cost]\label{def:local-boolean-target-cost}
For a Boolean function $g : \{0,1\}^{K} \to \{0,1\}$, let $D_g$ denote the depth required to synthesize the target primitive $CU_R^g(U)$ in a chosen local control model on the target qudit. This quantity measures only the local logical fan-in computation at the destination; it does not include transport along the architecture graph.
\end{definition}

\begin{proposition}[Depth Localization]\label{thm:depth-localization}
For fixed fan-in size $K$, suppose a collection of routed controls is delivered to a common target under a schedule whose transport depth is $2L + O(1)$, where $L$ is the maximal routed path length in the layer. Then the corresponding Boolean-routed fan-in gate has total depth $2L + D_g + O(1)$. By contrast, a SWAP-transport implementation using the same target primitive has depth $3L + D_g + O(1)$ under the analogous transport model.
\end{proposition}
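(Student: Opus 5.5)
The plan is to decompose the fan-in gate into three sequential phases and add their depths. The first phase is forward transport of the controls $x_1,\dots,x_K$ to the target. The second is the single target primitive $CU_R^g(U)$. The third is the reverse cleanup. By hypothesis the transport schedule, meaning forward lift/propagate together with its inverse, has depth $2L+O(1)$. I will first split this as forward depth $L+O(1)$ and cleanup depth $L+O(1)$.

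The split rests on the explicit protocol of Sec.~\ref{sec:routing-protocol}. Along each path $P_j$ of length $L_j\le L$, forward transport is one $CBL$ followed by $L_j-1$ $BCP$ gates applied sequentially. The cleanup is the same sequence of inverses in reverse order, so it has identical depth. For fixed $K$, the paths converging at the common target can be staggered by a constant number of layers. This avoids any qudit participating in two gates in the same layer, as the routing-protocol-instance definition requires. The bus assignment gives distinct buses, so the Composability proposition guarantees the stacked labels do not interfere. That staggering is where the $O(1)$ term, depending on $K$ only, comes from.

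Next I will insert the target primitive between the phases. By Definition~\ref{def:local-boolean-target-cost} it costs depth $D_g$. By Theorem~\ref{thm:swap-free-multicontrol} the sequence forward, then $CU_R^g(U)$, then cleanup yields $U^{g(x_1,\dots,x_K)}\ket{y}$ and restores the intermediate nodes, so correctness is already in hand. The phases are serially dependent: the target gate needs all buses present, and cleanup must follow it. The total depth is therefore $(L+O(1))+D_g+(L+O(1))=2L+D_g+O(1)$, which is the upper bound claimed.

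For the SWAP comparison, I will use the analogous model. Each of the $K$ controls is moved adjacent to the target by at most $L-1$ SWAPs, each worth $3$ CNOT layers. The primitive is then applied locally at cost $D_g$, and, when layout must be restored, the SWAPs are undone. Following the paper's accounting, where the transport term is $3L$ CNOT-equivalent layers, with parallel non-colliding SWAP chains for fixed $K$ again absorbed into $O(1)$, this gives $3L+D_g+O(1)$.

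The main obstacle is the scheduling claim that $K$ paths converging on one target can be pipelined with only $O(1)$ overhead. Near the target, the final edges all involve the same target qudit, and serializing them adds at most $K=O(1)$ layers. The concern is that shared intermediate nodes on overlapping paths might force longer serialization. My first approach is to treat this as covered by the hypothesis, which already asserts a schedule of transport depth $2L+O(1)$. If the hypothesis does not cover it, I will argue directly: offset path $j$ by $j$ layers, so at each time step any node is touched by at most one gate.
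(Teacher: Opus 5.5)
Your proposal follows the paper's own proof: you factor the gate via Theorem~\ref{thm:swap-free-multicontrol} into forward transport, the local primitive of depth $D_g$, and cleanup, then add $(L+O(1))+D_g+(L+O(1))$, and compare against a one-way $3L+O(1)$ SWAP transport term, so the argument is correct as long as you lean on the hypothesized $2L+O(1)$ schedule, as the paper does. Two side remarks should be dropped rather than relied on:
\begin{itemize}
\item Your fallback ``offset path $j$ by $j$ layers'' does not by itself give disjoint supports. Each hop touches two qudits, so two routes that merge at equal index into a shared segment, one layer apart, both touch the merge node in the same layer.
\item Undoing the SWAPs to restore layout would make the baseline $6L+D_g+O(1)$, not $3L+D_g+O(1)$.
\end{itemize}
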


\begin{proof}
Theorem~\ref{thm:swap-free-multicontrol} factorizes the routed implementation into three stages: forward transport, a local target action, and reverse cleanup. The forward and reverse transport stages contribute $L$ layers each, up to constant additive scheduling overhead, for a total routing depth of $2L+O(1)$. The target action contributes exactly the local synthesis depth $D_g$. Hence the total routed depth is $2L + D_g + O(1)$. For a SWAP-based implementation, transporting the relevant control information across a path of length $L$ incurs the usual transport depth $3L+O(1)$ before the same local target computation of depth $D_g$ is applied. Therefore the corresponding SWAP baseline has total depth $3L + D_g + O(1)$.
\end{proof}

\begin{remark}
Theorem~\ref{thm:depth-localization} does not claim that Boolean fan-in becomes free. Rather, it localizes the remaining cost: all non-local overhead is carried by the routing term, while all logical aggregation complexity is isolated in the local target term $D_g$.
\end{remark}

\begin{remark}
The hidden constant in the routing term can depend on local scheduling near the target. In particular, if several routed paths terminate at the same target qudit, the final propagation hops may contend on that target and require short serialization. Equivalently, if one wishes to display the fan-in dependence explicitly, one may write the routed depth as $2L + D_g + O(K)$, which reduces to the stated form for fixed $K$.
\end{remark}

\begin{figure*}[tbp]
\centering
\includegraphics[width=0.8\textwidth]{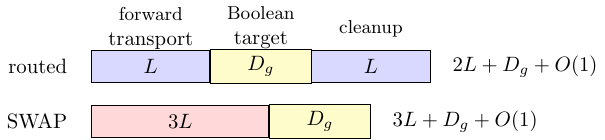}
\caption{
Logical routing cost decomposition for Boolean fan-in. At the level of routing primitives, the spectral-routing framework modifies only the nonlocal transport component: the same local Boolean target cost $D_g$ appears in both implementations, while routed transport contributes $2L+O(1)$ steps compared with the $3L+O(1)$ SWAP-based baseline. The figure highlights the separation between nonlocal routing and local aggregation costs under the constant-cost compilation model assumed throughout.
}
\label{fig:fanin-depth-localization}
\end{figure*}

\begin{definition}[Block-Selective Target Gate]\label{def:block-selective-target-gate}
For each routing offset $s \in \mathcal{S}_K$ and single-qubit unitary $U$, define the block-selective same-qudit operator
\begin{equation}\label{eqn:blockunitary}
    B_s(U)=\sum_{r \notin \{s,s+1\}} \dyad{r}+\sum_{a,b \in \{0,1\}} U_{ab}\dyad{s+a}{s+b}
\end{equation}
\end{definition}

\begin{proposition}[Truth-Table Synthesis Formula]\label{prop:truth-table-synthesis}
Let
\[
    S_g=\left\{s \in \mathcal{S}_K\ \middle|\ g(f_1(s),\dots,f_K(s)) = 1\right\}
\]
Then
\[
    CU_R^g(U)=\prod_{s \in S_g} B_s(U)
\]
The factors commute, and the product contains at most $2^K$ block-selective gates.
\end{proposition}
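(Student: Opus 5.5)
The plan is to show that the two sides agree as operators by checking them block by block. First we set up the decomposition. The lifted subspace $\mathcal{L}_K$ splits as an orthogonal direct sum of the two-dimensional blocks $\mathcal{B}_s=\mathrm{span}\{\ket{s},\ket{s+1}\}$, one for each $s\in\mathcal{S}_K$. For distinct $s\neq s'$ in $\mathcal{S}_K$ these blocks are disjoint as sets of basis levels. Every $s\in\mathcal{S}_K$ is even, since its digit $x_0$ is $0$, so the level pairs $\{s,s+1\}$ and $\{s',s'+1\}$ cannot overlap. By the Uniqueness of Decoding proposition, distinct bus patterns also give distinct values of $s$, so the map $s\mapsto\mathcal{B}_s$ is injective.

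Next we record what each factor $B_s(U)$ does. By Eq.~\eqref{eqn:blockunitary}, $B_s(U)$ equals $U\oplus I$ with respect to the splitting $\mathcal{B}_s\oplus\mathcal{B}_s^\perp$. Because the blocks are pairwise orthogonal, two factors $B_s(U)$ and $B_{s'}(U)$ with $s\neq s'$ act nontrivially on disjoint sets of basis vectors. Each acts as the identity on the other's support, so the two factors commute. It follows that the product over $S_g$ is well defined independently of ordering. Explicitly, the product acts as $U$ on each $\mathcal{B}_s$ with $s\in S_g$, and as the identity everywhere else. Here ``everywhere else'' means the blocks $\mathcal{B}_s$ with $s\notin S_g$ together with the complement of $\mathcal{L}_K$.

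The identity then follows by comparing basis vectors. We use Proposition~\ref{prop:boolean-target-action}: on $\mathcal{B}_s$, the operator $CU_R^g(U)$ applies $U^{g(f_1(s),\dots,f_K(s))}$, which is $U$ exactly when $s\in S_g$ and $I$ otherwise. Off the lifted blocks, $CU_R^g(U)$ is the identity by Definition~\ref{def:boolean-routed-cu}. This matches the product on every basis vector, so the two operators are equal. For the count, $|S_g|\le|\mathcal{S}_K|=2^K$; the equality $|\mathcal{S}_K|=2^K$ again comes from injectivity of the binary encoding.

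The main obstacle is bookkeeping rather than mathematics. The definition writes the identity term as $I_{\mathcal{R}_K^\perp}$, while the natural complement is $\mathcal{L}_K^\perp$. We must make sure that ``outside the blocks'' means the same subspace in both operators. We also need the no-aliasing assumption $d\ge 2^{K+1}$, so that every level $s+1$ actually exists in $\mathcal{H}_d$ and does not wrap modulo $d$. We will fix the convention that both identity terms refer to the span of levels not lying in any $\mathcal{B}_s$, and after that the argument goes through directly.
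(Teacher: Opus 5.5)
Your proposal is correct and follows essentially the same route as the paper. Both arguments observe that each $B_s(U)$ acts as $U$ on its own block $\mathrm{span}\{\ket{s},\ket{s+1}\}$ and trivially elsewhere. Hence distinct factors commute, the product matches $CU_R^g(U)$ block by block, and $|S_g|\le|\mathcal{S}_K|\le 2^K$ gives the count.

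Your extra checks make explicit what the paper leaves implicit: every $s\in\mathcal{S}_K$ is even, so the blocks are disjoint, and no-aliasing guarantees the level $s+1$ exists. The complement issue you flag is harmless. The span of $\{\ket{r} : r\in\mathcal{R}_K\}$ is exactly $\mathcal{L}_K$, so $I_{\mathcal{R}_K^\perp}$ and $I_{\mathcal{L}_K^\perp}$ coincide and no convention needs to be fixed.
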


\begin{proof}
Each operator $B_s(U)$ acts as $U$ on the block $\mathrm{span}\{|s\rangle,|s+1\rangle\}$ and as the identity on every other lifted block. Therefore distinct factors act on disjoint blocks and commute. Their product applies $U$ exactly on those blocks indexed by offsets in $S_g$ and applies the identity elsewhere, which is precisely the definition of $CU_R^g(U)$.
\end{proof}

\begin{corollary}[Generic Upper Bound]\label{cor:generic-upper-bound}
Let $D_{\mathrm{blk}}(U)$ be the depth needed to realize one block-selective gate $B_s(U)$ in a chosen local control model. Then
\[
    D_g \leq |S_g|\,D_{\mathrm{blk}}(U) \leq 2^K D_{\mathrm{blk}}(U)
\]
Hence arbitrary Boolean fan-in always admits an explicit same-qudit synthesis, although in the worst case its local target cost is exponential in $K$.
\end{corollary}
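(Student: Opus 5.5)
The plan is to derive Corollary~\ref{cor:generic-upper-bound} directly from the factorization in Proposition~\ref{prop:truth-table-synthesis}, using a sequential-composition bound on depth. By Proposition~\ref{prop:truth-table-synthesis}, $CU_R^g(U)=\prod_{s\in S_g}B_s(U)$. This product has exactly $|S_g|$ factors. The factors commute, so their order is irrelevant. In the chosen local control model, each factor $B_s(U)$ can be realized in depth $D_{\mathrm{blk}}(U)$, by the definition of $D_{\mathrm{blk}}$.

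All factors act on the same single target qudit, so I will not try to parallelize them. I will simply concatenate the $|S_g|$ realizations in any order. Depth is subadditive under sequential composition, so the resulting circuit has depth at most $|S_g|\,D_{\mathrm{blk}}(U)$. The circuit implements $CU_R^g(U)$ exactly. Since $D_g$ is defined in Definition~\ref{def:local-boolean-target-cost} as the minimal synthesis depth, this gives $D_g\le |S_g|\,D_{\mathrm{blk}}(U)$.

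For the second inequality, I note that $S_g\subseteq\mathcal{S}_K$. By Proposition~1 (Uniqueness of Decoding), the map $(x_1,\dots,x_K)\mapsto\sum_k x_k\Delta_k$ is injective. Hence $|\mathcal{S}_K|=2^K$, and therefore $|S_g|\le 2^K$. Combining the two inequalities yields the full chain.

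Two edge cases remain. The first is when $S_g=\emptyset$, i.e.\ $g\equiv 0$. Then the product is the identity and $D_g=0$, which is consistent with the bound. The second is the claim of explicit synthesis. This follows because each block-selective gate is explicitly defined in \eqref{eqn:blockunitary} and assumed realizable, so the product itself is an explicit construction.

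The worst case $g\equiv 1$ gives $|S_g|=2^K$, which makes the exponential remark concrete. Note that $U$-on-every-block is not the same as a global $U$ here, because the identity is still required on $\mathcal{L}_K^\perp$. The upper bound therefore still refers to the naive product.

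The only subtle step is justifying that depths add. This requires that $D_{\mathrm{blk}}(U)$ be a uniform bound over all $s$, which I will read into the definition. It also requires that the local control model be closed under sequential composition, which holds for any circuit model.
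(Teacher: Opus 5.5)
Your proposal is correct and follows the paper's own argument. You apply the truth-table factorization $CU_R^g(U)=\prod_{s\in S_g}B_s(U)$, realize the $|S_g|$ commuting block-selective factors one after another so that their depths add, and bound $|S_g|\le|\mathcal{S}_K|=2^K$; your extra remarks on uniformity of $D_{\mathrm{blk}}(U)$ over $s$, the empty case, and reading the exponential claim as a statement about this naive upper bound (not a lower bound) spell out what the paper's one-line proof leaves implicit.
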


\begin{proof}
Apply Proposition~\ref{prop:truth-table-synthesis} and synthesize each commuting block-selective factor separately.
\end{proof}

\begin{proposition}[Description-Complexity Limitation]\label{prop:description-complexity-limitation}
No uniform synthesis method can realize every Boolean-routed primitive $CU_R^g(U)$ with a target-description size polynomial in $K$ alone. In particular, an unrestricted family of Boolean functions necessarily requires exponential specification in the worst case.
\end{proposition}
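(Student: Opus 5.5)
The argument will be a counting (information-theoretic) one, using the map $g \mapsto CU_R^g(U)$ introduced in Definition~\ref{def:boolean-routed-cu}. First we show this map is injective whenever $U \neq I$. By Proposition~\ref{prop:boolean-target-action}, on the lifted block indexed by $s=\sum_k x_k\Delta_k$, the operator $CU_R^g(U)$ acts as $U^{g(x_1,\dots,x_K)}$. Exact decoding in the routing algebra (Uniqueness of Decoding) puts the offsets $s\in\mathcal{S}_K$ in bijection with $\{0,1\}^K$. So if $g\neq g'$, there is an offset $s$ on which one operator acts as $I$ and the other as $U\neq I$, and hence $CU_R^g(U)\neq CU_R^{g'}(U)$. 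Equivalently, Proposition~\ref{prop:truth-table-synthesis} shows that the set $S_g$, which is exactly the truth table of $g$, is recoverable from the operator. It follows that the family $\{CU_R^g(U)\}_g$ contains exactly $2^{2^K}$ distinct unitaries.

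Next we make the phrase ``uniform synthesis method with target-description size'' precise. We model it as a fixed (deterministic) decoder that takes a binary description string $\sigma$ and outputs a circuit, or a target primitive, realizing an operator. If every $CU_R^g(U)$ is realized from some description of length at most $p(K)$, the decoder induces a surjection from the set of strings of length at most $p(K)$ onto the $2^{2^K}$ distinct operators. Since there are fewer than $2^{p(K)+1}$ such strings, we need $2^{p(K)+1}\ge 2^{2^K}$, that is, $p(K)\ge 2^K-1$. This is impossible for any polynomial $p$ once $K$ is large. The worst-case statement then follows because some $g$ must receive a description of length at least $2^K-1$. A slightly sharper counting, which uses the fact that the fraction of strings shorter than $2^K-c$ is at most $2^{-c}$, shows that in fact most $g$ require near-$2^K$ description.

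Finally we remark that the bound is tight up to constants. The truth-table synthesis of Proposition~\ref{prop:truth-table-synthesis} uses exactly the $2^K$-bit indicator of $S_g$ as its description, which matches the upper bound in Corollary~\ref{cor:generic-upper-bound}.

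The main obstacle is not the counting itself but formalizing ``uniform synthesis method'' so the pigeonhole argument applies cleanly. In particular, we must exclude descriptions that encode $g$ implicitly through continuous gate parameters, which have unbounded precision. I would handle this by requiring finite-precision or discrete gate labels drawn from a fixed finite alphabet, so that a description of a circuit with $m$ gates costs $O(m)$ symbols. Under this convention the same argument also yields a lower bound on the circuit size of $\Omega(2^K/\log(\text{alphabet size}))$ in the worst case. We also need the nondegeneracy hypothesis $U\neq I$, noting that it holds for the case of interest $U=X$; for $U=I$ the statement is vacuous.
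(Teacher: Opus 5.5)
Your proposal is correct and uses the same pigeonhole count as the paper: $2^{2^K}$ Boolean functions against at most $2^{\mathrm{poly}(K)}$ descriptions. It also adds the injectivity of $g\mapsto CU_R^g(U)$ for $U\neq I$, which the paper leaves implicit when it treats counting Boolean functions as counting distinct target primitives; one small wording fix is that for $U=I$ the claim about primitives is not vacuous but false, since every $CU_R^g(I)$ is the identity and needs only an $O(1)$ description, and this is exactly why your nondegeneracy hypothesis is needed.
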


\begin{proof}
There are $2^{2^K}$ Boolean functions on $K$ input bits. Any family of synthesis descriptions using only $\mathrm{poly}(K)$ bits can represent at most $2^{\mathrm{poly}(K)}$ distinct target primitives. Since $2^{\mathrm{poly}(K)} \ll 2^{2^K}$ for large $K$, such descriptions cannot cover every Boolean function. Therefore arbitrary $g$ cannot, in general, be specified or synthesized uniformly with only polynomial target description length.
\end{proof}

\begin{remark}
This is why structured Boolean families matter. Spectral routing localizes fan-in to a local primitive, but it does not eliminate the intrinsic complexity of an arbitrary truth table.
\end{remark}

\subsection{Structured Boolean Families}

Let $J \subseteq \{1,\dots,K\}$ be a set of active control buses.

\begin{definition}[Restricted Boolean Families]\label{def:restricted-boolean-families}
For $x=(x_1,\dots,x_K)\in\{0,1\}^K$, define
\[
\wedge_J(x)=\prod_{j\in J} x_j,
\]
\[
\vee_J(x)=
\begin{cases}
1, & \text{if } \sum_{j\in J} x_j \geq 1,\\
0, & \text{otherwise},
\end{cases}
\]
and, for any threshold $t \in \{1,\dots,|J|\}$,
\[
\Theta_{t,J}(x)=
\begin{cases}
1, & \text{if } \sum_{j\in J} x_j \geq t,\\
0, & \text{otherwise}.
\end{cases}
\]
Finally, define the parity rule $\oplus_J(x)=\left(\sum_{j\in J} x_j\right)\bmod 2$.
\end{definition}

\begin{corollary}[Restricted Routed Families]\label{cor:restricted-routed-families}
The target primitives
\[
\begin{aligned}
CU_R^{\wedge_J}(U), \qquad
CU_R^{\vee_J}(U),\\
CU_R^{\Theta_{t,J}}(U), \qquad
CU_R^{\oplus_J}(U)
\end{aligned}
\]
define swap-free routed conjunction, disjunction, threshold, and parity fan-in gates, respectively.
\end{corollary}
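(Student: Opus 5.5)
The corollary is a direct instantiation of Theorem~\ref{thm:swap-free-multicontrol}, together with a check of what each Boolean rule computes. The plan is to verify that each of $\wedge_J$, $\vee_J$, $\Theta_{t,J}$ and $\oplus_J$ is a well-defined function $\{0,1\}^K\to\{0,1\}$. This holds by Definition~\ref{def:restricted-boolean-families}: each output is $0$ or $1$, and the product, threshold indicator and mod-$2$ sum depend only on the coordinates in $J$. Consequently each one is an admissible input to Definition~\ref{def:boolean-routed-cu}, and $CU_R^{g}(U)$ is a well-defined unitary. Unitarity follows because it is a direct sum of the blocks $U^{g(\cdot)}\in\{I,U\}$ with the identity on the complement.

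Next, for each rule $g$ I will invoke Proposition~\ref{prop:boolean-target-action} to obtain the blockwise action
\[
CU_R^{g}(U)\ket{s+y}=\sum_{b}\bigl(U^{g(x_1,\dots,x_K)}\bigr)_{by}\ket{s+b}.
\]
I will then read off the semantics of each case:
\begin{itemize}
\item for $\wedge_J$, $U$ is applied iff every bus in $J$ is active;
\item for $\vee_J$, iff at least one bus in $J$ is active;
\item for $\Theta_{t,J}$, iff at least $t$ buses in $J$ are active;
\item for $\oplus_J$, iff an odd number of buses in $J$ are active.
\end{itemize}
These are exactly the conjunction, disjunction, threshold and parity fan-in semantics. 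Two consistency checks are worth noting. The case $J=\{k,\ell\}$ with $U=X$ recovers Corollary~\ref{cor:toffoli-type-conjunction}. The case $|J|=1$ recovers Corollary~\ref{cor:single-bus-recovery}.

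To justify the qualifier ``swap-free routed'', I will apply Theorem~\ref{thm:swap-free-multicontrol}, or equivalently Corollary~\ref{cor:routed-boolean-fanin}. This requires the controls indexed by $J$ to be delivered to the target on distinct buses by the $CBL/BCP$ forward stage. Under that delivery, the target primitive followed by the reverse cleanup yields $U^{g(x)}\ket{y}$ and restores the intermediate nodes. Buses outside $J$ may be inactive or carry unrelated routes, and the rules ignore them because they do not depend on those coordinates.

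I expect no real obstacle here. The only point needing care is the hypothesis that the $|J|$ controls arrive on distinct buses, which requires $|J|\le K$ and the no-aliasing condition $d\ge 2^{K+1}$. I will state this explicitly as inherited from the theorem rather than re-prove it.
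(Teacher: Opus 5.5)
Your proposal is correct and takes the paper's route: the paper's proof observes only that each of $\wedge_J$, $\vee_J$, $\Theta_{t,J}$, $\oplus_J$ is a valid map $\{0,1\}^K\to\{0,1\}$ and then invokes Corollary~\ref{cor:routed-boolean-fanin}, which is Theorem~\ref{thm:swap-free-multicontrol}. Your extra steps add nothing the argument needs; they only spell out the ``respectively'' identification that the paper leaves implicit. These are the semantic unpacking via Proposition~\ref{prop:boolean-target-action} and the consistency checks against Corollaries~\ref{cor:toffoli-type-conjunction} and~\ref{cor:single-bus-recovery}.
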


\begin{proof}
Each listed Boolean rule is a valid map from $\{0,1\}^K$ to $\{0,1\}$, so the result follows directly from Corollary~\ref{cor:routed-boolean-fanin}.
\end{proof}

\begin{remark}
These families are practically important because conjunction corresponds to Toffoli-type logic, disjunction captures ``any control active'' behavior, threshold rules interpolate between them, and parity captures XOR-type fan-in. In local control models where reversible Boolean logic on the lifted target subspace can be synthesized with polynomial overhead in $|J|$, the corresponding routed implementations have total depth
\[
2L + \mathrm{poly}(|J|),
\]
with the polynomial term absorbing both the local target-synthesis cost and any fan-in-dependent additive scheduling overhead near the target.
\end{remark}

\subsection{Parallel Fan-In Capability and Limits}

\begin{proposition}[What the Construction Provides]\label{prop:what-the-construction-provides}
Once several routed controls have been delivered to the same target qudit in the spectral form $\ket{s+y}$ with $s=\sum_{k=1}^{K} x_k \cdot\Delta_k$, the remaining fan-in computation is entirely local: it depends only on the target-side primitive $CU_R^g(U)$ and is independent of the original source locations.
\end{proposition}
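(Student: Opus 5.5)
The approach is to show that, once the target qudit is in the lifted form, the remaining action is a fixed operator that involves only the target qudit. The key is that $CU_R^g(U)$, as given in Definition~\ref{def:boolean-routed-cu}, is a single-site operator. Its matrix elements depend only on the target basis label $s+b$, through the decoded digits $f_1(s),\dots,f_K(s)$, and on $g$ and $U$. No path data, source vertex or path length appears in it. So the statement amounts to checking that the delivered state already contains all the information $CU_R^g(U)$ needs, and that the output is determined by that information.

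First, I would invoke Proposition~\ref{prop:boolean-target-action} on the lifted basis state $\ket{s+y}$ with $s=\sum_k x_k\Delta_k$. Uniqueness of Decoding gives $f_k(s)=x_k$, so the output is $\sum_b (U^{g(x_1,\dots,x_K)})_{by}\ket{s+b}$. This is a function of $(x_1,\dots,x_K,y)$, $g$ and $U$ only.

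Second, I would make independence from the sources precise. Consider two delivery schedules with different source sets, paths $P_j$ and bus assignments that deliver the same tuple $(x_1,\dots,x_K)$ to the target. They produce the same target basis label, because the label is the binary encoding of the tuple, and injectivity of the encoding means it carries nothing else. By linearity, the same holds for superpositions: on the joint state, the target primitive acts as $I_{\text{rest}}\otimes CU_R^g(U)$, where the identity acts on all non-target qudits. The fan-in step therefore commutes with any relabeling or relocation of sources that preserves the delivered target state. Cleanup restores the lifted target block to the computational basis, and Theorem~\ref{thm:swap-free-multicontrol} shows that cleanup does not touch the logical content.

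The main obstacle is an entanglement subtlety. Before cleanup, the target is entangled with the path qudits, since they carry the same bus digits. One has to argue that the target primitive nevertheless acts locally. I would handle this by writing the joint state as $\sum_{x,y}\alpha_{x,y}\ket{\text{path}(x)}\otimes\ket{s(x)+y}$ and applying $CU_R^g(U)$ term by term. Because it is diagonal in $s$, it never alters the path correlations, so the path branch labels are preserved exactly and reverse cleanup proceeds as in the single-bus case.
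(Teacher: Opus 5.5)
Your proposal is correct and takes essentially the same approach as the paper: source information survives only through the decoded digits $f_k(s)=x_k$, so $CU_R^g(U)$ acts as a single-site operator whose output depends only on $(x_1,\dots,x_K,y)$, $g$ and $U$, not on the source locations. The paper's proof is two lines; your linearity argument adds the step it leaves implicit, namely that $CU_R^g(U)$ preserves the routing label $s$ and therefore leaves the target--path correlations intact for cleanup.
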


\begin{proof}
After delivery, the source information appears only through the extracted bits $f_k(s)=x_k$. The target action and the reverse cleanup stage therefore depend only on the local lifted target state and the existing routed offsets, not on the geometric positions of the original controls.
\end{proof}

\begin{remark}
This is the conceptual gain of the extension: spectral routing solves distributed delivery, while Boolean-routed target synthesis solves the local aggregation problem at the destination.
\end{remark}

\begin{remark}
What is not proved here is a generic fan-in depth bound beating all locality lower bounds for arbitrary schedules. Delivering several controls to a common target may still incur architecture-dependent serialization near the target neighborhood. The result of this section should therefore be read as a separation between transport and local aggregation, not as a claim of unconditional parallel fan-in speedup on every hardware graph.
\end{remark}

\subsection{Fan-In Sanity Checks}

Small-scale simulations support the algebraic fan-in construction. In the ideal Cirq model, exhaustive truth-table and coherent-superposition tests for a three-control routed conjunction agree with the predicted unitary to machine precision, while ideal-limit QuTiP runs give unit fidelity and zero residual routing population. A first noisy two-control QuTiP stress test shows the expected fidelity degradation once higher-level decay, dephasing, and leakage are introduced. Full numerical details and plots are deferred to Appendix~\ref{app:benchmark-details}.

These fan-in checks also expose an implementation detail that matters for interpretation: in a concrete shared-target schedule, the additive depth constant can exceed the schematic $+1$ because the final propagation hops into a common target cannot all be executed simultaneously. This does not change the linear-in-$L$ scaling, but it indicates that the routing algebra and the final hardware schedule should be distinguished.

\subsection{Physical Mapping Outlook}

The Boolean-routed target primitive turns distributed fan-in into a local multilevel control problem. In a superconducting transmon-like platform, the transported buses correspond to higher energy levels of a weakly anharmonic oscillator, and $CU_R^g(U)$ becomes a block-selective same-qudit unitary whose control condition depends on several extracted spectral digits simultaneously. At an abstract level, such a primitive could be synthesized from number-selective pulses, conditional phases, and short ladder sequences that distinguish the relevant lifted blocks. The main challenge is not formal existence but control complexity: multi-level selectivity and coherence of the higher levels determine whether the target cost $D_g$ remains practical.

In trapped-ion platforms, a high-dimensional internal manifold or encoded local subspace can play the role of the spectral register. The routed buses again arrive as locally readable digits of a single encoded target system, and the Boolean-routed target primitive becomes a local reversible computation followed by a conditioned logical rotation on the target block. Here the advantage is stronger local controllability; the open question is how best to compile the resulting multi-level conditional action into experimentally natural pulse sequences.

\begin{remark}
From a hardware-design viewpoint, the algebraic fan-in result should be read as a division of tasks. Spectral routing solves distributed delivery. Physical implementation then reduces to the synthesis of a local multi-level Boolean-controlled target gate.
\end{remark}

\subsection{Numerical Benchmarking and Algorithmic Scaling}\label{sec:numerical-benchmarking}

To validate the ideal guarantees of the routing protocol on small instances, we simulated the architecture in Google Cirq using exact state-vector evolution. Exact validation is reported only for tractable instances satisfying the dimension and state-vector requirements; larger sweeps are interpreted as structural scaling evidence rather than as full-state certification.

The validated ideal runs support the main claims: the routed nonlocal CNOT matches the target unitary to machine precision, overlapping routes show zero crosstalk on the tested instance, and compiled depth follows the expected $2L+1$ pattern while the naive SWAP transport baseline follows $3L$. Full benchmark data and plots are collected in Appendix~\ref{app:benchmark-details}.

\subsection{Compiler-Level Route-Demand Benchmarks}

To test whether the congestion picture persists beyond toy paths, we ran a separate compiler-level route-demand benchmark on four nontrivial circuit families: QFT, regular-graph QAOA, a Grover-style amplitude-amplification circuit, and a mirror-interaction stress test. This choice deliberately overlaps the application set emphasized in recent swapless parity-tracking work~\cite{swapless2024}, namely QFT and QAOA, while adding one additional workload and one congestion-stress family. The methodology was intentionally bias-controlled. For each logical two-qubit layer, routes were chosen only from topology shortest paths, shortest-path ties were sampled uniformly at a fixed seed, and qubit-transpiler reference statistics were reported separately from the routed logical metrics. The amplitude-amplification instance uses one Grover-style iteration with a fixed marked basis state and is decomposed to ordinary one- and two-qubit gates before route extraction. Thus the benchmark measures exactly the claims made here---transport demand, conflict-graph coloring, and routing rounds---without hand-tuning paths in favor of the qudit model.

This also clarifies the scope of the comparison with the recent swapless QFT and QAOA results. Their analysis reports exact qubit-circuit resource counts after an application-specific compilation, whereas our present data target route demand, conflict-graph coloring, and routed transport under fixed shortest-path routing. The two comparisons are therefore not identical gate-for-gate. What can be compared directly is the shared application layer together with the route-demand metric imposed on that layer. On the shared QFT and QAOA families, our data show a smaller routed transport term under this fixed routing metric, while the spectral-bus model provides an additional congestion-round description when several routed interactions overlap.

The linear-chain rows are the direct LNN overlap with that application study. At $n=8$, the QFT line case has mean SWAP transport $19.38$ versus mean routed transport $15.08$, for a ratio of $0.778$, while the QAOA line case has mean SWAP transport $15.69$ versus mean routed transport $12.31$, for a ratio of $0.784$. In the same LNN setting, the measured mean conflict demand is $\chi(\Gamma)=2.15$ for QFT and $\chi(\Gamma)=1.69$ for QAOA, so two spectral buses reduce the mean routing rounds to $1.38$ and $1.00$, respectively. Thus, on the same QFT/QAOA application layer emphasized by the recent SWAP-less paper, the present benchmark yields a smaller routed transport term under the fixed shortest-path-demand metric used here and additionally exposes the congestion-round structure.

The resulting workloads realize the predicted compiler-level separation. On the representative $n=8$ cases in Table~\ref{tab:compiler-route-benchmarks}, the routed logical transport term stays between $0.75$ and $0.89$ of the corresponding SWAP transport baseline, while the measured route-conflict graphs produce mean chromatic numbers between $1.04$ and $4.0$. For $K=2$ buses, the corresponding mean congestion rounds range from $1.0$ to $2.0$. In the most severe mirror-line case, the mean conflict demand is $\chi(\Gamma)=4$ and two spectral buses reduce the mean routing rounds from $4$ to $2$. By contrast, the amplitude-amplification workload produces only mild overlap, with routed-to-SWAP transport ratios of $0.816$ on the line and $0.863$ on the grid at $n=8$, and mean two-bus congestion rounds equal to $1.0$ on both topologies. In parallel, an independent exact validator found zero discrepancies with the formula $R_K(\Gamma)=\lceil \chi(\Gamma)/K \rceil$ on all $1099$ labeled conflict graphs up to five vertices. Detailed benchmark records are collected in Appendix~\ref{app:benchmark-details}.

\begin{table*}[tbp]
\centering
\scriptsize
\begin{tabular}{l l c c c c c}
\hline
Family & Topology & mean SWAP transport & mean routed transport & ratio & mean $\chi(\Gamma)$ & mean $R_2$ \\
\hline
Amplitude ($n=8$) & line (LNN) & 7.42 & 6.06 & 0.816 & 1.04 & 1.00 \\
Amplitude ($n=8$) & grid & 5.67 & 4.89 & 0.863 & 1.04 & 1.00 \\
QFT ($n=8$) & line (LNN) & 19.38 & 15.08 & 0.778 & 2.15 & 1.38 \\
QFT ($n=8$) & grid & 12.46 & 10.46 & 0.840 & 1.85 & 1.23 \\
QAOA ($n=8$) & line (LNN) & 15.69 & 12.31 & 0.784 & 1.69 & 1.00 \\
QAOA ($n=8$) & grid & 8.31 & 7.38 & 0.889 & 1.15 & 1.00 \\
Mirror ($n=8$) & line (LNN) & 48.00 & 36.00 & 0.750 & 4.00 & 2.00 \\
Mirror ($n=8$) & grid & 24.00 & 20.00 & 0.833 & 3.00 & 2.00 \\
\hline
\end{tabular}
\caption{Bias-controlled compiler-level route-demand benchmarks on representative nontrivial circuit families. The rows labeled line (LNN) are the direct geometry overlap with recent swapless QFT/QAOA application benchmarks, while amplitude amplification and the mirror family test additional regimes. For each logical two-qubit layer, routes are assigned using only topology shortest paths, with shortest-path ties sampled uniformly at a fixed seed. The routed transport term is consistently below the corresponding SWAP transport baseline on the same path demand, while the measured conflict graphs realize the predicted two-bus congestion scaling.}
\label{tab:compiler-route-benchmarks}
\end{table*}

\section{Physical Realization, Gate Sets, and Limitations}\label{sec:physical-realization}

\subsection{Qudit Implementations}

Many quantum computing platforms naturally support multi-level systems. In particular, superconducting transmon qubits exhibit multiple accessible energy levels beyond the computational basis \cite{koch2007charge}. Experimental works have demonstrated control over 3–7 energy levels, making them suitable candidates for implementing the routing subspace $\mathcal{H}_R$ \cite{Goss2024, chi2022programmable}. At the level of control primitives, the abstract operators $CBL$ and $BCP$ may be viewed as compiled unitaries rather than literally elementary gates. In a transmon-like platform, a representative realization is to synthesize the required conditional level shift using number-selective or state-selective pulses between neighboring levels, interleaved with controlled phase rotations generated by the native coupling between adjacent devices \cite{heeres2017implementing, smith2022programming}. In this picture, ladder-type selective pulses move amplitude through the relevant level pairs, while conditional phases ensure that the shift is applied only when the control qudit occupies the required computational or bus-resolved state. The precise pulse decomposition is hardware dependent, but the required ingredients are standard qudit-control capabilities: selective addressability of higher transitions together with conditional interaction phases.In superconducting transmons, however, the weak anharmonicity of the energy ladder makes higher transitions increasingly crowded and harder to address selectively as $d$ grows \cite{koch2007charge}. This hardware reality directly constrains the number of reliable routing buses that can be supported in practice. Trapped ions \cite{ringbauer2022universal, shi2025efficient} and photonic systems \cite{wang2018multidimensional} also provide native qudit capabilities, further supporting the feasibility of the proposed model.

\subsection{Advantages of Qudit Routing}

The proposed approach offers three immediate advantages. It removes explicit SWAP transport from the routing primitive and therefore reduces the leading transport overhead. It also permits several routing processes to coexist through distinct routing levels, so the architecture addresses congestion as well as path length. Finally, it does not require additional physical qudits at the routed sites, because the extra routing capacity is drawn from the higher levels of the same local device rather than from separate ancillas.

\paragraph{Non-Destructive Routing on Dense Grids}
A SWAP gate physically exchanges the quantum states of two adjacent nodes. In dense circuit layers, many intermediate qudits already carry computational data whose placement matters for subsequent local gates. Routing a control signal through such a node by SWAP therefore displaces the resident data and generally requires compensating transport, often a reverse SWAP cascade, to restore the original layout before later layers can proceed. The proposed protocol avoids this displacement entirely. By using the higher-energy routing subspace, the forward propagation and reverse cleanup phases act non-destructively on the intermediate computational states while preserving their original locations. The resulting $O(L)$ cleanup cost is therefore the temporal price paid to preserve the spatial data locality of the surrounding algorithm.

\subsection{Noisy QuTiP Benchmarking}

To probe the physical limits of the routing idea beyond the ideal Cirq checks of Sec.~\ref{sec:numerical-benchmarking}, we simulated open-system dynamics in QuTiP using Lindblad evolution and, for the larger archived runs, Monte Carlo trajectories. The detailed datasets are reported in Appendix~\ref{app:benchmark-details}.

The main conclusion is as follows. Under the present higher-level lifetime assumptions, the routed protocol does not outperform the SWAP baseline in fidelity, even though it improves the logical transport term from $3L$ to $2L+1$. At the same time, threshold scans show that this conclusion is conditional rather than fundamental: improving the higher-level coherence times and/or reducing the routed primitive duration eventually opens a regime in which the routed architecture becomes favorable.

\subsection{Limitations}

Despite these advantages, the approach also introduces several challenges. Leakage is an immediate concern: operations involving higher energy levels may populate states outside the intended computational and routing subspaces, and such errors are typically harder to correct than standard qubit errors. Gate fidelity is a second concern, because control over higher energy levels is usually less precise than operations restricted to $\mathcal{H}_C$, so overall performance may degrade unless those transitions are carefully calibrated.
Hardware constraints provide the third limitation.
The number of available routing buses is limited by the qudit dimension $d$. Because the binary encoding requires $d \geq 2^{K+1}$ to support $K$ parallel buses while preserving the logical bit in the least-significant position, the supported bus count scales only logarithmically with dimension: $K \leq \lfloor \log_2 d \rfloor - 1$.
Thus the available routing parallelism is not linear in $d$ but only $O(\log d)$. In particular, a device with only $d=5$ or $d=7$ reliably accessible levels supports at most $K=1$ routing bus. In transmon hardware, weak anharmonicity and reduced pulse selectivity make larger effective values of $d$ progressively harder to exploit without leakage.

Finally, crosstalk between nearby transitions may prevent perfect isolation of the addressed levels, leading to unintended interactions during routing operations.

\subsection{Discussion}

The proposed framework is best viewed as an architectural abstraction that trades increased control complexity for reduced routing overhead. Conceptually, it introduces a routing layer that converts spatial congestion into spectral separation, treats each qudit as a multiplexed routing node that stores local logical data while carrying routed signals in higher levels, enables concurrent non-local routing through orthogonal bus labels rather than additional ancilla hardware, and separates the source-faithful same-qudit physical picture from a cleaner tensor-factor algebraic formalism.

The protocol should therefore be read as a forward-looking architectural proposal rather than as a claim of immediate advantage on current devices. In this framing, the noisy QuTiP study is useful precisely because it maps the regime of hardware parameters where the ideal depth and congestion advantages begin to translate into operational fidelity gains.

\paragraph{Practical Parallelism Ceiling}
The algebraic limit $K \leq \lfloor \log_2 d \rfloor - 1$ is a strict ceiling, but it should not be read as a requirement to pursue arbitrarily large local dimension before the routing idea becomes useful. In a given circuit layer, the number of routing buses actually required is determined by the conflict graph induced by overlapping routing paths: a bus assignment exists whenever that conflict graph can be properly colored with at most $K$ colors, so no constant-$K$ guarantee holds in the worst case. In locality-constrained architectures, however, compiled routing layers may still exhibit modest overlap structure, so even a small number of routing buses can relieve conflicts that would otherwise force serialization or deferred scheduling. We therefore view the logarithmic scaling as a practical congestion-relief mechanism rather than a route to unbounded parallelism: it can provide a small constant-factor increase in effective routing concurrency without requiring additional physical qudits. Conceptually, this is analogous to introducing a small number of virtual channels in classical network routing.

\paragraph{Materials Outlook}
The present QuTiP data makes the next experimental target clear: larger routing capacity will matter only if the higher levels remain coherent enough to survive the forward-propagate, target, and cleanup cycle. Realizing more ambitious capacities such as $d=16$ or $d=32$ therefore calls for platforms with flatter effective anharmonicity, selective access to many neighboring transitions, and long higher-level $T_\phi$ times. Single-Molecule Magnets are especially interesting candidates in this respect, because their large internal spin manifolds naturally provide high local dimension without introducing separate ancilla hardware.

Future work may therefore address error-mitigation strategies for routing subspaces, improved encoding schemes for bus allocation, and hybrid compilation strategies that combine SWAP transport with qudit routing when the hardware regime favors a mixed approach.

\section{Conclusion and Outlook}

We have presented a spectral-routing framework for near-neighbor quantum hardware in which higher qudit levels act as orthogonal buses for nonlocal control delivery. In the ideal model, the construction implements a routed nonlocal controlled operation in $2L+1$ logical routing primitives over a path of length $L$, preserves the resident data on intermediate nodes, and allows overlapping routes to remain exactly separable through bus labels. The same picture extends naturally to Boolean fan-in: once several controls are delivered to a common target, the remaining task is a local same-qudit synthesis problem rather than an additional nonlocal routing problem.

The main separation is therefore not only the reduction from $3L$ transport layers to $2L+1$ routed primitives. At the compiler level, parallelism is limited by graph coloring, and spectral routing increases the number of locally available colors by supplying $K$ reusable buses at one physical site. Quantitatively, the routing-round complexity becomes
\[
R_K(\Gamma)=\left\lceil \frac{\chi(\Gamma)}{K} \right\rceil,
\]
and the smallest $m=2$ hotspot already shows that exact same-node overlap routing can fail in the one-qubit-per-node model while remaining one-round in the qudit model once $d \geq 8$. Within the exact nearest-neighbor unitary exact-overlap model studied here, that is the sense in which local Hilbert-space expansion is necessary for exact congestion relief.

The numerical results clarify the boundary between logic and hardware. Exact Cirq simulations confirm that the routed constructions are correct on the validated instances, including complete cleanup and zero crosstalk in the overlapping-path regime. The QuTiP study gives the complementary message: reduced transport depth does not automatically translate into higher fidelity. Under the present higher-level lifetime assumptions, the routed protocol remains fidelity-inferior to the SWAP baseline, but the threshold scans identify the coherence-speed regime in which the reduced transport depth is expected to become operationally useful.

Separate exact graph checks and compiler-level route-demand benchmarks support the same conclusion on the logical side. The graph validator finds zero discrepancies with the congestion theorem on all tested small conflict graphs, while the QFT, QAOA, amplitude-amplification, and mirror workloads realize the predicted transport and coloring advantages without any path selection beyond shortest-path routing.

Several directions follow immediately from this work. On the theory side, one would like sharper synthesis results for structured Boolean target families and better routing heuristics for conflict-graph-aware bus assignment. On the hardware side, the main challenge is clear from the noisy simulations: spectral routing becomes advantageous only when higher-level control, leakage suppression, and higher-level coherence improve enough to support the forward-propagate, target, and cleanup cycle. In that sense, the proposal is best read as a routing architecture for emerging qudit hardware rather than as a claim about immediate superiority on present-day devices.

\bibliography{main}

\appendix

\section{Formal Guarantees for Routed Control Delivery}

This appendix collects the formal correctness and resource statements for the routed single-control protocol and the basic parallel-routing abstraction.

\subsection{Preliminaries}

Let $G = (V, E)$ be a connected graph representing the hardware architecture, and let $u, v \in V$ be two nodes connected by a path $P = (v_0 = u, v_1, \dots, v_L = v)$ of length $L$. Let $K$ denote the number of simultaneously addressable routing buses, with offsets $\Delta_k = 2^k$ for $k=1,\dots,K$. The largest routing value produced by activating every bus while retaining the computational bit in the least-significant position is $1 + \sum_{k=1}^{K} \Delta_k = 2^{K+1} - 1$. As stated earlier in the no-aliasing assumption, we therefore require $d \geq 2^{K+1}$, or equivalently $d > 1 + \sum_{k=1}^{K} \Delta_k$, so that no modular wrap-around occurs during routing.

\subsection{Correctness of the Routed Nonlocal CNOT}

\begin{theorem}[Swap-Free Non-Local CNOT]\label{thm:swap-free-nonlocal-cnot}
Let $u, v \in V$ be two nodes connected by a path of length $L$. 
Using the routing protocol defined in Sec.~\ref{sec:routing-protocol}, a logical $CX_{(u,v)}$ operation can be implemented without SWAP gates, while restoring all intermediate nodes to their original states.
\end{theorem}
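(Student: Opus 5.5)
The plan is to track computational basis states through the protocol and then extend to arbitrary inputs by linearity. Every gate in the protocol is unitary: $CBL^{\pm\Delta_k}$ and $BCP^{\pm\Delta_k}$ are controlled permutations of basis levels, and $CXR^{\Delta_k}$ is a direct sum of two-level blocks plus the identity. It therefore suffices to show that the full circuit
\[
C = \Bigl(\prod_{i=L-1}^{1} BCP^{-\Delta_k}_{(v_i,v_{i+1})}\Bigr) CBL^{-\Delta_k}_{(v_0,v_1)}\cdot CXR^{\Delta_k}_{v_L}\cdot \Bigl(\prod_{i=L-1}^{1} BCP^{\Delta_k}_{(v_i,v_{i+1})}\Bigr) CBL^{\Delta_k}_{(v_0,v_1)}
\]
maps $|x\rangle_{v_0}|w_1\rangle\cdots|w_{L-1}\rangle|y\rangle_{v_L}$ to $|x\rangle|w_1\rangle\cdots|w_{L-1}\rangle|y\oplus x\rangle$ for all bits $x,y,w_i$. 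Any resident bus labels on other buses can be carried along as a spectator offset. Since the circuit consists only of $CBL$, $BCP$ and $CU_R$ primitives, it contains no SWAP.

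\textbf{Forward stage.} I will prove by induction on $i$ that after $CBL$ and the first $i-1$ $BCP$ gates, node $v_j$ holds $|w_j + x\Delta_k\rangle$ for $1\le j\le i$, and all later nodes are untouched. Here $v_L$ is counted with $w_L=y$.
\begin{itemize}
\item The base case is the definition of $CBL$ in Eq.~\eqref{eqn:busload}, with no wraparound under the no-aliasing assumption.
\item For the step, the Uniqueness of Decoding proposition gives $f_k(w_i + x\Delta_k) = x$. Hence $BCP$ from Eq.~\eqref{eqn:buspropagate} adds exactly $x\Delta_k$ to $v_{i+1}$, by Composability of Bus Labels, because bus $k$ was inactive there.
\end{itemize}

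\textbf{Target step.} By the Key Property of $CU_R^{\Delta_k}$, applying $CXR^{\Delta_k}$ sends $|y + x\Delta_k\rangle$ to $|(y\oplus x) + x\Delta_k\rangle$. It changes no other node.

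\textbf{Cleanup.} This is the step where I expect the main obstacle. Cleanup uses the control values left by the forward stage, but the target node's low bit has changed. The approach is as follows.
\begin{itemize}
\item Observe that each inverse gate $BCP^{-\Delta_k}_{(v_i,v_{i+1})}$ reads only $f_k$ of node $v_i$. Node $v_i$ is not the target for $i<L$, and its bus-$k$ digit is still $x$ at the moment the gate acts, because gates are applied from the far end first, so $v_i$ is cleaned only after it has served as a control.
\item Since $f_k$ ignores the least-significant bit, the altered target bit $y\oplus x$ does not affect any control value. Reversibility of Routing Updates then gives that subtracting $x\Delta_k$ restores each $v_{i+1}$ to $|w_{i+1}\rangle$, and the target to $|y\oplus x\rangle$.
\item Finally, $CBL^{-\Delta_k}$ from Eq.~\eqref{eqn:inversebusload} is controlled by the untouched source bit $x$ and clears $v_1$.
\end{itemize}
I would state the ordering invariant explicitly: at each cleanup step, the control node still carries offset $x\Delta_k$ on bus $k$, and the target of that step still carries its offset. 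That invariant is the heart of the argument.

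Combining the three stages gives the basis-state map. By linearity it extends to arbitrary superpositions of the resident states, so the circuit acts as $CX_{(u,v)}$ on the logical subspace and as the identity on the logical states of all intermediate nodes.
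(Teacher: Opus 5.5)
Your argument is correct and is essentially the paper's proof: basis-state tracking, the forward induction via $f_k(w_i+x\Delta_k)=x$, the Key Property of $CXR^{\Delta_k}$ at the target, and the cleanup invariant that each inverse gate's control node still carries bus $k$ when it acts (which you justify more explicitly than the paper does). One slip needs fixing in your displayed $C$: both stages are written as $\prod_{i=L-1}^{1}$ and composed right to left, so the cleanup applies $CBL^{-\Delta_k}_{(v_0,v_1)}$ first and then $BCP^{-\Delta_k}_{(v_1,v_2)},\dots$ from the source end, which is exactly the order that fails; it should read $CBL^{-\Delta_k}_{(v_0,v_1)}\prod_{i=1}^{L-1}BCP^{-\Delta_k}_{(v_i,v_{i+1})}$, i.e.\ $C=U_{\mathrm{fwd}}^{-1}\,CXR^{\Delta_k}\,U_{\mathrm{fwd}}$, which agrees with your prose (far end first, $CBL^{-\Delta_k}$ last).
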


\begin{proof}
Let the initial computational values along the path be $x \in \{0,1\}$ at the source node $v_0 = u$, values $w_i \in \{0,1\}$ at the intermediate nodes $v_i$ for $i=1,\dots,L-1$, and $y \in \{0,1\}$ at the target node $v_L = v$. The initial state is therefore
\[
|\Psi_0\rangle=|x\rangle_{v_0}\bigotimes_{i=1}^{L-1} |w_i\rangle_{v_i}|y\rangle_{v_L}
\]
and every qudit lies in the computational block $\{|0\rangle, |1\rangle\}$ of the same physical qudit architecture.

\paragraph{Forward propagation}

The first gate in the forward stage is the initial lift $CBL_{(v_0,v_1)}^{\Delta_k}$, which sends node $v_1$ to $|w_1 + x\Delta_k\rangle$.

Now consider a subsequent edge $(v_t,v_{t+1})$ with $t=1,\dots,L-1$. If node $v_t$ is in the state $|w_t + x\cdot\Delta_k\rangle$, then because $w_t \in \{0,1\}$ we have $f_k(w_t + x\cdot\Delta_k) = x$.
Hence the propagation gate $BCP_{(v_t,v_{t+1})}^{\Delta_k}$ adds the same shift $x\cdot\Delta_k$ to node $v_{t+1}$.

By induction along the path, after the full forward stage we obtain
\[
|\Psi_{\mathrm{fwd}}\rangle=|x\rangle_{v_0}\bigotimes_{i=1}^{L-1} |w_i + x \cdot \Delta_k\rangle_{v_i}|y + x \cdot \Delta_k\rangle_{v_L}
\]
Thus every node on the selected route preserves its original least-significant logical bit while acquiring the same bus-$k$ label in the higher levels.

\paragraph{Target interaction}

At the target node, the lifted state belongs to the two-level block indexed by the routing offset $s = x\cdot\Delta_k$. Since $f_k(s) = x$, the routing-controlled NOT acts as $CXR^{\Delta_k}|y + x \cdot \Delta_k\rangle_{v_L} = |y \oplus x + x \cdot \Delta_k\rangle_{v_L}$, which flips the target logical bit if and only if the original control bit was $1$.

\paragraph{Cleanup}

Finally, apply the inverse cleanup sequence
\[
    BCP_{(v_{L-1},v_L)}^{-\Delta_k}\cdots BCP_{(v_1,v_2)}^{-\Delta_k} CBL_{(v_0,v_1)}^{-\Delta_k}.
\]
At each step, the source qudit still contains the same bus-$k$ component $x$, so the inverse gate subtracts $x\Delta_k$ from the next node. Therefore all intermediate nodes return from $|w_i + x\cdot\Delta_k\rangle$ to $|w_i\rangle$, and the target returns from $|y \oplus x + x\cdot\Delta_k\rangle$ to $|y \oplus x\rangle$.

\paragraph{Conclusion}

\[
|\Psi_f\rangle=|x\rangle_{v_0}
\bigotimes_{i=1}^{L-1} |w_i\rangle_{v_i}|y \oplus x\rangle_{v_L}
\]
which is exactly the action of a nonlocal $CX_{(u,v)}$ gate on the same-qudit higher-level routing model.
\end{proof}

\subsection{Complexity Bounds}

\begin{proposition}[Gate Complexity]\label{prop:gate-complexity}
The routing protocol implements a nonlocal CNOT over a path of length $L$ using $2L + 1$ primitive operations.
\end{proposition}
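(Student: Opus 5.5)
The plan is a direct count of the gates appearing in the protocol of Sec.~\ref{sec:routing-protocol}. Correctness is already supplied by Theorem~\ref{thm:swap-free-nonlocal-cnot}, so the only task here is bookkeeping over the three stages.

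\emph{Forward stage.} The forward stage consists of the initial lift $CBL_{(v_0,v_1)}^{\Delta_k}$ followed by $BCP_{(v_i,v_{i+1})}^{\Delta_k}$ for $i=1,\dots,L-1$. This is one gate per edge of the path $P=(v_0,\dots,v_L)$. Since $P$ has exactly $L$ edges, the stage contributes $1+(L-1)=L$ primitives.

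\emph{Target and cleanup stages.} The target interaction is the single primitive $CXR^{\Delta_k}=CU_R^{\Delta_k}(X)$, contributing $1$. The cleanup sequence $BCP_{(v_{L-1},v_L)}^{-\Delta_k}\cdots BCP_{(v_1,v_2)}^{-\Delta_k}CBL_{(v_0,v_1)}^{-\Delta_k}$ again places one inverse gate on each edge, contributing $L$. Summing the three stages gives $L+1+L=2L+1$.

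The one point that needs care is what counts as a single primitive. Following the compilation assumption of Sec.~\ref{sec:formalism}, each of $CBL^{\pm\Delta_k}$, $BCP^{\pm\Delta_k}$ and $CU_R^{\Delta_k}(U)$ is one logical routing primitive. I will note that each is a well-defined unitary: a controlled permutation of the basis, or a block-direct-sum unitary. The inverses in (\ref{eqn:inversebusload}) and (\ref{eqn:inversebuspropagate}) are primitives of the same type, so no step hides extra operations. I will also note the degenerate case $L=1$: the forward stage is just the $CBL$ gate, and the count $2L+1=3$ still holds.

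Finally, I will remark that no SWAPs appear, and that the count is independent of $K$ and of the bus index $k$. This is consistent with the $2L+1$ versus $3L$ comparison stated in the complexity analysis.
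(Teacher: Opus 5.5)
Your proposal is correct and takes the same approach as the paper: a stage-by-stage count of $L$ forward gates (one $CBL$ plus $L-1$ $BCP$'s), one target primitive $CXR^{\Delta_k}$, and $L$ inverse cleanup gates, giving $2L+1$. Your extra remarks (unitarity of each primitive, the $L=1$ case, independence from $K$) are harmless additions that the count does not need.
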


\begin{proof}
The protocol consists of $L$ forward propagation steps, one target interaction, and $L$ reverse cleanup steps, yielding a total of $2L + 1$ operations.
\end{proof}

\begin{proposition}[Depth]\label{prop:protocol-depth}
The circuit depth of the protocol is $2L + O(1)$, and hence $O(L)$, under sequential execution.
\end{proposition}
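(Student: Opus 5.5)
The plan is to read the depth directly off the gate sequence fixed by the protocol in Sec.~\ref{sec:routing-protocol} and by Proposition~\ref{prop:gate-complexity}. I would list the $2L+1$ primitives in execution order: $g_1=CBL^{\Delta_k}_{(v_0,v_1)}$, then $g_{t+1}=BCP^{\Delta_k}_{(v_t,v_{t+1})}$ for $t=1,\dots,L-1$, then the target gate $CXR^{\Delta_k}$ on $v_L$, and finally the cleanup gates in reverse order. The depth is then the length of the longest chain of gates in which each gate must wait for an earlier one because it acts on a shared qudit or depends logically on it.

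\textbf{Upper bound.} Placing each primitive in its own layer gives depth at most $2L+1=2L+O(1)$. This uses the compilation assumption of Sec.~\ref{sec:formalism}, under which each routing primitive counts as a constant-cost logical layer.

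\textbf{Lower bound.} I would show that sequential execution is essentially forced, so that the $O(1)$ cannot hide a sublinear term. In the forward stage, consecutive gates $g_t$ and $g_{t+1}$ share qudit $v_t$. More importantly, $g_{t+1}$ reads $f_k$ on $v_t$, which is correct only after $g_t$ has written the bus label; this is the inductive step in the proof of Theorem~\ref{thm:swap-free-nonlocal-cnot}. The forward stage is therefore a dependency chain of length $L$. The target gate depends on $BCP_{(v_{L-1},v_L)}$. Each inverse gate $BCP^{-\Delta_k}_{(v_t,v_{t+1})}$ also needs $v_t$ to still carry bus $k$, and consecutive cleanup gates again share a qudit, so the cleanup stage is another chain. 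Under the sequential-execution convention stated in the proposition, the depth is then $\Theta(L)$, and in particular $2L+O(1)$. The bound $O(L)$ follows immediately.

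\textbf{Main obstacle.} The delicate point is what ``depth'' means and how the cleanup order interacts with parallelism. The printed cleanup order starts with $BCP^{-\Delta_k}$ on the last edge and then moves backward, so it must undo the propagation from the far end first. Otherwise a node would lose its label before its successor reads it. This agrees with the statement in Theorem~\ref{thm:swap-free-nonlocal-cnot} that the source of each inverse gate still carries bus $k$.

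I would handle this by fixing the reverse-path order explicitly, checking that each inverse gate fires while its control still carries the label, and noting that the claim is stated only up to $O(1)$ under sequential execution. With that convention, no finer scheduling argument is needed.
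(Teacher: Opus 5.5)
Your proposal is correct and follows the paper's proof, which counts $L$ sequential forward steps, one target step, and $L$ sequential cleanup steps to get $2L+O(1)$. Your extra observation, that consecutive gates in the sequence share a qudit (including the target gate and its two neighbours on $v_L$), turns the sequential-execution assumption into a fact and pins the depth at exactly $2L+1$. That exact value is the conclusion you should state, rather than ``$\Theta(L)$, and in particular $2L+O(1)$,'' where the implication runs the wrong way.
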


\begin{proof}
The forward stage contributes $L$ sequential routing steps, the target interaction contributes one constant-depth step, and the cleanup stage contributes another $L$ sequential routing steps. Thus the total depth is $2L + O(1)$.
\end{proof}

\subsection{Parallel Routing Capacity}

\begin{theorem}[Bus Parallelism]\label{thm:bus-parallelism}
Let the available routing buses be $\Delta_k = 2^k$ for $k = 1,\dots,K$, and suppose that $d \geq 2^{K+1}$. Then up to $K$ independent nonlocal routing operations can share nodes while remaining algebraically separable by their bus labels. In particular, overlapping routes assigned distinct buses do not suffer logical mixing at the level of the ideal routing-value model.
\end{theorem}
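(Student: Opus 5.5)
The plan is an induction over the sequence of routing gates. The inductive invariant is that at every point of the schedule, every node's basis value lies in $\mathcal{R}_K$ and has the form $w + \sum_{k\in A} x^{(k)}\Delta_k$. Here $w\in\{0,1\}$ is the node's resident logical bit, $A\subseteq\{1,\dots,K\}$ is the set of buses whose routes currently pass through that node, and $x^{(k)}$ is the control bit of the route assigned to bus $k$.

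Take $m\le K$ routed operations with paths $P_1,\dots,P_m$ and an injective bus assignment $b$, in the sense of the routing protocol instance definition. Since $b$ is injective, each bus index occurs at most once at any node. Fix computational basis inputs first; linearity then extends every statement to arbitrary superpositions, because all primitives are basis permutations or blockwise unitaries.

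The invariant is preserved by each gate type, checked in turn.
\begin{itemize}
\item A gate $CBL^{\Delta_k}$ or $BCP^{\Delta_k}$ belonging to route $j$ with $b(j)=k$ acts on a node whose bus-$k$ digit is currently $0$. That digit is zero because bus $k$ is used only by route $j$, and route $j$ lifts each node exactly once during its forward stage. By the Composability proposition, adding $x^{(j)}\Delta_k$ sets digit $k$ to $x^{(j)}$ and leaves all other digits unchanged.
\item For $BCP$, the control value is read as $f_k(r)$ at the previous node. By the invariant and Uniqueness of Decoding, $f_k(r)=x^{(j)}$ regardless of which other buses are present there.
\item The cleanup gates $BCP^{-\Delta_k}$ and $CBL^{-\Delta_k}$ subtract exactly the offset that was added, by the Reversibility proposition. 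The controlling digit is still $x^{(j)}$ at that moment, since no other route touches digit $k$.
\item Target primitives $CU_R^{\Delta_k}(U)$ act only on the least significant bit, conditioned on $f_k(s)$ (Key Property). They preserve $s$ and therefore every other route's digits.
\end{itemize}
The no-aliasing assumption $d\ge 2^{K+1}$ ensures that no addition wraps modulo $d$, so all values stay inside $\mathcal{R}_K$ and decoding remains exact.

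From the invariant, separability follows directly. At any shared node, $f_{b(j)}$ recovers route $j$'s control bit and $f_0$ recovers the resident bit. Each route's target action therefore depends only on its own control, and the argument of Theorem~\ref{thm:swap-free-nonlocal-cnot} applies route by route, giving $\bigotimes$-wise correct CNOTs and full cleanup. This establishes "no logical mixing".

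The hard step is the interleaving of gates from different routes, since the proof must not assume that the routes run sequentially. I would handle it by observing that any two primitives belonging to different routes commute on $\mathcal{R}_K$. Each one is a controlled addition, or a block action, on a distinct binary digit, conditioned only on its own digit or on the computational bit. Because these maps act on disjoint digits, they commute whenever no wrap-around occurs. Hence any layered schedule $\Sigma$ is equivalent on $\mathcal{R}_K$ to executing the routes one after another, and correctness reduces to the single-route case.

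One subtlety needs care: a target or source node of one route may be an intermediate node of another. In that case a target $CU_R$ flips the $x_0$ digit that another route's $CBL$ reads as its control. I would either require that no node be both a target of one route and the source of another route in the same layer, or else note that the resulting behavior is then just the intended sequential ordering.
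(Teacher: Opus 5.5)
Your argument is sound in substance but follows a different line from the paper's. The paper's proof is purely static. It \emph{takes as given} that at a shared node the value has the form $r=x_0+\sum_k x_k 2^k$. It then observes $0\le r\le 2^{K+1}-1<d$, so nothing wraps, and reads off $f_k(r)=x_k$; all questions of gate ordering are explicitly deferred to Theorem~\ref{thm:parallel-correctness}. You instead prove, by induction over an arbitrary global gate sequence, that this form is actually maintained. The key point is that the digit a route writes is $0$ before every nonzero forward addition and $1$ before every nonzero inverse subtraction, so no carry or borrow ever occurs. You then go on to end-to-end CNOT correctness, which effectively merges this theorem with Theorem~\ref{thm:parallel-correctness}. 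The paper's version is shorter and matches the narrow algebraic claim exactly. Yours justifies the premise the paper simply asserts, and it exposes scheduling hazards that a static decoding argument cannot see.

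Three points need fixing.

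\textbf{The commutation claim.} It is false on all of $\mathcal{R}_K$, and the obstruction is carries, not wrap-around. With $d=8$, adding $\Delta_1$ to $3$ gives $5$, which flips digit $2$ without any wrap. The claim holds only on states satisfying your invariant. Since your induction already ranges over an arbitrary interleaving of routes, you do not need the commutation reduction at all; drop it.

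\textbf{Extending $CBL$.} The paper defines $CBL^{\Delta_k}$ only for source values in $\{0,1\}$. When a source is an intermediate node of another route, you must extend it to condition on $f_0(r)$, and you should say so explicitly.

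\textbf{The fallback in your last paragraph.} The second option is wrong. Suppose route $\ell$'s target gate flips route $j$'s source between $CBL^{\Delta_k}$ and $CBL^{-\Delta_k}$. Then the cleanup subtracts the wrong offset. For example, if $x_j$ flips from $0$ to $1$, the first hop (if otherwise unlifted) is left at $(w-\Delta_k)\bmod d$. For $d=2^{K+1}$ this equals $w+\sum_{i=k}^{K}2^i$, an admissible-looking value with spurious bus digits set, so other routes are contaminated. No sequential ordering reproduces this. You need the first option, or an explicit schedule constraint that places such a target gate entirely outside route $j$'s lift/unlift window.
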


\begin{proof}
At any shared node, simultaneous routing operations produce the aggregate routing value
\[
    r = x_0 + \sum_{k=1}^{K} x_k 2^k, \qquad x_0, x_k \in \{0,1\}
\]
Because $0 \leq r \leq 2^{K+1} - 1 < d$, no modular wrap-around occurs. Moreover, the $k$-th routed control bit is recovered exactly by
\[
    f_k(r) = \left\lfloor \frac{r}{2^k} \right\rfloor \pmod 2 = x_k
\]
Hence the contribution of each bus remains separable inside the combined routing value, and overlapping paths do not corrupt one another at the level of bus-label algebra.
\end{proof}

This independence statement holds at the level of the ideal unitary model and establishes algebraic multiplexing rather than a full hardware-scheduling claim. In physical implementations, leakage, crosstalk, and other cross-level errors can spoil exact bus independence; these limitations are discussed in Sec.~\ref{sec:physical-realization}. The corresponding scheduled-execution statement is addressed separately in Theorem~\ref{thm:parallel-correctness}.

\begin{theorem}[Parallel Correctness]\label{thm:parallel-correctness}
Let $\mathcal{I} = \left(\{P_j\}_{j=1}^{m}, b, \Sigma \right)$ be a routing protocol instance. Assume that each path $P_j$ is executed using the routing protocol of Sec.~\ref{sec:routing-protocol} on bus $b(j)$, and that intersecting paths receive distinct bus assignments. Then the scheduled global execution implements the intended collection of nonlocal controlled operations correctly: each routed target receives the same logical action it would receive in the corresponding single-path protocol, and all routing excitations are removed at the end of the schedule.
\end{theorem}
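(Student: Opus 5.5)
The plan is to argue in the computational basis and extend by linearity. Every primitive in the schedule, namely $CBL^{\pm\Delta_k}$, $BCP^{\pm\Delta_k}$ and $CU_R^{\Delta_k}(U)$, has two properties on admissible basis states. First, it never changes the routing offset $s$ of any qudit except by adding or subtracting its own $\Delta_{b(j)}$. Second, its action depends only on one extracted digit of its source: the digit $f_0$ for the initial lift (I read $CBL$ as conditioned on the least-significant bit when the source is itself lifted), or $f_{b(j)}$ for propagation and target action. So it suffices to fix computational inputs $x^{(j)}$ (control of route $j$), $w_v$ (resident bits) and $y^{(j)}$ (targets), and track digit vectors. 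The target gates $CU_R$ act blockwise on $x_0$ with $s$ preserved, so superpositions over the logical bit at targets are handled by linearity within each block.

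The core is an invariant I would prove by induction over the layers $\mathcal{L}_1,\dots,\mathcal{L}_T$. At every time, each node $v$ holds
\[
r_v = x_0(v) + \sum_{j \in A_v} x^{(j)}\,\Delta_{b(j)},
\]
where $A_v$ is the set of routes $j$ whose forward gate into $v$ has been executed but whose inverse gate into $v$ has not. By the bus-assignment condition, the buses $b(j)$ for $j\in A_v$ are pairwise distinct, since all such routes pass through $v$ and hence intersect. The Composability and Reversibility propositions, together with Theorem~\ref{thm:bus-parallelism}, then give the following: the sum stays in $\mathcal{R}_K$, no wrap-around occurs, and $f_{b(j)}(r_v)=x^{(j)}$ for every $j\in A_v$.

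The inductive step checks each gate type. A $BCP^{\pm\Delta_{b(j)}}$ on edge $(v_t,v_{t+1})$ of $P_j$ reads $f_{b(j)}(r_{v_t})=x^{(j)}$, which is correct because the single-path protocol order guarantees $j\in A_{v_t}$ at that moment. It adds or removes exactly $x^{(j)}\Delta_{b(j)}$, which updates $A_{v_{t+1}}$ consistently, and the digit $b(j)$ was previously $0$ or $1$ respectively. A target gate $CXR^{\Delta_{b(j)}}$ (or $CU_R^{\Delta_{b(j)}}(U)$) by its Key Property applies $U^{x^{(j)}}$ to $x_0(v_L)$ regardless of the other active digits. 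Gates in one layer act on disjoint qudits, so they commute and the layer can be processed gate by gate. At the end every $A_v$ is empty, so all excitations are removed, and each target carries exactly the single-path action by comparison with the proof of Theorem~\ref{thm:swap-free-nonlocal-cnot}.

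The main obstacle is interaction through the logical bits rather than the buses. If a node is the target of one route and the source or an intermediate node of another, a target gate changes $x_0$ while another route's $CBL$ may read it. Intermediates are harmless because $BCP$ reads only bus digits and cleanup subtracts what was added. For a source, however, the outcome depends on order. I would handle this by stating that the intended collection is defined by the order $\Sigma$ induces on conflicting source/target pairs, or equivalently by assuming that in one layer no target coincides with another route's source. Under that assumption each route's $CBL$ and $CBL^{-1}$ read the same $x^{(j)}$, and the invariant closes.
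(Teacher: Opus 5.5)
Your argument uses the same idea as the paper. At a shared node $r_v=x_{0,v}+\sum_j x_j\Delta_{b(j)}$, each route reads only its own digit $f_{b(j)}$, so the single-path proof applies route by route. Your explicit invariant on the active sets $A_v$, checked gate by gate against the Composability and Reversibility propositions, adds two things the paper leaves implicit: there is no wrap-around modulo $d$, and each $BCP^{\pm}$ fires while its source still carries the bus, which relies on the reverse cleanup order.

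More importantly, your last paragraph finds a real hole in the paper's proof. The paper says operations from other routes ``add or remove only different binary digits.'' That is false for target gates. They change the least-significant digit, which is exactly the digit $CBL^{\pm\Delta_{b(j)}}$ reads when the node is another route's source. The paper also never defines $CBL$ on a lifted source, so your $f_0$ convention is needed, not optional.

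Concretely, let route $\ell$'s target gate at $u=v_0^{(j)}$ fire strictly between route $j$'s $CBL$ and $CBL^{-1}$, with $x_\ell=1$ and source bit $y$. Then $v_1^{(j)}$ ends in $w+y\Delta_{b(j)}-(y\oplus 1)\Delta_{b(j)}\not\equiv w \pmod d$. So the theorem as literally stated fails, and your extra hypothesis is necessary rather than a weakening.

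One correction: your two proposed fixes are not equivalent. Letting $\Sigma$ define the order of the conflicting pair works only if $\ell$'s target gate lies entirely before $j$'s $CBL$ or entirely after $j$'s $CBL^{-1}$. Inside that window, no ordering of the two controlled operations describes the outcome, as the computation above shows. State the hypothesis as this non-interleaving condition. Your assumption that no target of a concurrently scheduled route is another such route's source implies it. Note that ``layer'' there should mean the set of concurrently executed routes, not a single $\mathcal{L}_t$. With that, your invariant closes as you describe.
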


\begin{proof}
Fix a shared node $v$. Because intersecting paths are assigned distinct buses, the routing value at $v$ always has the form
\[
    r_v = x_{0,v} + \sum_{j : v \in P_j} x_j \cdot\Delta_{b(j)}
\]
where $x_{0,v}$ is the local logical bit stored at $v$ and each $x_j \in \{0,1\}$ is the control value associated with the $j$-th routed operation. By the binary encoding, the contribution of route $j$ is recovered exactly by
\(f_{b(j)}(r_v) = x_j\).
Therefore, even when several routed operations overlap at $v$, each route continues to read only its own bus digit.

Now consider the scheduled execution. Since each layer in $\Sigma$ acts on disjoint local supports, the corresponding layer unitary is well defined. For a fixed route $j$, the gates belonging to that route perform exactly the same three tasks as in the single-path protocol: the forward stage writes the offset $\Delta_{b(j)}$ onto the selected path when $x_j=1$, the target gate applies $X^{x_j}$ to the logical target bit, and the cleanup stage removes the same offset. Because no other route uses bus $b(j)$ on any shared node, operations from other routes may add or remove only different binary digits of the same qudit value; they do not alter the digit extracted by $f_{b(j)}$.

Hence the single-path correctness proof applies route-by-route inside the global execution. Each requested routed operation acts correctly on its designated source and target, and after all cleanup stages every auxiliary bus contribution is cancelled. Thus the final state contains exactly the intended logical nonlocal controlled operations together with no residual routing information.
\end{proof}

\subsection{Comparison with SWAP-Based and Alternative Routing}

\begin{proposition}[Reduction in Leading Routing Constant]\label{prop:swap-routing-improvement}
For a path of length $L$, the proposed method realizes the routed operation using $2L+1$ logical routing primitives rather than the transport-only SWAP baseline of depth $3L$, while requiring no additional physical qudits.
\end{proposition}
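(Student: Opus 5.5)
The plan is to assemble the statement from three facts already established: the gate count of Proposition~\ref{prop:gate-complexity}, the standard cost of SWAP transport, and the same-qudit nature of the routing primitives.

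\textbf{Step 1: count the routed primitives.} I will invoke Proposition~\ref{prop:gate-complexity} (equivalently, the protocol of Sec.~\ref{sec:routing-protocol}). The forward stage consists of one $CBL^{\Delta_k}_{(v_0,v_1)}$ followed by $L-1$ gates $BCP^{\Delta_k}_{(v_i,v_{i+1})}$, which gives $L$ primitives. The target interaction is a single $CXR^{\Delta_k}$. The cleanup stage mirrors the forward stage with $L$ inverse primitives. The total is $2L+1$.

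\textbf{Step 2: fix the SWAP baseline.} Bringing the control from $v_0$ to a neighbor of $v_L$ requires moving the payload across the path, which takes $L$ nearest-neighbor SWAPs up to an additive constant. Each SWAP decomposes as $SWAP = CX_{(i,j)}CX_{(j,i)}CX_{(i,j)}$, so the transport-only cost is $3L$ CNOT-equivalent layers. The statement explicitly compares against this transport term only, so I need no lower bound for SWAP routing. I will simply state that the comparison is against this standard baseline.

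\textbf{Step 3: show no additional physical qudits are needed.} Here I will argue from the definitions of $CBL$, $BCP$ and $CU_R$. Each acts only on the qudits $v_0,\dots,v_L$ already on the path. The routing information lives in levels $|\Delta_k\rangle,|\Delta_k+1\rangle$ of those same qudits, as described in the same-qudit interpretation. Theorem~\ref{thm:swap-free-nonlocal-cnot} guarantees that every intermediate node returns to its original state, so no ancilla is borrowed or left dirty. The no-aliasing assumption $d\ge 2^{K+1}$ is a requirement on local dimension, not on the number of qudits.

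\textbf{Step 4: compare.} Since $2L+1<3L$ exactly when $L>1$, the leading constant drops from $3$ to $2$. For $L=1$ both sides are trivially local.

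The main obstacle is interpretive rather than technical. The comparison is between logical routing primitives and CNOT-equivalent layers, which are different units. I would handle this by appealing explicitly to the compilation assumption of Sec.~\ref{sec:formalism}, under which each primitive costs $O(1)$ interaction windows. I would then phrase the claim as a statement at the level of logical primitives, not raw pulse counts.
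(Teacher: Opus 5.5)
Your proposal is correct and follows essentially the same argument as the paper: count $L$ forward primitives, one target gate and $L$ cleanup primitives to get $2L+1$, and compare against $L$ SWAPs at three CNOTs each for the $3L$ transport baseline. Your explicit justification of the no-extra-qudits clause (all primitives act only on the path qudits, which are restored) and your caveat that logical primitives and CNOT layers are different units are elaborations; the paper's proof leaves the first implicit and handles the second in the compilation-assumption discussion of Sec.~\ref{sec:formalism}.
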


\begin{proof}
Naive SWAP routing requires $L$ swaps to move a qubit across a path, and each swap costs 3 CNOT gates, giving a transport term of $3L$ before any additional target-side logic is counted. In contrast, the proposed method directly propagates information using $2L + 1$ logical routing primitives without physically relocating the resident data.
\end{proof}

\begin{table}[ht]
    \centering
    \scriptsize
    \resizebox{\columnwidth}{!}{%
    \begin{tabular}{l c c c}
    \hline
    Method & Extra qudits & Depth & Parallelism \\
    \hline
    Naive SWAP routing & none & $O(L)$ & limited \\
    Ancilla routing & additional ancillas & $O(L)$ & moderate \\
    This work & none & $2L + O(1)$ & up to $K$ buses \\
    \hline
    \end{tabular}
    }
    \caption{Routing cost}
    \label{tab:placeholder}
\end{table}

\begin{center}

\end{center}

This comparison highlights the main architectural tradeoff of the present proposal. Standard qubit routing keeps the local dimension fixed but suffers from spatial contention, ancilla-based schemes relieve some of that contention by adding hardware, and the present framework instead reuses the same physical qudits while opening concurrent routing channels in their higher-dimensional subspaces.

Teleportation-based and other entanglement-assisted routing strategies replace spatial movement with pre-shared entanglement, Bell-type measurements, classical feedforward, and the associated resource preparation overhead. Likewise, measurement-based approaches shift the routing burden into the preparation of a suitable entangled resource state together with adaptive measurement patterns. By contrast, the present method remains within unitary nearest-neighbor evolution on the underlying hardware graph and does not require pre-distributed entanglement or measurement-conditioned correction. The tradeoff is that our advantage depends on the coherence and controllability of higher qudit levels rather than on an external entanglement supply.

\section{Lower Bounds and Congestion Optimality}

This appendix collects lower bounds and exact congestion-round statements for the fixed-path $K$-bus routing model.

\subsection{Lower Bound on Information Propagation}

\begin{theorem}[Light-Cone Lower Bound]\label{thm:lightcone}
Let $G = (V, E)$ be a graph representing a quantum architecture with nearest-neighbor interactions. 
Consider two nodes $u, v \in V$ separated by graph distance $L$.
Any protocol that implements a non-local $CX_{(u,v)}$ gate using only local interactions requires depth at least $L$.
\end{theorem}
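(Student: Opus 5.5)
The plan is a standard light-cone (causality) argument in the Heisenberg picture. Write the protocol as a layered circuit $W = \Lambda_T \cdots \Lambda_1$ of depth $T$. Each layer $\Lambda_t$ is a tensor product of unitaries supported on single nodes or on single edges of $G$, with disjoint supports. For a set $S \subseteq V$, let $N(S)$ be $S$ together with all its graph neighbors. The first step is a support-growth lemma: if an operator $A$ is supported on $S$, then $\Lambda_t^\dagger A \Lambda_t$ is supported on $N(S)$. This holds because every gate of $\Lambda_t$ whose support misses $S$ commutes with $A$ and cancels, and each surviving gate touches $S$ and so lies inside $N(S)$. By induction, $W^\dagger A W$ is supported on the radius-$T$ ball $B_T(S)$ around $S$.

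The second step is to extract a signaling witness from the target map. Fix the Pauli $Z$ on the computational block of the target $v$, embedded as an operator $Z_v$ on the qudit at $v$. I will work with $W$ restricted to the codespace, meaning every node starts in its computational block, and with all intermediate nodes restored as the protocol requires. On this space the ideal gate satisfies $CX^\dagger Z_v\, CX = Z_u Z_v$. Comparing expectation values on a suitable product input makes this usable. Take $\ket{+}_u$ on the source and $\ket{0}_v$ on the target, with arbitrary computational states elsewhere. Then $\langle Z_v\rangle$ after the gate equals $\langle Z_u\rangle$ before it, which is $0$. If instead the source is $\ket{0}$, the value is $1$. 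So the output statistics at $v$ depend on the input at $u$.

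The third step assumes $T < L$ and derives a contradiction. Then $u \notin B_T(\{v\})$, so $W^\dagger Z_v W$ acts as the identity on $u$. Its expectation value therefore cannot depend on the reduced state at $u$ once the states of all other nodes are fixed. This contradicts the dependence found above, so $T \ge L$.

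I expect the main obstacle to be the interface between the qudit model and the logical statement. Routing levels and the embedding of $Z$ on the full $d$-level space must not spoil the argument. Two points need care. First, the support-growth lemma holds on the full $\mathcal{H}_d^{\otimes |V|}$ for arbitrary local unitaries, including $CBL$, $BCP$ and $CU_R$, since it uses only the gates' supports. Second, the correctness hypothesis is needed only on computational-block inputs, and the expectation-value comparison uses only such inputs. I will phrase the argument via the no-signaling consequence of the lemma, rather than an operator identity on the whole Hilbert space, to avoid assuming anything about $W$ outside the codespace. A minor point is the depth convention: a single-node gate or a two-node edge gate each advances the light cone by at most one graph step per layer. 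If multi-node gates on larger connected supports were allowed, the bound would degrade by the gate diameter, so I will state the lemma for two-node edge gates as in the paper's model.
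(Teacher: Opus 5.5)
Your proposal is correct and uses the same Heisenberg-picture light-cone argument as the paper. An evolved local operator's support grows by at most one graph step per layer, so at depth below $L$ no operator at $v$ can become sensitive to $u$, which contradicts the $u$-to-$v$ dependence that $CX_{(u,v)}$ requires. Your version is more careful than the paper's in two places: you prove the strict support-growth lemma directly for layered circuits instead of citing Lieb--Robinson, and you replace the paper's bare assertion $[A_u(t),B_v]\neq 0$ with an explicit expectation-value witness that only uses correctness on computational-block inputs.
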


\begin{proof}
We argue using a causality (light-cone) argument using the Lieb-Robinson bound\cite{lieb1972finite} for discrete quantum circuits. Let $A_u$ be a local operator supported at node $u$ (e.g., the control bit's $Z$ operator) and $B_v$ be a local operator at node $v$. In a system with nearest-neighbor interactions, the Heisenberg-picture evolution of an operator $O$ after $t$ time steps is given by $O(t) = \mathcal{U}_t^\dagger O \mathcal{U}_t$, where $\mathcal{U}_t$ is the unitary circuit of depth $t$.

The Lieb-Robinson bound states that the support of $A_u(t)$ grows at a finite velocity. Specifically, for a circuit of depth $t$ composed of gates acting on edges in $E$, the support of $A_u(t)$ is strictly contained within the ball $B(u, t) = \{w \in V \mid dist(u, w) \leq t\}$. 

A non-local $CX_{(u,v)}$ gate requires that the state at node $v$ becomes conditionally dependent on the state at node $u$. This implies that there exist local operators $A_u$ and $B_v$ such that their evolved commutator is non-vanishing:
\[ [A_u(t), B_v] \neq 0 \]
However, the Lieb-Robinson bound implies that if $dist(u, v) > t$, then the support of $A_u(t)$ and $B_v$ are disjoint, and thus $[A_u(t), B_v] = 0$. For a non-trivial interaction to occur at distance $L$, we must therefore have $t \geq L$. 

This bound is a fundamental property of local quantum evolution and holds regardless of the local Hilbert space dimension $d$, the use of ancillas, or the specific gate decomposition (e.g., SWAP vs. spectral routing). Thus, any implementation of a non-local CNOT over distance $L$ requires depth $t \geq L$.

\end{proof}

\subsection{Optimality of the Proposed Protocol}

\begin{theorem}[Asymptotic Optimality]
The routing protocol defined in Sec.~\ref{sec:routing-protocol} achieves depth $O(L)$ for implementing a non-local $CX_{(u,v)}$ gate. Therefore, it is asymptotically optimal with respect to circuit depth.
\end{theorem}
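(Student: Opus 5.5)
The argument sandwiches the protocol's depth between two linear functions of $L$: an upper bound from the explicit schedule and a lower bound from locality. Since the target statement is asymptotic, showing both bounds are $\Theta(L)$ suffices.

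\emph{Upper bound.} I will invoke Proposition~\ref{prop:protocol-depth}, which gives depth $2L+O(1)$ under sequential execution. This rests on Theorem~\ref{thm:swap-free-nonlocal-cnot} for correctness and on the compilation assumption of Sec.~\ref{sec:formalism}, which says each $CBL$, $BCP$ and $CU_R$ primitive compiles to $O(1)$ entangling windows plus local pulses. Under that assumption the logical depth $2L+1$ translates into a physical depth of at most $c(2L+1)$ for a device-dependent constant $c$, which is still $O(L)$. I will make explicit that the optimality claim is relative to this model: the constant $c$ is absorbed, and is not controlled in a hardware-independent way.

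\emph{Lower bound.} I will invoke Theorem~\ref{thm:lightcone}. Any protocol using only nearest-neighbor gates on $G$ that implements $CX_{(u,v)}$ with $\mathrm{dist}(u,v)=L$ needs depth at least $L$. This holds regardless of the local dimension $d$, so it also applies to protocols that use higher qudit levels or colocated ancillas. I will note that the path length $L$ in the protocol should be taken to be the graph distance, i.e., the path $P$ is chosen as a shortest path, so that the two bounds refer to the same $L$.

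Combining the two bounds: every admissible protocol has depth $\Omega(L)$, and ours has depth $O(L)$. Hence the ratio of our depth to the optimal depth is bounded by a constant (at most about $2c$), which is the sense of asymptotic optimality.

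The main obstacle is making sure the upper bound is genuinely measured in the same depth model as the light-cone bound. The lower bound counts layers of two-site gates on edges of $G$. I must check that each compiled primitive is itself a bounded-depth circuit on the single edge it acts on, and introduces no longer-range interaction, so that the depth comparison is apples to apples. The compilation assumption gives exactly this, so I will state it as the hypothesis under which the theorem holds.
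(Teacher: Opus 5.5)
Your proposal is correct and follows the same route as the paper. The forward and cleanup stages give the $2L+O(1)$ upper bound, and Theorem~\ref{thm:lightcone} supplies the matching $\Omega(L)$ lower bound, so the two agree up to a constant factor. You also spell out two points the paper leaves implicit: $P$ must be a shortest path so that the protocol's $L$ equals the graph distance in the light-cone bound, and the compilation constant is absorbed under the stated primitive model.
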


\begin{proof}
From Sec.~\ref{sec:routing-protocol}, the protocol consists of $L$ forward propagation steps and $L$ reverse cleanup steps. These operations can be arranged sequentially with total depth $2L + O(1)$. Since Theorem~\ref{thm:lightcone} establishes a lower bound of $L$, the protocol matches the lower bound up to constant factors.
\end{proof}

\subsection{Parallelism vs. the Lower Bound}

While the depth lower bound applies to a single nonlocal operation, the proposed protocol allows multiple operations to proceed simultaneously.

\begin{proposition}[Parallel Multiplexing Advantage]
Using $k$ distinct routing buses, the protocol enables up to $k$ independent nonlocal operations to be multiplexed within a common logical routing layer without logical bus interference, provided a compatible local schedule exists.
\end{proposition}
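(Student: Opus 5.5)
The plan is to reduce the statement to two results already established: exact bus decoding (Uniqueness of Decoding, Theorem~\ref{thm:bus-parallelism}) and scheduled correctness (Theorem~\ref{thm:parallel-correctness}). Take $k$ independent nonlocal operations with chosen paths $P_1,\dots,P_k$. Assume, as hypothesized, that a compatible local schedule $\Sigma$ exists, i.e.\ no qudit takes part in two gates of one layer. Define the bus assignment $b(j)=j$. This is injective, so it is a valid bus-assignment function however the paths intersect, and $(\{P_j\},b,\Sigma)$ is a routing protocol instance. Since $b$ uses only $k$ labels, the no-aliasing assumption with $K=k$ requires $d\ge 2^{k+1}$. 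I will state this explicitly as the standing dimension requirement, because ``up to $k$ buses'' is only meaningful under it.

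The first key step is \emph{algebraic non-interference}. At any node $v$, the routing value at every point of the schedule has the form $r_v=x_{0,v}+\sum_{j\ni v}x_j\Delta_j$. I will check this by induction over layers. Each gate of route $j$ adds or subtracts $\Delta_j$ only on a qudit whose bus-$j$ digit is currently $0$ (respectively $1$), so by Composability of Bus Labels and Reversibility of Routing Updates the value stays in $\mathcal{R}_K$. The target primitive $CXR^{\Delta_j}$ changes only the least-significant digit. In every case the digits of the other buses are left untouched. Uniqueness of Decoding then gives $f_j(r_v)=x_j$ throughout, so each $BCP^{\pm\Delta_j}$ reads exactly its own control bit.

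The second step is to conclude that each route behaves as in Theorem~\ref{thm:swap-free-nonlocal-cnot} and that all offsets are removed. This is the content of Theorem~\ref{thm:parallel-correctness}, which I will invoke directly. It yields that the $k$ operations occupy a common logical routing layer, i.e.\ one round in the sense of $R_K(\Gamma)$ with $\chi(\Gamma)\le k=K$, with no bus interference.

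I expect the main obstacle to be the interaction on a shared qudit that is a target of one route and an intermediate node of another. There, $CXR^{\Delta_j}$ flips the least-significant bit, which other routes treat as the resident bit $w$. I will handle this by noting that the cleanup of the other route subtracts only its own $\Delta_\ell$. That subtraction is insensitive to $x_0$, so the flip commutes with the other route's routing gates. I will verify this with the blockwise form of $CU_R^{\Delta_j}$ in the tensor-factor picture: it acts as $\Pi_j^{(x)}\otimes U^{x}$ and commutes with shifts of other bus digits.

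A second subtlety is that ``independent'' should exclude the case where one route's target is another route's source. If that case occurs, ordering of the target gates matters. I will add this as part of the meaning of independence, or absorb it into the compatible-schedule hypothesis.
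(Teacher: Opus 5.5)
Your proposal is correct and takes essentially the same route as the paper. The paper also argues that distinct offsets keep the bus digits algebraically separable at shared nodes, and then invokes Theorem~\ref{thm:parallel-correctness} for the scheduled execution. Your additions make explicit conditions that the paper's three-line proof leaves implicit: the injective assignment $b(j)=j$ with the requirement $d\ge 2^{k+1}$, the layer-by-layer digit invariant, and the caveat that no route's target may be another route's source unless the target flip falls outside that route's $CBL$--$CBL^{-1}$ window. That last caveat is genuinely needed; without it the cleanup can leave a residual routing offset.
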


\begin{proof}
Each routing operation uses a distinct offset $\Delta_i$, ensuring orthogonality in the routing subspace. Since these encodings do not interfere at the level of the routing-value algebra, multiple operations may share nodes while preserving separable bus labels. Theorem~\ref{thm:parallel-correctness} then shows that any schedule respecting the local interaction constraints implements the same logical action as the corresponding collection of single-path protocols. Thus bus multiplexing does not by itself introduce additional per-route routing depth beyond the schedule already imposed by local edge constraints.
\end{proof}

\subsection{Conflict-Graph Separation from Qubit-Only Routing}

\begin{definition}[Route Conflict Graph]
Given a requested family of routed operations with paths $P_1,\dots,P_m$, define the \emph{route conflict graph} $\Gamma$ to be the graph with vertex set $\{1,\dots,m\}$ in which $j$ and $\ell$ are adjacent iff $P_j \cap P_\ell \neq \varnothing$. Thus an edge of $\Gamma$ records exactly the pairs of routed operations that cannot share the same bus in one logical routing round.

Its chromatic number is denoted $\chi(\Gamma)=\min\{c : \Gamma \text{ admits a proper } c\text{-coloring}\}$.
\end{definition}

\begin{proposition}[Single-Round Bus-Assignment Criterion]\label{prop:single-round-bus-assignment}
Fix $K$ available routing buses. A requested family of routes can be placed in one logical routing round of the same-qudit architecture if and only if its route conflict graph $\Gamma$ is $K$-colorable, provided the chosen color classes admit a compatible local edge schedule.
\end{proposition}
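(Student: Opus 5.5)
The argument has two directions, and the only real content is identifying a one-round placement with a proper $K$-coloring of $\Gamma$ through the bus-assignment function $b$ of the Routing Protocol Instance definition.

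For the \emph{sufficiency} direction, fix a proper coloring $c:\{1,\dots,m\}\to\{1,\dots,K\}$ of $\Gamma$ and set $b(j):=c(j)$.
\begin{itemize}
\item By the definition of $\Gamma$, adjacency is exactly path intersection, so $b(j)\neq b(\ell)$ whenever $P_j\cap P_\ell\neq\varnothing$. This is precisely the collision-free labeling condition of a routing protocol instance.
\item With the hypothesized compatible local edge schedule $\Sigma$, the triple $(\{P_j\},b,\Sigma)$ is a valid routing protocol instance.
\item Theorem~\ref{thm:parallel-correctness} then gives correct simultaneous execution with all routing excitations removed, i.e.\ a single logical routing round.
\item The no-aliasing assumption $d\ge 2^{K+1}$, via Theorem~\ref{thm:bus-parallelism}, guarantees that any shared node can host all colors present there at once.
\end{itemize}

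For the \emph{necessity} direction, suppose the family is placed in one logical round. By definition, each route then carries a single bus label $b(j)\in\{1,\dots,K\}$ throughout that round. I claim that intersecting routes must get distinct labels, so $b$ is a proper $K$-coloring of $\Gamma$.
\begin{itemize}
\item Suppose instead $P_j\cap P_\ell\ni v$ and $b(j)=b(\ell)=k$.
\item At $v$, both routes would write the offset $\Delta_k$ onto the same binary digit.
\item If $x_j=x_\ell=1$, the addition $2\Delta_k=\Delta_{k+1}$ leaves the admissible structure: it corrupts digit $k+1$, or wraps modulo $d$ when $k=K$.
\item In every case, $f_k(r_v)$ no longer equals the individual control bit of either route. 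Take $x_j=1,x_\ell=0$ versus $x_j=0,x_\ell=1$: both give digit $k$ equal to $1$, so downstream $BCP^{\Delta_k}$ cannot tell which route's control is active.
\item Hence the targets of routes $j$ and $\ell$ cannot both receive their correct single-path action, contradicting the exact decoding required by Proposition~\ref{prop:state-count-lower-bound}.
\end{itemize}
Therefore a one-round placement forces $\chi(\Gamma)\le K$.

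The main obstacle is making necessity rigorous. "One logical round" must be formalized as a single label per route within the fixed-path bus model; otherwise a clever schedule could time-multiplex the same bus on a shared node and escape the argument. I would handle this by stating the model assumption explicitly: in one round, every route's label is simultaneously resident on all of its path nodes after forward propagation, as in the forward-stage state of Theorem~\ref{thm:swap-free-nonlocal-cnot}. The digit-collision argument at the shared node then applies at that instant.
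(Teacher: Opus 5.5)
\paragraph{Sufficiency is fine; necessity has a gap.}
Your sufficiency direction coincides with the paper's. A proper $K$-coloring is literally a bus assignment obeying $b(j)\neq b(\ell)$ whenever $P_j\cap P_\ell\neq\varnothing$, and invoking Theorem~\ref{thm:parallel-correctness} on top of that is harmless.

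The gap is in necessity. The step ``at $v$, both routes would write the offset $\Delta_k$'' fails whenever $v$ is the source of one of the routes. The protocol's first gate $CBL_{(v_0,v_1)}^{\Delta_k}$ lifts $v_1$ and never $v_0$; the forward-stage state in Theorem~\ref{thm:swap-free-nonlocal-cnot} keeps $\ket{x}_{v_0}$ unlifted.

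This is not just a hole in the write-up. The paper's conflict graph counts every intersection, including endpoints, and under that definition the physical ``only if'' is false. Take two routes with a common source $u$ that fan out along otherwise disjoint branches, $t_1 - a_1 - u - b_1 - t_2$. No node is lifted by both routes. So with $K=1$ both run correctly in a single round on bus $1$ (the two $CBL$ gates touching $u$ simply occupy different layers), even though $\Gamma=K_2$ and $\chi(\Gamma)=2>K$. Your proposed repair, assuming each route's label is resident on all of its path nodes, does not rescue the argument, because it contradicts the protocol at exactly these source nodes.

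\paragraph{How the paper handles it, and how to fix yours.}
The paper's converse is definitional. A one-round placement in the same-qudit model is a routing protocol instance, and that definition already imposes the collision-free condition. The labels used are therefore a proper $K$-coloring by fiat: the distinct-bus rule is a modeling convention, not something derived from the dynamics.

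You should argue necessity the same way, citing the condition you already quote in your sufficiency direction. Your digit-collision computation can serve at most as motivation for that convention, and only when the shared node is lifted by both routes. Even in that case, tidy two points:
\begin{enumerate}
\item The claim that $f_k(r_v)$ matches neither control bit ``in every case'' is overstated. Absent modular wrap-around, $f_k(r_v)=x_j\oplus x_\ell$, which equals $x_j$ when $x_\ell=0$. Only the indistinguishability of $(x_j,x_\ell)=(1,0)$ versus $(0,1)$ does any work.
\item The contradiction that argument produces is with the per-route correctness required of a round, as in Theorem~\ref{thm:parallel-correctness}. It is not with Proposition~\ref{prop:state-count-lower-bound}, which is only a dimension bound.
\end{enumerate}
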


\begin{proof}
A proper $K$-coloring $c : \{1,\dots,m\} \to \{1,\dots,K\}$ is exactly a bus-assignment function with $c(j) \neq c(\ell)$ whenever $P_j \cap P_\ell \neq \varnothing$.
This is precisely the collision-free bus-assignment condition in the routing model. Conversely, any feasible one-round execution assigns one bus to each route and must assign distinct buses to intersecting routes, so the used bus labels define a proper $K$-coloring of $\Gamma$.
\end{proof}

\begin{theorem}[Optimal Congestion-Round Complexity]\label{thm:congestion-round-complexity}
Let $\Gamma$ be the route conflict graph of a requested routing layer, and let $\chi(\Gamma)$ denote its chromatic number. Then, at the bus-assignment level, the minimum number of logical routing rounds needed in the same-qudit $K$-bus model is
\[
R_K(\Gamma)
=
\left\lceil \frac{\chi(\Gamma)}{K} \right\rceil,
\]
provided each round admits a compatible local edge schedule.
\end{theorem}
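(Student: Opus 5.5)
The plan is to prove the two inequalities $R_K(\Gamma)\le\lceil\chi(\Gamma)/K\rceil$ and $R_K(\Gamma)\ge\lceil\chi(\Gamma)/K\rceil$ separately. Both directions rest on Proposition~\ref{prop:single-round-bus-assignment}, which identifies a one-round-feasible set of routes with a $K$-colorable induced subgraph of $\Gamma$, given a compatible local edge schedule. I model a $T$-round execution as a partition of the route set $\{1,\dots,m\}$ into $T$ classes $\mathcal{C}_1,\dots,\mathcal{C}_T$, one per round. Each class is executed with its own bus assignment $b_t:\mathcal{C}_t\to\{1,\dots,K\}$. Because every route is fully cleaned up at the end of its round (Theorem~\ref{thm:parallel-correctness}), bus labels can be reused freely across rounds.

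\emph{Upper bound.} Fix an optimal proper coloring $c:\{1,\dots,m\}\to\{1,\dots,\chi(\Gamma)\}$. Group the colors into $T=\lceil\chi(\Gamma)/K\rceil$ consecutive blocks of at most $K$ colors each, $\{1,\dots,K\},\{K+1,\dots,2K\},\dots$. Round $t$ executes every route whose color lies in block $t$, with bus $b_t(j)=c(j)-(t-1)K\in\{1,\dots,K\}$. Within a round, intersecting routes have distinct colors and hence distinct buses, so $b_t$ is a valid bus assignment. Proposition~\ref{prop:single-round-bus-assignment}, together with the assumed compatible local schedule, makes each round feasible. Theorem~\ref{thm:parallel-correctness} gives correctness within a round. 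Since the routing excitations are removed at the end of each round, the rounds compose sequentially.

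\emph{Lower bound.} Take any feasible $T$-round execution. By the converse direction of Proposition~\ref{prop:single-round-bus-assignment}, each $b_t$ is a proper $K$-coloring of the induced subgraph $\Gamma[\mathcal{C}_t]$. Define a global coloring $c(j)=(t-1)K+b_t(j)$ for $j\in\mathcal{C}_t$, which uses at most $TK$ colors. To check that it is proper, take adjacent $j,\ell$. If they lie in different rounds, their colors fall in disjoint ranges. If they lie in the same round, $b_t(j)\ne b_t(\ell)$. Hence $\chi(\Gamma)\le TK$, so $T\ge\chi(\Gamma)/K$, and since $T$ is an integer, $T\ge\lceil\chi(\Gamma)/K\rceil$.

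The main obstacle is the lower bound's modeling step: justifying that every logical round execution really induces a proper bus coloring of the routes it contains. In particular, one must rule out a route being split across rounds or two intersecting routes sharing a bus within a round. For the latter I would argue via exact decoding: if two overlapping routes both wrote $\Delta_k$ at a shared node, the digit $f_k$ would no longer be a binary digit of an element of $\mathcal{R}_K$, or the two routes' controls would alias, contradicting the uniqueness of decoding. For the former I would simply adopt, as part of the definition of a round at the bus-assignment level, that each route's forward, target, and cleanup stages are completed within a single round. The statement's explicit proviso about a compatible local edge schedule absorbs all remaining hardware scheduling issues.
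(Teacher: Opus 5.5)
Your proposal is correct and follows the paper's proof essentially step for step. For the lower bound, you give each round its own palette of $K$ colors and merge the per-round $K$-colorings into a proper coloring with at most $RK$ colors, so $\chi(\Gamma)\le RK$. For the upper bound, you group the color classes of an optimal coloring into $\lceil\chi(\Gamma)/K\rceil$ blocks of at most $K$ and run each block in one round via Proposition~\ref{prop:single-round-bus-assignment}.

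Your aliasing argument for why intersecting routes cannot share a bus within a round is not needed. The paper makes distinct buses on intersecting routes part of the definition of a routing protocol instance, which is the same definitional convention you fall back on.
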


\begin{proof}
Suppose the routes are executed in $R$ logical rounds. By Proposition~\ref{prop:single-round-bus-assignment}, the routes assigned to any one round induce a $K$-colorable subgraph of $\Gamma$. Coloring each round with its own palette of $K$ colors therefore produces a proper coloring of all of $\Gamma$ with at most $RK$ colors. Hence
\[
\chi(\Gamma) \leq RK,
\]
so every implementation must satisfy $R \geq \left\lceil \frac{\chi(\Gamma)}{K} \right\rceil$.

For the matching upper bound, choose a proper $\chi(\Gamma)$-coloring of $\Gamma$ and partition its color classes into consecutive groups of size at most $K$. Each group defines one routing round whose induced conflict graph is $K$-colorable, so Proposition~\ref{prop:single-round-bus-assignment} realizes that group in a single logical round. Therefore
\[
R_K(\Gamma)
\leq
\left\lceil \frac{\chi(\Gamma)}{K} \right\rceil,
\]
which matches the lower bound.
\end{proof}

\begin{corollary}[Optimality Sandwich]\label{cor:optimality-sandwich}
Within the fixed-path nearest-neighbor $K$-bus routing model, the routing-round complexity of a requested layer with conflict graph $\Gamma$ satisfies
\[
\left\lceil \frac{\chi(\Gamma)}{K} \right\rceil
\leq
R_K(\Gamma)
\leq
\left\lceil \frac{\chi(\Gamma)}{K} \right\rceil.
\]
Hence
\[
R_K(\Gamma)=\left\lceil \frac{\chi(\Gamma)}{K} \right\rceil,
\]
so the routing-round complexity is exactly characterized by the chromatic number of the route conflict graph.
\end{corollary}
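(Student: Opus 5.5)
The corollary follows from Theorem~\ref{thm:congestion-round-complexity} by splitting that theorem's two halves into the two sides of the sandwich. I will assume, as the theorem does, that each round admits a compatible local edge schedule. The first step is to make the setting explicit. Within the fixed-path nearest-neighbor $K$-bus model, a feasible execution of the layer means a partition of the requested routes into $R$ logical rounds. Each round must be realizable in one logical routing round, which by Proposition~\ref{prop:single-round-bus-assignment} is equivalent to its induced subgraph of $\Gamma$ being $K$-colorable. We then take $R_K(\Gamma)$ to be the minimum such $R$.

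For the left-hand inequality I will use the lower-bound half of the theorem. Given any feasible $R$-round execution, each round $i$ carries a proper coloring with a private palette $\{(i,1),\dots,(i,K)\}$. Taking the union of these palettes gives a proper coloring of all of $\Gamma$ with at most $RK$ colors. Routes in different rounds get distinct colors automatically, and routes in the same round are handled by that round's own $K$-coloring. Hence $\chi(\Gamma)\le RK$, so $R\ge\lceil\chi(\Gamma)/K\rceil$. Since this holds for every feasible $R$, it holds for the minimum $R_K(\Gamma)$.

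For the right-hand inequality I will use the upper-bound half. Fix an optimal coloring with classes $C_1,\dots,C_{\chi}$ and group them into $\lceil\chi/K\rceil$ consecutive blocks of at most $K$ classes each. Inside a block, the inherited coloring uses at most $K$ colors and is still proper. By the converse direction of Proposition~\ref{prop:single-round-bus-assignment}, together with Theorem~\ref{thm:parallel-correctness} for correctness of the parallel execution, each block is realizable as one round. This exhibits a feasible execution with $\lceil\chi/K\rceil$ rounds.

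Combining the two bounds gives equality. The only step I expect to need care with is keeping the schedulability proviso consistent across both directions, so that the same notion of ``round'' is used in the lower and upper bounds. Beyond that, the argument is a direct restatement of the theorem.
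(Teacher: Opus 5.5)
Your proposal is correct and matches the paper's own proof: the paper also obtains the two sides of the sandwich directly from the two halves of Theorem~\ref{thm:congestion-round-complexity}, namely stacking per-round $K$-palettes to get $\chi(\Gamma)\le RK$, and grouping the $\chi(\Gamma)$ color classes into blocks of at most $K$. Your added care in using the schedulability proviso only for the upper bound, and your appeal to Theorem~\ref{thm:parallel-correctness} for correctness within a round, are sound refinements rather than a different route.
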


\begin{proof}
The lower bound and matching upper bound are exactly the two halves of the proof of Theorem~\ref{thm:congestion-round-complexity}.
\end{proof}

\begin{corollary}[Hotspot Clique Optimality]\label{cor:hotspot-clique-optimality}
If a requested routing layer contains $m$ routes that are pairwise intersecting, then its conflict graph is the clique $K_m$ and the exact minimum number of logical routing rounds in the $K$-bus model is $R_K(K_m)=\left\lceil \frac{m}{K} \right\rceil$.
\end{corollary}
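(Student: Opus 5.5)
The plan is to derive this as a direct specialization of Theorem~\ref{thm:congestion-round-complexity}, so the only real work is computing the chromatic number of the conflict graph. First, from the definition of the route conflict graph, two routes $j \neq \ell$ are adjacent in $\Gamma$ iff $P_j \cap P_\ell \neq \varnothing$. Pairwise intersection of the $m$ routes therefore makes every pair adjacent, so $\Gamma = K_m$.

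Second, I would compute $\chi(K_m) = m$. The upper bound comes from giving each vertex its own color. For the lower bound, any two vertices of $K_m$ are adjacent, so a proper coloring must assign pairwise distinct colors and hence uses at least $m$ colors.

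Third, I would substitute $\chi(\Gamma)=m$ into Theorem~\ref{thm:congestion-round-complexity}, or equivalently into Corollary~\ref{cor:optimality-sandwich}, to get $R_K(K_m)=\lceil m/K\rceil$.

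It is worth spelling out the matching upper bound concretely in bus language, since it makes the result transparent. Partition the $m$ routes into $\lceil m/K\rceil$ groups of at most $K$ routes each, and give the routes within a group distinct buses. Proposition~\ref{prop:single-round-bus-assignment} then places each group in one round.

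The only step needing care is inheriting the standing proviso of Theorem~\ref{thm:congestion-round-complexity}: each round must admit a compatible local edge schedule. I would state the corollary under the same proviso, at the bus-assignment level, rather than try to prove schedulability. Schedulability can fail near the shared hotspot node, because several routes touching it cannot all act on it in the same layer; but this affects only the intra-round depth, not the round count. Apart from this bookkeeping, there is no real obstacle.
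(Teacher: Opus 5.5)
Your proposal is correct and follows the paper's proof, which simply notes that $\chi(K_m)=m$ and specializes Theorem~\ref{thm:congestion-round-complexity}. You spell out the steps $\Gamma=K_m$ and $\chi(K_m)=m$ that the paper leaves implicit, and you inherit the compatible-local-schedule proviso rather than proving schedulability, which matches how the paper handles it.
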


\begin{proof}
The chromatic number of the clique $K_m$ is $m$, so the claim is the special case of Theorem~\ref{thm:congestion-round-complexity}.
\end{proof}

\begin{corollary}[Qubit-Only Obstruction and Ancilla Tradeoff]\label{cor:qubit-only-obstruction}
Any exact qubit-only implementation that preserves one resident logical qubit at a node and simultaneously supports $K$ independently recoverable routed bits at that same node requires at least $K$ additional colocated qubits there. Equivalently, $q$ colocated qubits can support at most $q-1$ such routed bits. In particular, a one-qubit-per-node architecture cannot realize exact same-node overlap routing at all.

More generally, if every congested node in a qubit architecture provides at most $q$ colocated qubits in total, then any requested routing layer with conflict graph $\Gamma$ requires at least
$\left\lceil \frac{\chi(\Gamma)}{q-1} \right\rceil$ logical routing rounds.
\end{corollary}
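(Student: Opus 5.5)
The first part reduces to a dimension count through Theorem~\ref{thm:single-qubit-impossibility}. Take a node hosting $q$ colocated qubits. Its local Hilbert space is $(\mathbb{C}^2)^{\otimes q}$, of dimension $2^q$. An exact, reversible scheme must preserve an arbitrary resident qubit while carrying $K$ independently recoverable routed bits. As in the proof of Theorem~\ref{thm:single-qubit-impossibility}, such a scheme defines an isometry
\[
V:\mathcal{H}_2\otimes\mathbb{C}^{2^K}\to(\mathbb{C}^2)^{\otimes q}.
\]
An isometry cannot lower dimension, so $2^{K+1}\le 2^q$, that is, $K\le q-1$. Read the other way, $q\ge K+1$, so at least $K$ qubits beyond the resident one are needed. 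Setting $q=1$ gives $K\le 0$, which is the stated impossibility of exact overlap routing in a one-qubit-per-node architecture. The argument treats the $q$ qubits as one $2^q$-level system, so the proof of Theorem~\ref{thm:single-qubit-impossibility} transfers verbatim with $d=2^q$.

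For the round bound I would mirror the lower-bound half of Theorem~\ref{thm:congestion-round-complexity}, with capacity $q-1$ in place of $K$. Fix any exact qubit-only execution of the layer in $R$ logical rounds and look at one round. At every congested node the routes passing through it simultaneously each need an independently recoverable routed bit. By the first part, at most $q-1$ such bits can coexist there. The distinguishable labels, that is, the ancilla slots used to carry the routed bits, then play the role of buses. Routes that intersect at a node in the same round must occupy distinct slots there, so the routes of one round receive a proper coloring with at most $q-1$ colors.

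I would then use a separate palette of $q-1$ colors for each round. This gives a proper coloring of all of $\Gamma$ with at most $R(q-1)$ colors, so $\chi(\Gamma)\le R(q-1)$ and therefore $R\ge\lceil\chi(\Gamma)/(q-1)\rceil$. For $q=1$ the bound should be read as infinite, meaning the layer cannot be executed, which is consistent with the first part.

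The main obstacle is justifying that slot assignment is a single consistent coloring within a round. The difficulty is that a route might use different ancilla slots at different nodes along its path. If that happens, the colors are only locally proper, and a global proper coloring does not follow directly. The first thing I would try is to work at the level of the paper's abstraction: take it as part of the model, as in Proposition~\ref{prop:single-round-bus-assignment}, that each route in a round carries one fixed label along its whole path, in the same way it carries a fixed bus $b(j)$.

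If a fully general statement is needed, I would fall back on the clique consequence. Routes sharing a single hotspot node form a clique in $\Gamma$, and the first part bounds directly how many of them can share one round, giving at least $\lceil m/(q-1)\rceil$ rounds for an $m$-route clique. I would state explicitly that the chromatic-number version relies on the fixed-label assumption.
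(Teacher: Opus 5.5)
Your proposal is correct and follows the paper's proof. It does a dimension count $2^{K+1}\le 2^q$; the paper cites Proposition~\ref{prop:state-count-lower-bound}, whereas your isometry from Theorem~\ref{thm:single-qubit-impossibility} matches the resident-qubit hypothesis more faithfully. It then applies the lower-bound half of Theorem~\ref{thm:congestion-round-complexity} with capacity $q-1$ in place of $K$.

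Your fixed-label caveat is well founded. The paper simply invokes that theorem with $K=q-1$ and so tacitly assumes each route keeps one slot along its whole path. If slots may be chosen node by node, the chromatic bound genuinely fails. For example, take five routes whose conflict graph is a $5$-cycle, with each adjacent pair meeting at its own private node. They fit in one round when $q=3$, although $\lceil \chi(\Gamma)/(q-1)\rceil=2$.
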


\begin{proof}
By Proposition~\ref{prop:state-count-lower-bound}, storing one resident logical bit together with $K$ independently recoverable routed bits requires local dimension at least $2^{K+1}$. A bank of $q$ colocated qubits has local dimension $2^q$, so exact same-node overlap routing is possible only if $2^q \geq 2^{K+1}$, equivalently $q \geq K+1$. Thus $K$ routed bits require at least $K$ additional local qubits beyond the resident logical qubit, and any fixed budget of $q$ colocated qubits supports at most $q-1$ routed bits.

Applying Theorem~\ref{thm:congestion-round-complexity} with effective per-node transit capacity $K=q-1$ gives the round lower bound $\left\lceil \frac{\chi(\Gamma)}{q-1} \right\rceil$.
For $q=1$, no positive transit capacity exists, so exact same-node overlap routing is impossible.
\end{proof}

\begin{remark}
By Corollaries~\ref{cor:hotspot-clique-optimality} and~\ref{cor:qubit-only-obstruction}, a one-qubit-per-node architecture cannot realize even the case of two overlapping routed controls at one hotspot without leaving the exact same-node overlap model, while a qudit with $d \geq 2^{K+1}$ supports $K$ such overlapping routes in one logical round. This is the clean congestion-level separation delivered by spectral routing: it replaces forced serialization at a hotspot by $K$-way local multiplexing inside one physical site.
\end{remark}

\begin{corollary}[Qubit--Qudit Separation]\label{cor:qubit-qudit-separation}
Under the exact nearest-neighbor unitary model introduced in Sec.~\ref{sec:formalism}, there exists a routing instance realizable with one qudit per node that is not realizable with one qubit per node.
\end{corollary}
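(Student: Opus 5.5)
The proof is by exhibiting an explicit witness instance: the smallest hotspot, with $m=2$ routed operations whose chosen paths intersect at one shared intermediate node $h$ that carries a resident logical qubit. Concretely, take a grid or cross-shaped graph in which two length-two paths $a - h - b$ and $c - h - e$ share only the middle vertex $h$. This requests the nonlocal gates $CX_{(a,b)}$ and $CX_{(c,e)}$ in one logical routing round, while $h$ keeps an arbitrary resident state $\ket{\psi}$. The conflict graph is $\Gamma=K_2$, so $\chi(\Gamma)=2$.

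\textbf{Realizability with qudits.} Choose $d\geq 8$, so that $K=2$ buses satisfy the no-aliasing assumption $d\geq 2^{K+1}$. Assign bus $1$ to the first route and bus $2$ to the second; this is a proper $2$-coloring of $\Gamma$. By Proposition~\ref{prop:single-round-bus-assignment} and Corollary~\ref{cor:hotspot-clique-optimality}, the layer fits in one logical round, since $\lceil 2/2\rceil=1$.

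Correctness then follows from Theorem~\ref{thm:parallel-correctness}, given a compatible local edge schedule. The two routes use disjoint edges, so their gates can be interleaved so that no qudit is in two gates in the same layer; for instance, run the first route's lift while the second route's source waits one step. Each target receives its CNOT, all bus excitations are cleaned up, and the value at $h$ stays of the form $w+x_1\Delta_1+x_2\Delta_2$ throughout. Hence $\ket{\psi}$ at $h$ is restored exactly, by Theorem~\ref{thm:swap-free-nonlocal-cnot} applied route by route.

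\textbf{Non-realizability with qubits.} In one qubit per node, $h$ has local dimension $2$. An exact same-node overlap realization in one round would require $h$ to hold the resident qubit together with two independently recoverable routed bits, which are needed to keep the routes separable at $h$. Theorem~\ref{thm:single-qubit-impossibility} forces $2\geq 2^{3}$, a contradiction; already one routed bit would force $2\geq 4$. Equivalently, Corollary~\ref{cor:qubit-only-obstruction} with $q=1$ gives zero transit capacity.

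\textbf{Main obstacle.} The hard part is the negative direction: making precise what ``realizable in the exact same-node overlap model'' means for an arbitrary qubit protocol. One must argue that any scheme realizing the instance in one round must, at some time slice, encode the routed information at $h$ reversibly alongside $\ket{\psi}$. Only then does the isometry argument of Theorem~\ref{thm:single-qubit-impossibility} apply.

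I would handle this by taking it as part of the model definition, as the paper does in Sec.~\ref{sec:formalism}. A routed operation passing through $h$ is required to deposit a recoverable control label at $h$ while preserving resident data, with no colocated ancilla. Under that definition the obstruction follows directly from the dimension count. Without it, the claim would fail, since SWAP-based or serialized qubit protocols can implement the same gates.
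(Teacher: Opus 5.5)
Your proposal is correct and follows the paper's proof. It uses the same $m=2$ hotspot witness, one-round realizability at $d\geq 8$ with $K=2$ buses via Theorem~\ref{thm:congestion-round-complexity} and Corollary~\ref{cor:hotspot-clique-optimality}, and non-realizability via Theorem~\ref{thm:single-qubit-impossibility}; like the paper's, your negative direction depends on the exact same-node overlap model, an assumption you state explicitly where the paper leaves it implicit. One minor precision on the schedule: $h$ is an endpoint of every non-target gate of both routes, so a compatible schedule must serialize all eight $h$-touching gates rather than merely delay one lift, which is still permitted within a single logical round.
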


\begin{proof}
Consider two routed operations whose chosen paths intersect at a common intermediate node and whose resident local state at that node must be preserved. In the qudit architecture with local dimension $d \geq 2^{2+1}=8$,
Theorem~\ref{thm:congestion-round-complexity} together with Corollary~\ref{cor:hotspot-clique-optimality} shows that this hotspot instance is realizable in one logical round using $K=2$ buses. By Theorem~\ref{thm:single-qubit-impossibility}, one qubit at the same node cannot simultaneously preserve an arbitrary resident qubit state and carry even one independently recoverable routed bit, hence cannot realize the two-route overlap instance at all. Therefore the instance is realizable with one qudit per node but not with one qubit per node.
\end{proof}

\subsection{Comparison with the SWAP Baseline}

\begin{corollary}
Standard SWAP-based routing is not optimal at the logical-routing level considered here, because it incurs a larger leading transport-depth constant than the proposed protocol.
\end{corollary}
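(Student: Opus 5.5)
The plan is to compare the two cost expressions already established at the logical-routing level. On the SWAP side, Proposition~\ref{prop:swap-routing-improvement} gives a transport term of $3L$ CNOT-equivalent layers for a path of length $L$, since $L$ SWAPs are required and each costs three CNOTs. This is before any target-side logic is counted. On the routed side, Propositions~\ref{prop:gate-complexity} and~\ref{prop:protocol-depth} give $2L+1$ primitives and depth $2L+O(1)$ for the same nonlocal $CX_{(u,v)}$. Theorem~\ref{thm:swap-free-nonlocal-cnot} guarantees that this routed circuit implements the identical logical action and restores all intermediate nodes. Both methods therefore realize the same operation, and their costs may be compared directly.

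The key step is the inequality $2L+1 < 3L$, which holds for every $L\geq 2$. Asymptotically the ratio of leading constants is $2/3$. So for all nontrivial nonlocal instances, and in particular for every $(u,v)\notin E$, the SWAP baseline strictly exceeds an achievable cost. It therefore cannot be optimal within the logical-routing model, where each primitive is counted at unit cost under the compilation assumption of Sec.~\ref{sec:formalism}. For the $L=1$ case I will note that the operation is already local, so no routing is needed and the comparison is vacuous.

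To make ``not optimal'' precise, I will define optimality as attaining the minimum leading constant $c$ in a depth of the form $cL+O(1)$ over protocols in the model. Theorem~\ref{thm:lightcone} gives the lower bound $c\geq 1$. The routed protocol achieves $c\leq 2$, while SWAP transport has $c=3$, so SWAP sits strictly above an achievable value.

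The main obstacle is interpretive rather than technical. The comparison is meaningful only under the stated assumption that each routing primitive ($CBL$, $BCP$, $CU_R$) costs $O(1)$, normalized to one layer, the same as a CNOT. I will state explicitly that the corollary is relative to this unit-cost primitive model, as the paper emphasizes in its compilation-assumption discussion, and not a pulse-level claim. For layers with overlapping routes, I will optionally add that Corollary~\ref{cor:qubit-only-obstruction} gives a further separation in routing rounds, though this is not needed for the statement.
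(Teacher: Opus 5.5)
Your proposal is essentially the paper's proof. Both compare the naive $3L$ SWAP transport count ($L$ SWAPs at three CNOTs each) with the $2L+1$ routed primitives under the unit-cost compilation assumption of Sec.~\ref{sec:formalism}, concluding that the leading constant drops from $3$ to $2$, and your extra formalization (the light-cone bound $c\geq 1$ and the finite-$L$ inequality $2L+1<3L$) is harmless but unnecessary since the claim concerns only the leading constant, though the finite-$L$ inequality relies on the paper's choice of $L$ rather than $L-1$ SWAPs in the baseline.
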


\begin{proof}
Naive SWAP-based routing requires $L$ swaps to move a qubit across distance $L$, with each swap decomposing into 3 CNOT gates. Thus the transport cost is $3L$ CNOT-depth layers, compared to $2L+1$ logical routing primitives in the proposed method. Under the compilation model of Sec.~\ref{sec:formalism}, this establishes a smaller leading routing constant even though the precise device-level pulse costs remain hardware dependent.
\end{proof}

\section{Benchmark Details}\label{app:benchmark-details}

This appendix gathers the numerical material moved out of the main text. None of the benchmark content is removed; it is simply separated from the main line of the theoretical development.
Code for the experiments can be obtained from this Github Repository \cite{github2026}.
\subsection{Detailed Fan-In Sanity Checks}

To sanity-check the algebraic extension, we implemented small instances of the routed Boolean fan-in gate in both an ideal state-vector model (Cirq) and a multilevel open-system model (QuTiP). These computations are not intended as a broad hardware-performance claim; rather, they test whether explicit routed circuits behave as predicted by the formalism on representative small examples.

In the ideal Cirq model, we tested a three-control routed conjunction $g(x_1,x_2,x_3)=x_1x_2x_3$ with local dimension $d=16$ and target unitary $U=X$. An exhaustive truth-table check over all $2^3$ control strings and both target basis states produced worst-case infidelity $0$ and residual routing-subspace population $0$ after cleanup. A coherent superposition test on the same instance matched the predicted fan-in unitary to machine precision, with global infidelity $2.2\times 10^{-16}$. Thus, at the ideal unitary level, the routed construction reproduces exactly the Boolean fan-in action claimed in Theorem~\ref{thm:swap-free-multicontrol}.

In the ideal-limit QuTiP model, we tested a two-control routed conjunction with $d=8$ for path lengths $L=1$ and $L=2$ under $T_1 = T_\phi = \infty$ and zero leakage. In both cases the final-state fidelity to the ideal routed fan-in output was $1$, with zero residual routing-subspace population on both the payload nodes and the target. Hence the multilevel open-system model agrees exactly with the algebraic construction when noise is removed as shown in (Figure~\ref{fig:qutip-fanin-plots}).

\begin{figure*}[tbp]
\centering
\begin{minipage}[t]{0.48\linewidth}
\centering
\includegraphics[width=\linewidth]{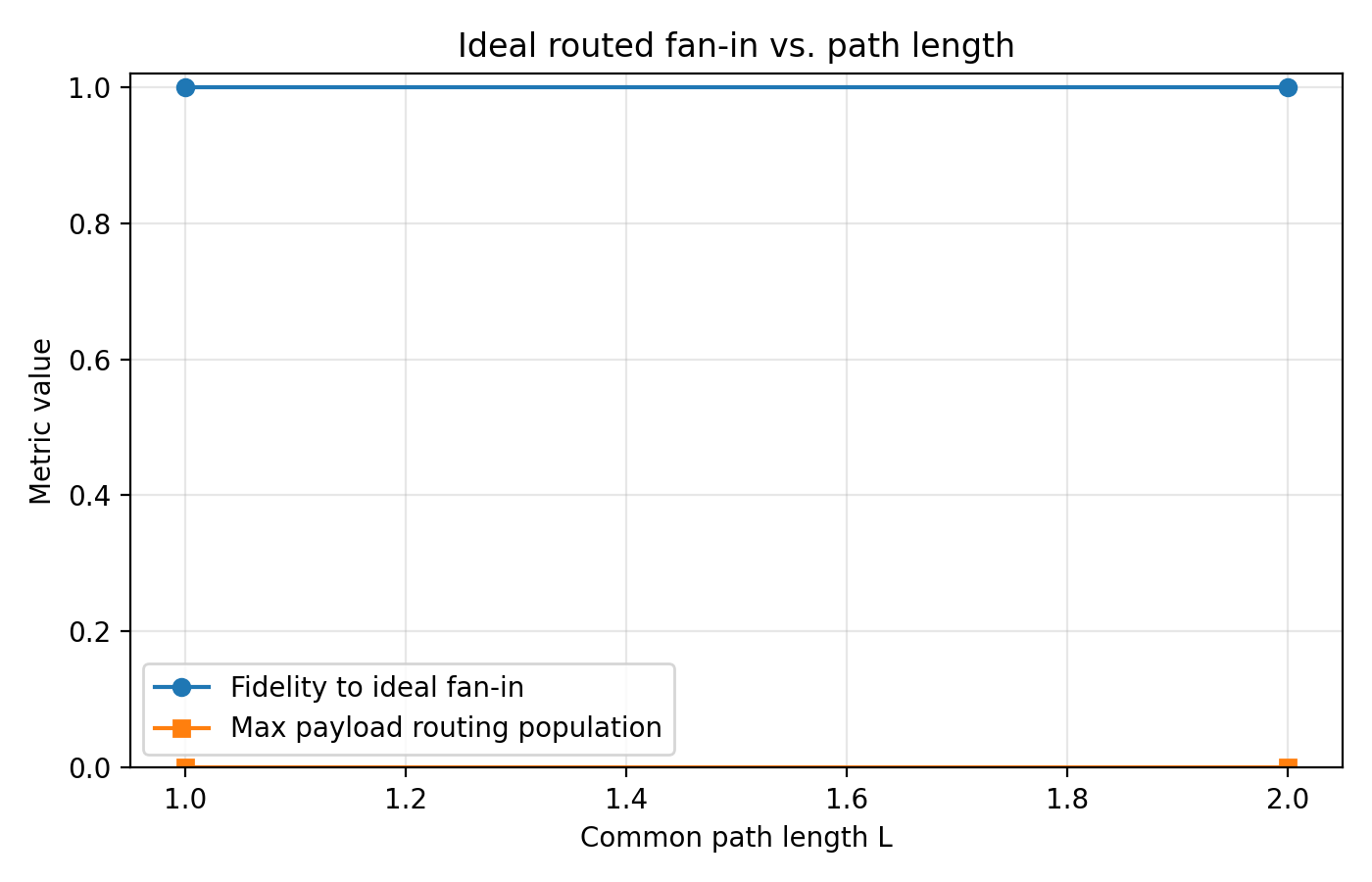}
\end{minipage}
\hfill
\begin{minipage}[t]{0.48\linewidth}
\centering
\includegraphics[width=\linewidth]{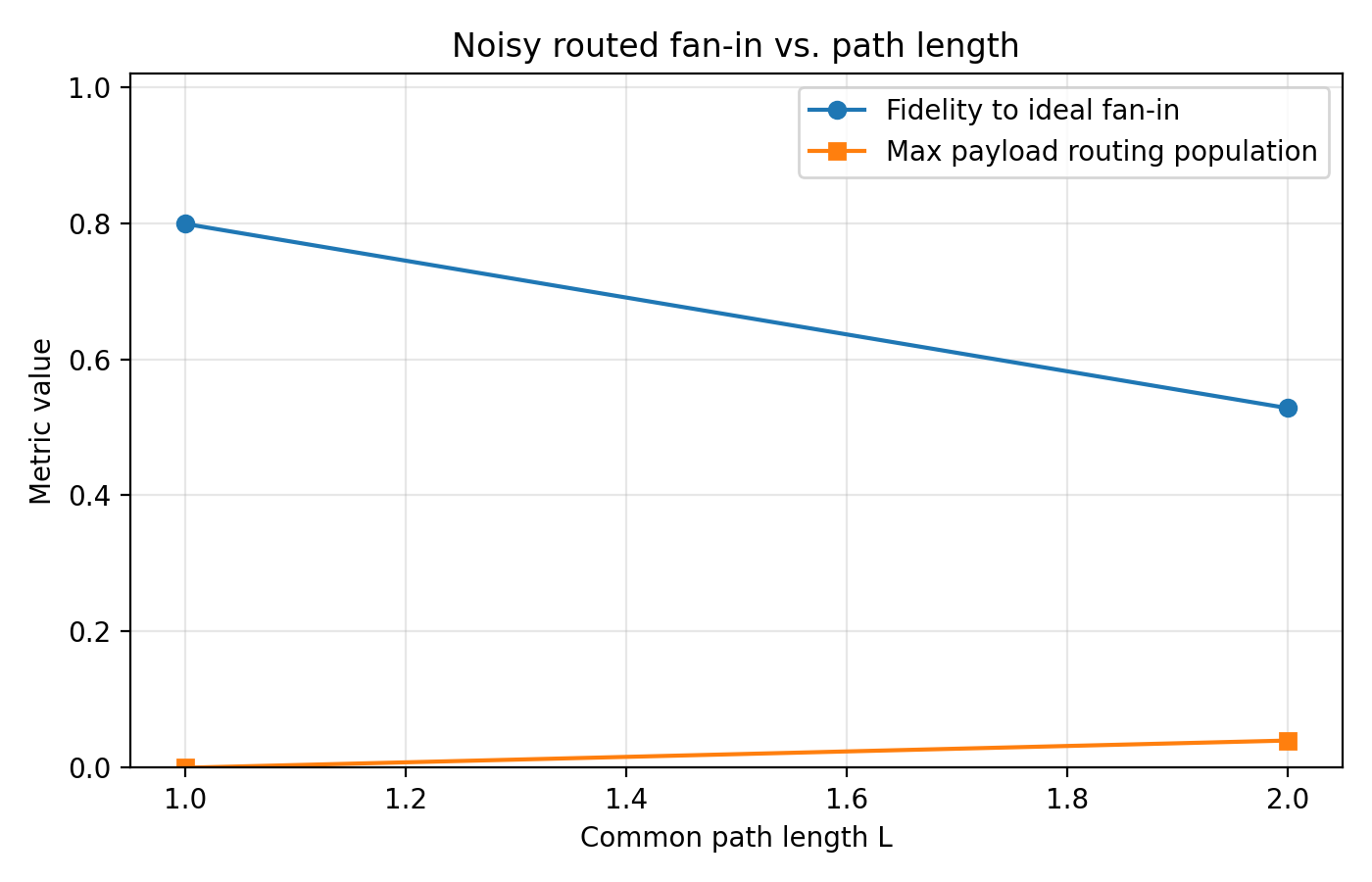}
\end{minipage}
\caption{QuTiP length-sweep plots for routed fan-in. Left: ideal-limit run with $T_1=T_\phi=\infty$ and zero leakage, showing exact agreement with the algebraic fan-in construction. Right: noisy run with level-dependent relaxation, dephasing, and leakage, showing the resulting fidelity loss as path length increases.}
\label{fig:qutip-fanin-plots}
\end{figure*}

As a first noisy stress test, we also ran a two-control QuTiP sweep with $d=8$, routing and target gate times both set to $1$, level-dependent relaxation times $(80,45,25,14)$, dephasing times $(120,80,50,30)$, leakage parameter $\epsilon=0.002$, and $50$ Monte Carlo trajectories. For routed conjunction, the fidelity to the ideal fan-in output was approximately $0.799$ at $L=1$ and $0.528$ at $L=2$. The maximal payload-node routing population was $0$ at $L=1$ and $0.04$ at $L=2$, while the target routing population was approximately $0.14$ at $L=1$ and $0.02$ at $L=2$. Thus the noisy simulation tracks the same routed fan-in circuit under this noise model, but the state fidelity degrades substantially once higher-level decay, dephasing, and leakage are introduced. These numbers should therefore be read as an illustrative small-instance stress test rather than as a converged hardware threshold study.

The same fan-in experiments also make the target-contention constant visible. In a concrete Cirq moment schedule for the three-control conjunction, the constructed routed depth scales as $2L + 5$ rather than the schematic $2L + 1$,
because the last-hop propagation gates into a shared target cannot all be executed in the same moment. This is still consistent with the asymptotic statement of Proposition~\ref{thm:depth-localization}, but it shows how the additive constant can depend on shared-target scheduling.

\begin{figure*}[tbp]
\centering
\includegraphics[width=0.72\linewidth]{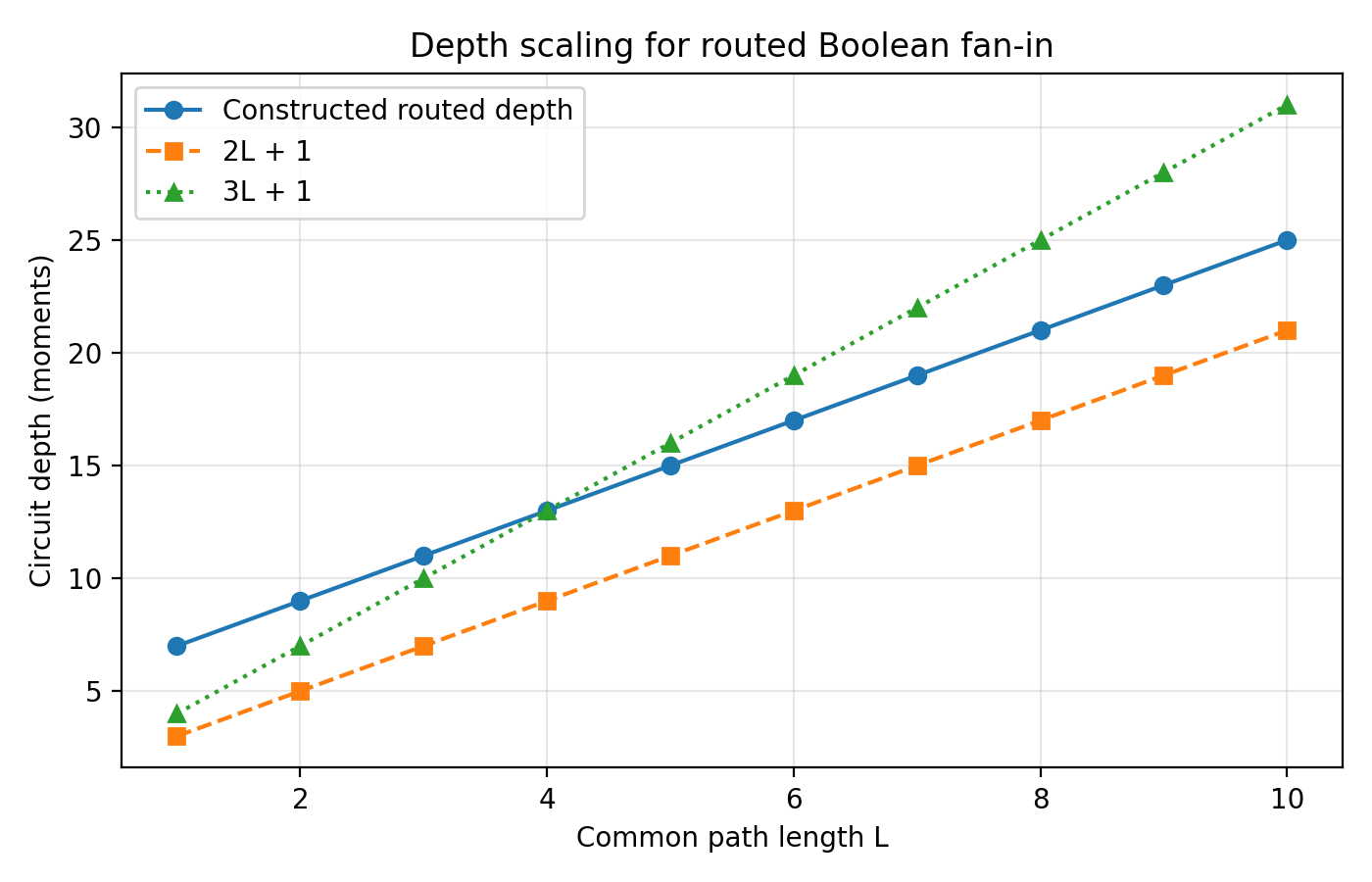}
\caption{Concrete Cirq depth scaling for the routed fan-in construction. The plotted circuit moments agree with the linear-in-$L$ behavior predicted by the theory, while also making visible the additive overhead created by shared-target scheduling contention.}
\label{fig:cirq-fanin-depth-data}
\end{figure*}

\subsection{Detailed Ideal Cirq Benchmarks}

To validate the ideal guarantees of the routing protocol on small instances, we simulated the architecture in Google Cirq using exact state-vector evolution. The verification suite implements the generalized $CBL$, $BCP$, and routing-controlled target primitives as explicit qudit unitaries and records the resulting benchmarking data in the accompanying output files(Figure~\ref{fig:cirq-fanin-depth-data}).

\paragraph{Exact vs.\ proxy regimes}
Exact Cirq validation is reported only for runs that satisfy the exact-evaluation and dimension requirements. Larger-$N$ sweeps and dimension-limited configurations are reported in metrics-only or proxy regimes, and are interpreted as structural scaling evidence rather than full-state validation.

\paragraph{State-vector feasibility}
Exact state-vector simulation scales as $d^N$ for $N$ qudits of local dimension $d$. For the small-scale benchmarks (mean $N \approx 6$), dimensions such as $d=8$ remain tractable, so exact fidelity and cleanup checks are meaningful. For the large-scale sweeps (e.g., $N \approx 30$), even modest dimensions like $d=8$ imply a Hilbert space of size $8^{30} = 2^{90}$, far beyond available classical memory. Those large-$N$ runs therefore use metrics-only or proxy evaluation by design; they test routing depth and congestion scaling rather than full-state correctness. This limitation is a classical simulation bound, not a failure of the routing protocol.

\paragraph{Algorithmic Correctness and Zero Crosstalk}
We first verified the single-path protocol of Theorem~\ref{thm:swap-free-nonlocal-cnot} on a chain of length $L = 4$ with local dimension $d = 8$. The simulated final state achieved a global infidelity of $4.44 \times 10^{-16}$ relative to the ideal non-local CNOT output, the intermediate nodes were exactly restored, and the residual routing population was numerically zero. This confirms that the protocol realizes the intended logical action while fully cleaning the routing subspace at the end of execution.

\begin{figure*}[htbp!]
    \centering
    \includegraphics[width=0.85\textwidth]{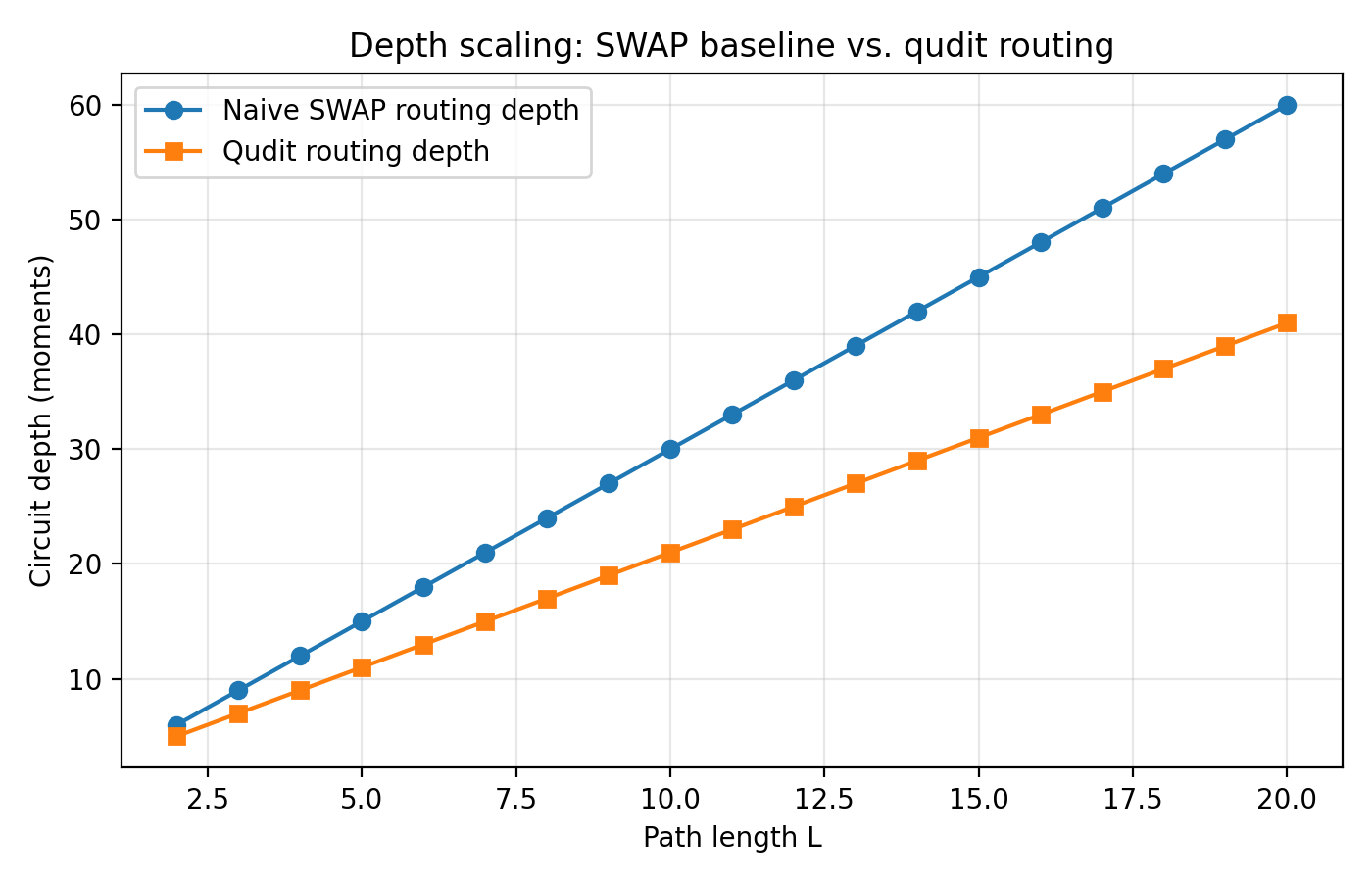}
    \caption{Compiled circuit depth for naive SWAP routing and the proposed qudit-based routing protocol as a function of path length $L$. The fitted curves recover the theoretical $3L$ versus $2L + O(1)$ scaling, with the additive constant in the qudit curve arising from the target interaction.}
    \label{fig:depth-scaling}
\end{figure*}

We next tested the overlapping-path setting predicted by Theorems~\ref{thm:bus-parallelism} and~\ref{thm:parallel-correctness} on a crossroads instance with $d = 8$, using $\Delta_1 = 2$ for the horizontal route and $\Delta_2 = 4$ for the vertical route. The overlap infidelity and final infidelity were both zero, the center-qudit level distribution matched the expected modulo-2 decomposition exactly, and the final routing population again vanished. In the ideal unitary model, this confirms that simultaneous routes sharing a physical qudit produce no logical crosstalk.

\paragraph{Depth Scaling Advantage}
We also compared the compiled circuit depth of the proposed routing protocol against a naive qubit SWAP-routing baseline over path lengths $2 \leq L \leq 20$ (Figure \ref{fig:depth-scaling}).

The fitted SWAP depth is $3L$ to numerical precision, while the qudit routing curve has slope $2$ with a constant additive offset; over the tested range the compiled depth is exactly $2L + 1$. These results numerically support Proposition~\ref{prop:protocol-depth} and Proposition~\ref{prop:swap-routing-improvement}: shifting the routing burden into higher levels preserves linear scaling while improving the leading depth coefficient relative to SWAP-based transport.

\subsection{Conflict-Graph and Compiler-Demand Validation}

We also supplemented the device-level simulations with two independent validation scripts aimed at the manuscript's congestion claims. First, an exact combinatorial validator exhaustively checked the congestion theorem on every labeled conflict graph up to five vertices. Across all $1099$ graphs and for $K=1,2,3$, the brute-force minimum number of routing rounds agreed exactly with the closed form $R_K(\Gamma)=\left\lceil \frac{\chi(\Gamma)}{K} \right\rceil$, yielding zero discrepancies. The explicit hotspot examples used in the main text likewise fall exactly on the predicted clique curve $R_K(K_m)=\lceil m/K \rceil$. The corresponding hotspot-round plot is shown below; over the plotted range the star-hotspot and line-overlap families induce the same clique conflict graphs, so the two panels are expected to coincide.
\begin{figure*}
    \centering
    \includegraphics[width=\textwidth]{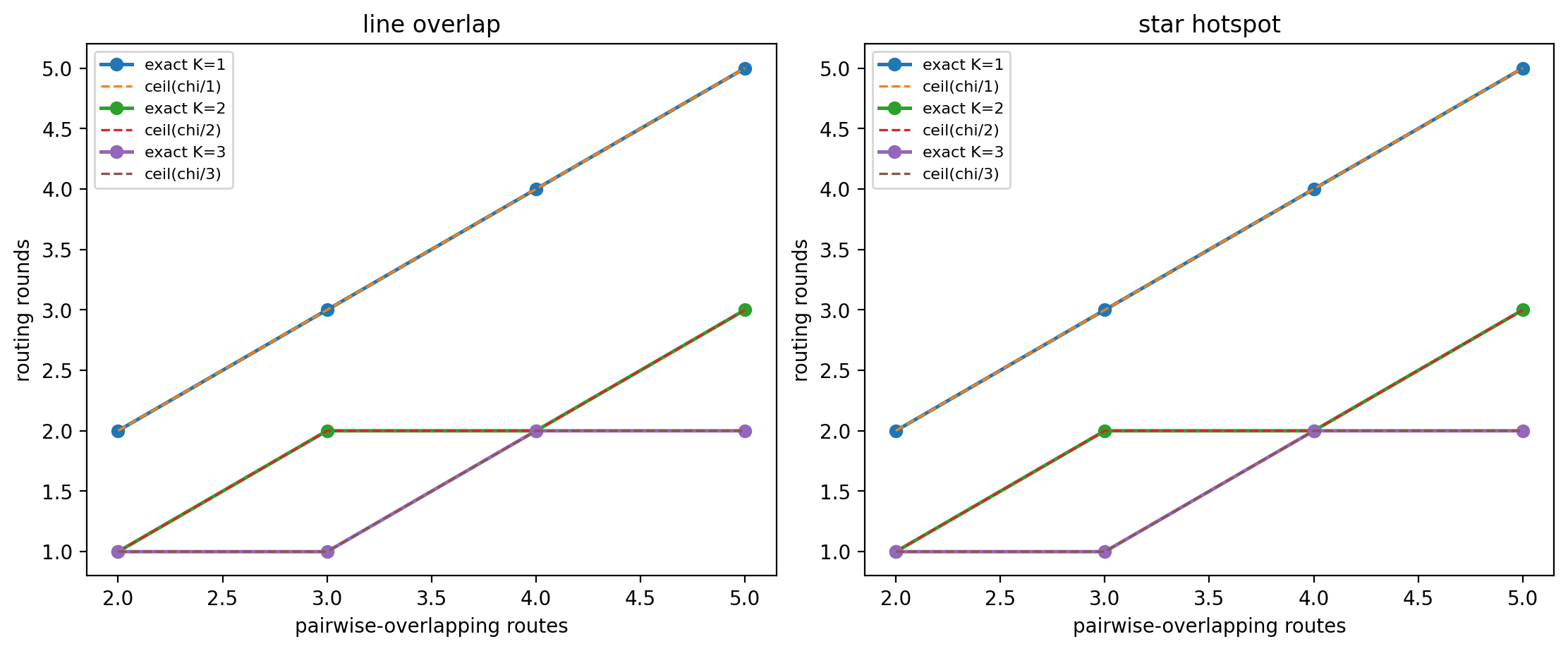}
\caption{
Validation of the congestion-round theorem on representative conflict-graph families. 
Left: line-overlap routes. Right: star-hotspot routes. 
For each instance, the minimum number of routing rounds obtained by exhaustive search matches the theoretical prediction $R_K(\Gamma) = \lceil \chi(\Gamma)/K \rceil$. 
Over the plotted range, both families induce identical clique conflict graphs, confirming that routing congestion is governed by graph coloring rather than geometric layout.
}    \label{fig:congestion-validation}
\end{figure*}
% \centering
% \includegraphics[width=\columnwidth]{images/hotspot_rounds_validation.png}
% \caption{\todo[inline]{Caption for this image}}
% \end{figure}

Second, a compiler-demand suite evaluated QFT, regular-graph QAOA, Grover-style amplitude amplification, and mirror-interaction workloads on line and grid topologies using only shortest-path routing with uniform tie sampling at a fixed~\ref{fig:congestion-validation}. On the representative $n=8$ cases summarized in Table~\ref{tab:compiler-route-benchmarks}, the routed-to-SWAP transport ratios range from $0.75$ to $0.89$. In the direct LNN overlap with recent SWAP-less QFT/QAOA benchmarks, the line-topology ratios are $0.778$ for QFT and $0.784$ for QAOA. The shared QFT and QAOA application layer therefore permits a direct application-family comparison, although our present benchmark remains a route-demand comparison rather than a full gate-library compilation comparison. Thus the conflict-graph picture is not confined to hand-drawn path examples: it persists on explicit nontrivial compiled workloads under a methodology that is deliberately constrained not to favor the routed model by custom path engineering.

\paragraph{Appendix LNN application summary.}
Unlike the main-text benchmark table, the compact summary below isolates only the linear-chain cases, which are the direct geometry overlap with recent swapless application studies and therefore the cleanest place to compare application families without repeating the full line-and-grid dataset.
\begin{table}[ht]
    \centering
    \scriptsize
 \resizebox{\columnwidth}{!}{%
    \begin{tabular}{l c c c}
    \hline
    Family & mean SWAP transport & mean routed transport & mean $R_2$ \\
    \hline
    QFT line (LNN) & 19.38 & 15.08 & 1.38 \\
    QAOA line (LNN) & 15.69 & 12.31 & 1.00 \\
    Amplitude line (LNN) & 7.42 & 6.06 & 1.00 \\
    Mirror line (LNN) & 48.00 & 36.00 & 2.00 \\
    \hline
    \end{tabular}}
    \caption{Comparison between SWAP-based routing and our work}
    \label{tab:swapcompare}
\end{table}
% \begin{center}
% \scriptsize
% \resizebox{\columnwidth}{!}{%
% \begin{tabular}{l c c c}
% \hline
% Family & mean SWAP transport & mean routed transport & mean $R_2$ \\
% \hline
% QFT line (LNN) & 19.38 & 15.08 & 1.38 \\
% QAOA line (LNN) & 15.69 & 12.31 & 1.00 \\
% Amplitude line (LNN) & 7.42 & 6.06 & 1.00 \\
% Mirror line (LNN) & 48.00 & 36.00 & 2.00 \\
% \hline
% \end{tabular}
% }
% \end{center}

\subsection{Detailed Noisy QuTiP Benchmarks}

To probe the physical limits of the routing idea beyond the ideal Cirq checks of Sec.~\ref{sec:numerical-benchmarking}, we simulated open-system dynamics in QuTiP using the Lindblad master equation. The purpose of these noisy simulations is not to demonstrate immediate superiority over SWAP-based routing, but to identify the hardware regime in which the architectural advantages of spectral routing become operational. The verification code supports level-dependent $T_1$ relaxation, level-resolved pure dephasing times $T_\phi$, and optional local cross-level leakage channels after routing primitives. For the larger Hilbert-space instances in our HPC workflow, the code switches from exact master-equation evolution to a Monte Carlo wave-function treatment using QuTiP's trajectory solver; for the final archived HPC runs, every such dataset was generated with seed $0$ and $500$ trajectories. The specific distance-sweep datasets reported below isolate the amplitude-damping contribution by setting $T_\phi = \infty$ and leakage $\epsilon = 0$, so the reported curves should be read as deliberately stripped-down $T_1$-limited benchmarks rather than a full noise budget. The corresponding distance-sweep plot is shown in Fugure~\ref{fig:distance-sweep}.
\begin{figure}[ht!]
    \centering
    \includegraphics[width=\columnwidth]{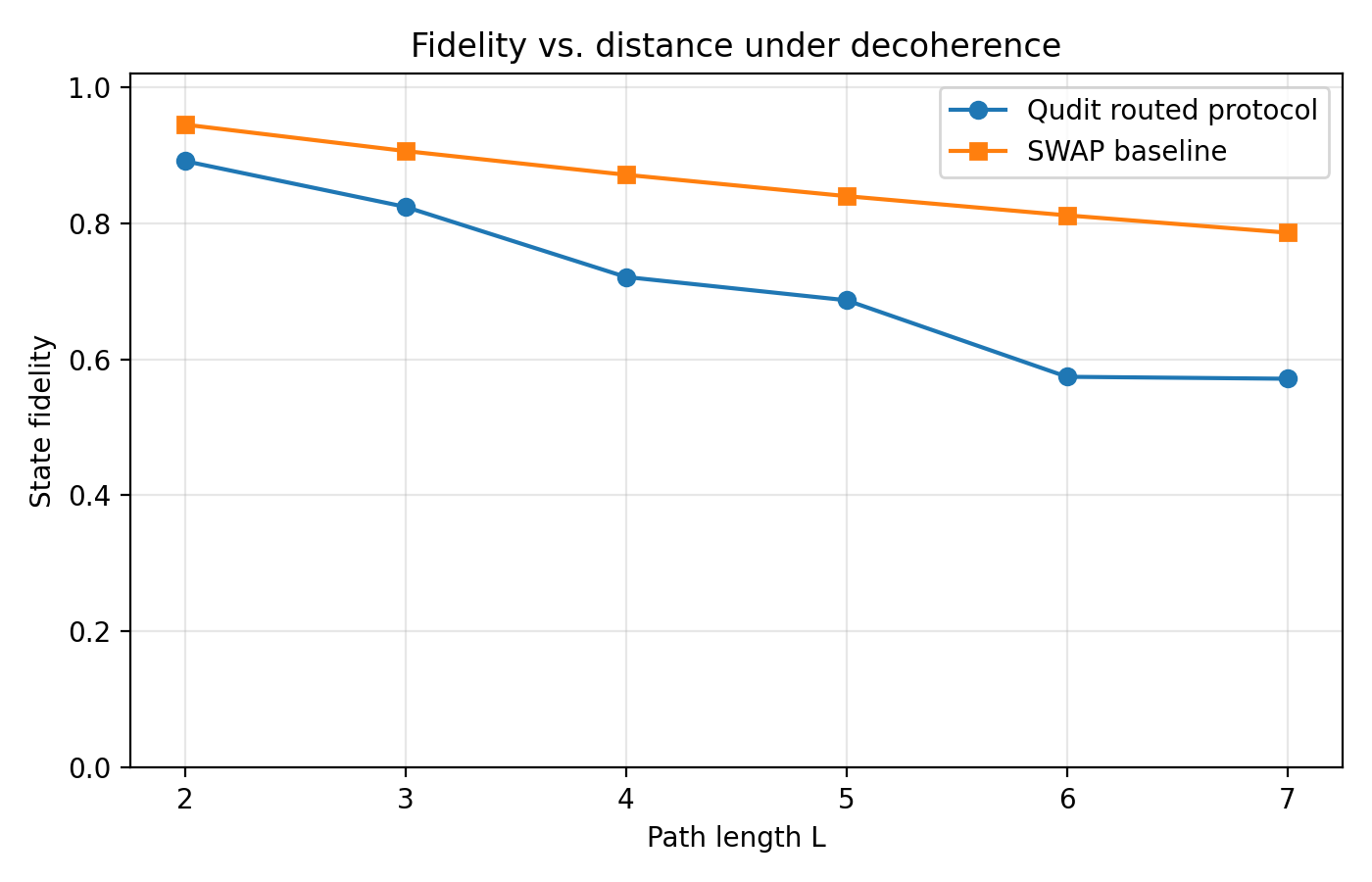}
    \caption{
State fidelity as a function of path length $L$ under decoherence, comparing the routed protocol (blue) with a SWAP-based baseline (orange). 
The routed implementation exhibits faster fidelity decay in this regime due to the involvement of higher-energy levels, illustrating that the architectural advantage of spectral routing depends on sufficient coherence and control quality of the additional qudit levels. 
The plot highlights the tradeoff between reduced routing depth and increased noise sensitivity.
}
    \label{fig:distance-sweep}
\end{figure}

The final seeded exploratory $d=5$ sweep reinforces the same conclusion. There, the routed fidelity decreases from $0.8909$ at $L=2$ to $0.8155$ at $L=3$, $0.7207$ at $L=4$, $0.6665$ at $L=5$, and $0.5797$ at $L=6$, while the SWAP baseline remains above it throughout. In particular, the late-distance rebound visible in an earlier unseeded exploratory artifact disappears in the final seeded $500$-trajectory run, so we no longer treat that rebound as evidence of a physical asymptote in the observed data. The final archived QuTiP results therefore do \emph{not} support a crossover at $L=3$, nor do they justify the stronger statement that the routed protocol dominates once paths become moderately long. What they do show is the relevant engineering bottleneck: if higher-level coherence is poor, the spectral-routing advantage remains mathematically correct but physically unrealized. The corresponding threshold scan is shown below.

\begin{figure}[tbp]
\centering
\includegraphics[width=\columnwidth]{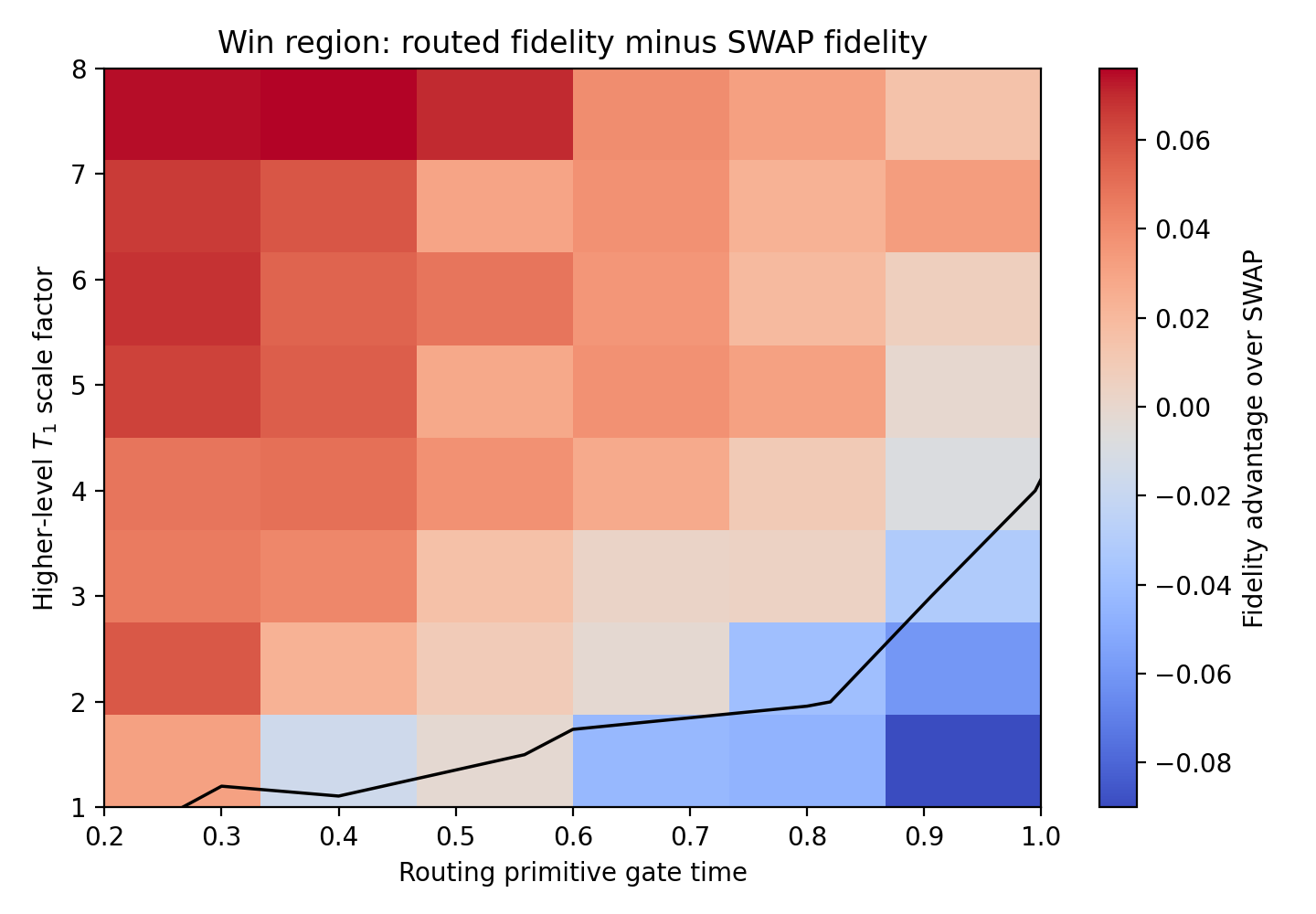}
\caption{
    Win region for the routed protocol under decoherence, shown as routed fidelity minus SWAP fidelity across a sweep of routing primitive duration and higher-level coherence parameters. 
    Positive values (red) indicate regimes where spectral routing outperforms the SWAP baseline, while negative values (blue) indicate SWAP superiority. 
    The plot identifies a threshold regime in which improved coherence and/or faster routed primitives enable a crossover to routed advantage, illustrating that the benefits of spectral routing are conditional on physical device parameters.
}
\label{fig:win-region}
\end{figure}

At the same time, the final threshold scan shows that the physical conclusion is conditional rather than negative in principle. Fixing $L=3$, $d=8$, seed $0$, and $500$ Monte Carlo trajectories, we scanned the routed primitive duration together with a multiplicative improvement applied only to the higher-level lifetimes $|2\rangle, |3\rangle, \dots$. The resulting win region identifies the hardware regime in which the routed protocol becomes fidelity-superior to SWAP on the sampled grid. In this coarse scan, the crossover appears near a factor of $5$ when the routed primitive duration is $1.0$; near a factor of $2$ when the routed primitive duration is $0.8$ or $0.6$; near a factor of $1.5$ when the routed primitive duration is $0.4$ or $0.3$; and already at the baseline higher-level lifetime point when the routed primitive duration is $0.2$. These values should be read as approximate crossover estimates from a single-seed Monte Carlo study with $T_\phi = \infty$ and zero leakage, not as sharp universal thresholds. Even so, the QuTiP study does more than say that the present hardware model is unfavorable: it maps the coherence-speed regime in which the theoretical depth advantage is likely to become a practical fidelity advantage.

\paragraph{Note on Asymptotic Relaxation}
If the routed circuit is pushed to sufficiently extended distances, its fidelity should asymptotically approach $0.5$ once $T_1$ amplitude damping fully dominates the total evolution time. In that regime, residual population in the routing subspace relaxes to the global ground state $|00\dots0\rangle$. Because the target logical state is Bell-like, the fully relaxed ground state still retains a $0.5$ overlap with the ideal target. This limiting value therefore reflects total thermodynamic relaxation, not a recovery of coherent routing. The final seeded HPC data shown here does not yet reach that asymptotic regime directly, but it is consistent with the same physical interpretation of sufficiently long-time relaxation.

\end{document}